\documentclass{article}


\usepackage{lipsum}
\usepackage{amsfonts,amsmath,amssymb,fullpage}
\usepackage{amsthm,hyperref}
\usepackage{graphicx}
\usepackage{epstopdf}
\usepackage[noend]{algpseudocode}
\usepackage{algorithm}

\newcommand{\cD}{\mathcal{D}}
\newcommand{\cL}{{\cal L}}
\newcommand{\cP}{\mathcal{P}}
\newcommand{\cR}{{\cal R}}
\newcommand{\cS}{\mathcal{S}}
\newcommand{\cA}{\mathcal{A}}
\newcommand{\cT}{{\cal T}}
\newcommand{\cN}{{\cal N}}
\newcommand{\R}{\mathbb R}

\newcommand{\F}{\mathbb F}

\newcommand{\eps}{\varepsilon}
\newcommand{\B}{\mathbf{B}}
\newcommand{\be}{{\bf e}}
\def\pdi{{\mathfrak m}}
\newcommand{\E}{\hbox{\bf E}}

\newtheorem{theorem}{Theorem}[section]

\newtheorem{lemma}[theorem]{Lemma}
\newtheorem{claim}[theorem]{Claim}
\newtheorem{corollary}[theorem]{Corollary}
\newtheorem{definition}[theorem]{Definition}

\newcommand{\Sec}[1]{\hyperref[sec:#1]{Section\,\ref*{sec:#1}}} 
\newcommand{\Obs}[1]{\hyperref[obs:#1]{Observation~\ref*{obs:#1}}} 
\newcommand{\Eqn}[1]{\hyperref[eq:#1]{Eq. (\ref*{eq:#1})}} 
\newcommand{\Fig}[1]{\hyperref[fig:#1]{Fig.\,\ref*{fig:#1}}} 
\newcommand{\Tab}[1]{\hyperref[tab:#1]{Table\,\ref*{tab:#1}}} 
\newcommand{\Thm}[1]{\hyperref[thm:#1]{Theorem\,\ref*{thm:#1}}} 
\newcommand{\Fact}[1]{\hyperref[fact:#1]{Fact\,\ref*{fact:#1}}} 
\newcommand{\Lem}[1]{\hyperref[lem:#1]{Lemma\,\ref*{lem:#1}}} 
\newcommand{\Prop}[1]{\hyperref[prop:#1]{Prop.~\ref*{prop:#1}}} 
\newcommand{\Cor}[1]{\hyperref[cor:#1]{Corollary~\ref*{cor:#1}}} 
\newcommand{\Conj}[1]{\hyperref[conj:#1]{Conjecture~\ref*{conj:#1}}} 
\newcommand{\Def}[1]{\hyperref[def:#1]{Definition~\ref*{def:#1}}} 
\newcommand{\Alg}[1]{\hyperref[alg:#1]{Algorithm~\ref*{alg:#1}}} 
\newcommand{\Pro}[1]{\hyperref[pro:#1]{Procedure~\ref*{pro:#1}}} 
\newcommand{\Ex}[1]{\hyperref[ex:#1]{Ex.~\ref*{ex:#1}}} 
\newcommand{\Clm}[1]{\hyperref[clm:#1]{Claim~\ref*{clm:#1}}} 
\newcommand{\Stp}[1]{\hyperref[step:#1]{Step~\ref*{step:#1}}}
\newcommand{\Ch}[1]{\hyperref[chap:#1]{Chapter~\ref*{chap:#1}}}
\ifpdf
  \DeclareGraphicsExtensions{.eps,.pdf,.png,.jpg}
\else
  \DeclareGraphicsExtensions{.eps}
\fi

\title{Erasure-Resilient Property Testing\thanks{A preliminary version of this work will appear in the Proceedings of the International Colloquium on Automata, Languages and Programming 2016~\cite{DRTV16}.}} 

\author{
Kashyap Dixit\thanks{Department of Computer Science and Engineering, The Pennsylvania State University. These three authors were supported in part by NSF award CCF-1320814, NSF CAREER award CCF-0845701, NSF award CCF-1422975, Pennsylvania State University College of Engineering Fellowship and Pennsylvania State University Graduate Fellowship.
    ({\tt kashyap@cse.psu.edu, sofya@cse.psu.edu, nzm154@psu.edu}).}
\and Sofya Raskhodnikova\footnotemark[2]
\and Abhradeep Thakurta\thanks{Previously at Yahoo Labs. {\tt guhathakurta.abhradeep@gmail.com}}
\and Nithin Varma\footnotemark[2]
}
\date{}



\usepackage{amsopn}





\begin{document}

\maketitle

\begin{abstract}
Property testers form an important class of sublinear algorithms. In the standard property testing model, an algorithm accesses the input function $f :\cD \mapsto \cR$ via an oracle.
With very few exceptions, all property testers studied in this model rely on the oracle to provide function values at all queried domain points. However, in many realistic
situations, the oracle may be unable to reveal the function values at some domain points due to privacy concerns, or when some of the values
get erased by mistake or by an adversary. The testers do not learn anything useful about the property by querying those {\em erased} points. Moreover, the knowledge of a tester may enable an adversary to
erase some of the values so as to increase the query complexity of the tester arbitrarily or, in some cases, make the tester entirely useless.

In this work, we initiate a study of property testers that are resilient to the presence of {\em adversarially erased} function values. An $\alpha$-erasure-resilient $\eps$-tester is given parameters $\alpha,\eps\in (0,1)$, along with oracle access to a function $f$ such that at most an $\alpha$ fraction of function values have been erased. The tester does not know whether a value is erased until it queries the corresponding domain point. The tester has to accept with high probability if there is a way to assign values to the erased points such that the resulting function satisfies the desired property $\cP$. It has to reject with high probability if, for every assignment of values to the erased points, the resulting function has to be changed in at least an $\eps$-fraction of the non-erased domain points to satisfy $\cP$.

We design erasure-resilient property testers for a large class of properties. For some properties, it is possible to obtain erasure-resilient testers by simply using standard testers as a black box. However, there are more challenging properties for which all known testers rely on querying a specific point. If this point is erased, all these testers break. We give efficient erasure-resilient testers for several important classes of such properties of functions including monotonicity, the Lipschitz property, and convexity. Finally, we show a separation between the standard and erasure-resilient testing. Specifically, we describe a property that can be $\eps$-tested with $O(1/\eps)$ queries in the standard model, whereas testing it in the erasure-resilient model requires number of queries polynomial in the input size.
\end{abstract}



\section{Introduction}
\label{sec:intro}
In this paper, we revisit the question of how sublinear-time algorithms access their input.
With very few exceptions, all algorithms studied in the literature on sublinear-time algorithms have {\em oracle} access to their input\footnote{Sublinear-time algorithms with various distributional assumptions on the positions of the input the algorithms access have been investigated, for example, in~\cite{GGR98,BBBY12,GoldreichR15}. There is also a line of work, initiated by~\cite{BatuFRSW13}, that studies sublinear algorithms that access distributions, as opposed to fixed datasets. In this work, we focus on fixed datasets.}.
However, in many applications, this assumption is unrealistic.
The oracle may be unable to reveal parts of the data due to privacy concerns, or when some of the values
get erased by mistake or by an adversary.
Motivated by these scenarios, we propose to study sublinear algorithms that work with partially erased data.

Formally, we view a dataset as a function over some discrete domain $\cD$, such as $[n]=\{1,\dots,n\}$ or $[n]^d$. For example, the classical problem of testing whether a list of $n$ numbers is sorted in nondecreasing order can be viewed as a problem of testing whether a function $f:[n]\to\mathbb{R}$ is monotone (nondecreasing). Given a parameter $\alpha\in(0,1)$, we say that a function is {\em $\alpha$-erased} if at most an $\alpha$ fraction of its domain points are marked as ``erased'' or protected (that is, an algorithm is denied access to these values).
An algorithm that takes an $\alpha$-erased function as its input does not know which values are erased until it queries the corresponding domain points. For each queried point $x$, the algorithm either learns $f(x)$ or, if $x$ is an erased point, gets back a special symbol $\perp$. We study algorithms that work in the presence of {\em adversarial erasures}. In other words, the query complexity of an algorithm is the number of queries it makes in the {\em worst case} over all $\alpha$-erased input functions.

In this work, we initiate a systematic study of property testers that are resilient to the presence of adversarial erasures. An $\alpha$-erasure-resilient $\eps$-tester is given parameters $\alpha,\eps\in (0,1)$, along with oracle access to an $\alpha$-erased function $f$. The tester has to accept with high probability if $f$ can be restored to a function on the whole domain that satisfies the desired property $\cP$ and reject with high probability if every restoration of $f$ is $\eps$-far from $\cP$ on the nonerased part of the domain. This generalizes the standard property testing model of Rubinfeld and Sudan~\cite{RubinfeldSudan96} and Goldreich, Goldwasser and Ron~\cite{GGR98} . 

\subparagraph*{Generic transformations} Our first goal is to understand which existing algorithms in the standard property testing model can be easily made erasure-resilient.
We show (in~\Sec{gen-trans}) how to obtain erasure-resilient testers for some properties by using standard testers for these properties as black box. Our transformations apply to testers that query uniformly and independently sampled points, with some additional restrictions. More specifically, our transformations work for uniform {\em proximity oblivious testers} (POTs)~\cite{GoldreichR11a} and uniform testers for {\em extendable properties}. As a result, we are able to obtain erasure-resilient testers for 
being a low-degree polynomial~\cite{RubinfeldSudan96}, monotonicity over general poset domains~\cite{FLNRRS02}, convexity of black and white images~\cite{BermanMR15b}, and Boolean functions over $[n]$ having $k$ runs of 0s and 1s. 

\subparagraph*{Erasure-resilient testers for more challenging properties}
One challenge in designing erasure-resilient testers by using existing algorithms in the standard model as a starting point is that many existing algorithms are more likely to query certain points in the domain. Therefore, if these points are erased, the algorithms break. Specifically, the optimal algorithms for testing whether a list of numbers is sorted (and there are at least three different algorithms for this problem~\cite{EKK+00,BGJRW12,CS13a}) have this feature. Moreover, it is known that an algorithm that makes uniformly random queries is far from optimal: it needs $\Theta(\sqrt{n})$ queries instead of $\Theta(\log n)$ for $n$-element lists \cite{EKK+00,Fis04}.

There is a number of well studied properties for which all known optimal algorithms heavily rely on querying specific points. Most prominent examples include monotonicity, the Lipschitz properties and, more generally, bounded-derivative properties of real-valued functions on $[n]$ and $[n]^d$, as well as convexity of real-valued functions on $[n]$. It is especially challenging to deal with real-valued functions in our model, because there are many possibilities for erased values. We give efficient erasure-resilient testers for all aforementioned properties of real-valued functions in Sections~\ref{sec:mono-line}-\ref{sec:conv}.

\subparagraph*{Relationships to other models} We explore the relationship of erasure-resilient testing with other testing models in~\Sec{app-oth-test}. We provide (in~\Sec{separation}) a separation between our erasure-resilient model and the standard model. Specifically, we prove the existence of a property that can be tested with $O(1/\eps)$ queries in the standard model, but requires polynomially many queries in the length of the input in the erasure-resilient model. This result builds on the ideas of Fischer and Fortnow~\cite{FF06} that separate tolerant testing, defined by Parnas, Ron and Rubinfeld~\cite{PRR06}, from standard testing.

A tolerant tester for a property $\cal P$, given two parameters $\eps_1,\eps_2\in(0,1)$, where $\eps_1<\eps_2$, is required to, with probability at least $2/3$, accept inputs that are $\eps_1$-close to ${\cal P}$ and reject inputs that are $\eps_2$-far from $\cP$. Intuitively, the relationship of our erasure-resilient model to tolerant testing is akin to the relationship between error-correcting codes that withstand erasures and error-correcting codes that withstand general errors.
As shown in~\cite{PRR06}, tolerant testing is equivalent to approximating the distance of a given input to the desired property.
In~\Sec{dist-appr}, we prove that the existence of tolerant testers implies the existence of erasure-resilient testers with related parameters. Using this implication and existing tolerant testers for sortedness~\cite{SaksS10}, monotonicity~\cite{FattalR10}, and convexity~\cite{FatR}, we get erasure-resilient testers for these properties as corollaries. However, we obtain erasure-resilient testers for these properties with much better parameters in the technical sections of this article. We conjecture that erasure-resilient testing can be separated from tolerant testing in the same strong sense as in our separation of standard testing from erasure-resilient testing.

\subsection{The Erasure-Resilient Testing Model}\label{sec:model}
We formalize our erasure-resilient model for the case of property testing. Erasure-resilient versions of other computational models, such as tolerant testing, can be defined analogously.

\begin{definition}[$\alpha$-erased function]
Let $\cD$ be a domain, $\cR$ be a range, and $\alpha\in(0,1)$. A function\footnote{Any object can be viewed as a function. E.g., an $n$-element array of real numbers can be viewed as a function $f:[n]\to\R$, an image---as a map from the plane to the set of colors, and a graph---as a map from the set of vertex pairs to $\{0,1\}$.} $f :\cD \mapsto \cR\cup\{\perp\}$ is {\em $\alpha$-erased} if  $f$ evaluates to $\perp$ on at most an $\alpha$ fraction of domain points. The points on which $f$ evaluates to $\perp$ are called {\em erased}. The set of remaining (nonerased) points is denoted by $\cN$.
\end{definition}
A function $f$ is $\eps$-far from a property (set) $\cP$ if it needs to be changed on at least an $\eps$ fraction of domain points to obtain a function in $\cP$.
A function $f':\cD\to \cR$ that differs from a function $f$ only on points erased in $f$ is called a {\em restoration} of~$f$.

\begin{definition}[Erasure-resilient tester]\label{def:erasure-resilient-tester}
An {\em $\alpha$-erasure-resilient $\eps$-tester} of property $\cP$ gets input parameters $\alpha,\eps\in(0,1)$ and oracle access to an $\alpha$-erased function $f:\cD\to \cR\cup\{\perp\}$. It outputs, with probability\footnote{In general, the error probability can be any $\delta\in(0,1)$. For simplicity, we formulate our model and the results with $\delta=1/3$. To get results for general $\delta$, by standard arguments, it is enough to multiply the complexity of an algorithm by $\log 1/\delta$.} at least~2/3,

\begin{itemize}
\item {\bf accept} if there is a restoration $f':\cD\to \cR$ of $f$ that satisfies $\cP$;
\item {\bf reject} if every restoration $f':\cD\to \cR$ of $f$
needs to be changed on at least an $\eps$ fraction of $\cN$, the nonerased portion of $f$'s domain,  to satisfy $\cP$
(that is, $f'$ is $\eps\cdot \frac{|\cN|}{|\cD|}$-far from $\cP$).
\end{itemize}
The tester has {\em 1-sided error} if the first item holds with probability 1.
\end{definition}

Let $f_{|\cN}$ denote the function $f$ restricted to the set $\cN$ of nonerased points. We show (in~\Sec{gen-trans}) that if property ${\cal P}$ is extendable, we can define a property $\cP_\cN$ such that the erasure-resilient tester is simply required to distinguish the case that $f_{|\cN}$ satisfies $\cP_\cN$ from the case that it is $\eps$-far from satisfying it. For example, if $\cP$ is monotonicity of functions on a partially-ordered domain $\cD$ then $\cP_\cN$ is monotonicity of functions on $\cN$. (Most of the properties we consider in this article, including monotonicity, Lipschitz properties and convexity, are extendable properties.) Note that, even for the case of extendable properties, our problem is different from the standard property testing problem because the tester does not know in advance which points are erased.

\subsection{Properties We Consider}\label{sec:prop-studied}
Next we define properties of real-valued functions considered in this article and summarize previous work on testing them.
Most properties of real-valued functions studied in the property testing framework are for functions over
the {\em line} domain $[n]$ and, more generally, the {\em hypergrid} domain $[n]^d$.

\begin{definition}[Hypergrid, line]\label{def:hypergrid}
Given $n,d \in \mathbb{N}$, the hypergrid of size $n$ and dimension $d$ is
the set $[n]^d$ associated with an order relation $\preceq$, such that
$x \preceq y$ for all $x,y \in [n]^d$ iff $x_i \le y_i$ for all $i \in [d]$, where $x_i$ (respectively $y_i$) denotes
the $i^{\text{th}}$ coordinate of $x$ (respectively, $y$).
The special case $[n]$ is called a {\em line}.
\end{definition}
\noindent We consider domains that are subsets of $[n]^d$ to be able to handle arbitrary erasures on $[n]^d$.
\subparagraph{Monotonicity}
Monotonicity of functions, first studied in the context of property testing in~\cite{GGLRS00}, is one of the most widely investigated properties in this model~\cite{EKK+00, DGLRRS99,LR01,FLNRRS02,AC04,Fis04,HK04,BRW05,PRR06,ACCL07,BGJRW12,BCG+10,BBM11,CS13a,CS13c,BlaisRY14,CDJS15}.
A function $f:\cD \mapsto \R$, defined on a
partially ordered domain $\cD$ with order $\preceq$, is monotone
if $x \preceq y$ implies $f(x) \le f(y)$ for all $x,y \in \cD$.
The query complexity of testing monotonicity of functions $f:[n]\mapsto \R$ is $\Theta(\log n/\eps)$~\cite{EKK+00,Fis04}; for functions $f:[n]^d\mapsto\R$, it is $\Theta(d\log n/\eps)$~\cite{CS13a,CS13c}, and for functions over arbitrary partially ordered domains $\cD$, it is $O(\sqrt{|\cD|/\eps})$ \cite{FLNRRS02}.

\subparagraph{Lipschitz properties} Lipschitz continuity is defined for functions between arbitrary metric spaces, but was specifically studied for real-valued functions on hypergrid domains~\cite{JhaR13, AJMS12, CS13a, DiJh+13,BlaisRY14,CDJS15} because of applications to privacy~\cite{JhaR13,DiJh+13}.
For $\cD \subseteq [n]^d$ and $c\in\R$,
a function $f:\cD\mapsto \R$ is $c$-Lipschitz  if $|f(x)-f(y)|\le c\cdot ||x-y||_1$ for all $x,y\in \cD$, where
$||x-y||_1$ is the $L_1$ distance between $x$ and $y$. More generally, $f$ is $(\alpha,\beta)$-Lipschitz, where $\alpha < \beta$, if $\alpha\cdot ||x-y||_1 \leq |f(x)-f(y)|\le \beta\cdot ||x-y||_1$
 for all $x,y\in [n]^d$.
All $(\alpha,\beta)$-Lipschitz properties can be tested with $O(d\log n/\eps)$ queries~\cite{CS13a}.

\subparagraph{Bounded derivative properties (BDPs)}
The class of BDPs, defined by Chakrabartyet al.~\cite{CDJS15}, is a natural generalization of monotonicity and the $(\alpha,\beta)$-Lipschitz properties.
An ordered set $\B$ of $2d$ functions
	$l_1, u_1, l_2, u_2, \ldots, l_d, u_d: [n-1] \mapsto \R\cup \{\pm\infty\}$
	is a \emph{bounding family} if for all $r \in [d]$ and $y \in [n-1]$, $l_r(y) < u_r(y)$.
	Let $\B$ be a bounding family of functions and let $\be_r$ be the unit vector along dimension $r$.
	The property $\cP(\B)$ of being \emph{$\B$-derivative bounded} is the set of functions
	$f:[n]^d \mapsto \R$ such that $l_r(x_r) \leq  f(x+\be_r) - f(x) \leq u_r(x_r)$ for all $r \in [d]$ and $x \in [n]^d$ with $x_r\neq n$, where
	$x_r$ is the $r^{\text{th}}$ coordinate of $x$.	
The class of BDPs includes monotonicity and the $c$-Lipschitz property.
The bounding family for monotonicity is obtained by setting $l_r(y) = 0$ and $u_r(y) = \infty$ for all $r \in [d]$, and for the $c$-Lipschitz property, by setting $l_r(y) = -c$ and $u_r(y) = c$ for all $r \in [d]$. In general, different bounding families allow a function to be monotone in one dimension, $c$-Lipschitz in another dimension and so on.
Chakrabarty et al.~\cite{CDJS15} showed that the complexity of testing BDPs of functions $f:[n]^d\mapsto \R$ is $\Theta(d\log n/\eps)$.
A bounding family $\B = \{l_1,u_1,\ldots,l_d,u_d\}$ defines a quasi-metric $$\pdi_\B(x,y) :=  \sum_{r:x_r > y_r} \sum_{t = y_r}^{x_r-1}\! u_r(t)  -  \sum_{r:x_r < y_r}\sum_{t = x_r}^{y_r-1}\!l_r(t)$$ over points $x,y \in [n]^d$.
In~\cite{CDJS15}, the authors observe that for $\cD=[n]^d$, a function
$f:\cD \mapsto \R \in \cP(\B)$, the bounded derivative property defined by $\B$, iff $~\forall x, y \in \cD$, $f(x) - f(y) \leq \pdi_\B(x,y)$.
We use this characterization as our definition of BDPs for functions over arbitrary $\cD\subseteq [n]^d$.

\subparagraph{Convexity of functions}
A function $f:\cD\mapsto \R$ is convex if $f(t{\bf x}+(1-t){\bf y})\le tf({\bf x})+(1-t)f({\bf y})$ for all ${\bf x},{\bf y} \in \cD$ and $t\in [0,1]$. If $\cD\subseteq[n]$, equivalently, $f$ is convex if $\frac{f(y)-f(x)}{y-x}\leq\frac{f(z)-f(y)}{z-y}$ for all $x<y<z$.
Parnas, Ron and Rubinfeld~\cite{PRR03} gave a convexity
tester for functions $f:[n]\mapsto \R$ with query complexity $O(\log n/\eps)$. Blais, Raskhodnikova and Yaroslavtsev~\cite{BlaisRY14} gave an $\Omega(\log n)$ bound for nonadaptive testers for this problem.

\subsection{Our Results}
We give efficient erasure-resilient testers for all properties discussed in Section~\ref{sec:prop-studied}. All our testers have optimal complexity for the case with no erasures and have an additional benefit of not relying too heavily on the value of the input function at any specific point.

\subparagraph{Monotonicity on the line} We start by giving (in~\Sec{mono-line}) an erasure-resilient  monotonicity tester on~$[n]$.
\begin{theorem}[Monotonicity tester on the line]\label{thm:line-tester}
There exists a one-sided error $\alpha$-erasure-resilient $\eps$-tester for monotonicity of real-valued functions on the line $[n]$ that works for all $\alpha,\eps \in (0,1),$ with query complexity
$O\left(\frac{\log n}{\eps(1-\alpha)}\right).$
\end{theorem}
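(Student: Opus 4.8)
The plan is to reduce to ordinary monotonicity testing on the nonerased portion of the domain, and then make the classical $O(\log n/\eps)$-query binary-search tester robust to erased query points. Since monotonicity is extendable, the discussion following Definition~\ref{def:erasure-resilient-tester} gives a property $\cP_\cN$ --- here, monotonicity of functions on the chain $\cN$ --- such that $f$ has a monotone restoration iff $f_{|\cN}$ is monotone, and every restoration is $\eps\cdot\frac{|\cN|}{n}$-far from monotone iff $f_{|\cN}$ is $\eps$-far from monotone on $\cN$; note that $|\cN|\ge (1-\alpha)n$. So the task is to distinguish ``$f_{|\cN}$ monotone'' from ``$f_{|\cN}$ is $\eps$-far from monotone'', under the restriction that the tester may query positions of $[n]$ but does not know $\cN$ in advance.

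The tester draws $\Theta\!\big(1/(\eps(1-\alpha))\big)$ uniform points of $[n]$, discarding erased ones; since a $(1-\alpha)$ fraction of points are nonerased, with probability at least $5/6$ this yields a set $S$ of $\Theta(1/\eps)$ points of $\cN$. For each $i\in S$ with value $v=f(i)$, the tester performs an \emph{erasure-resilient binary search} for the position $i$: it maintains an interval $[a,b]\ni i$ (initially $[1,n]$) with midpoint $c$; if $c=i$ it stops; if $c$ is nonerased, it checks that $f(c)\le v$ when $c<i$ and $f(c)\ge v$ when $c>i$ --- rejecting if this fails, since it exhibits a nonerased violating pair --- and recurses into the half of $[a,b]$ containing $i$; and if $c$ is erased, the tester additionally samples $\Theta\!\big(1/(1-\alpha)\big)$ uniform points from the half of $[a,b]$ \emph{not} containing $i$, runs the same consistency check against $(i,v)$ on each of them, and then recurses into the half containing $i$. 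The recursion depth is $O(\log n)$ and each level costs $O(1/(1-\alpha))$ queries, so the total query complexity is $O\!\big(\log n/(\eps(1-\alpha))\big)$, as claimed.

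One-sided error is immediate: if $f$ has a monotone restoration $f'$, then for every $i\in S$ we have $v=f'(i)$ and every nonerased queried point $c$ satisfies $f(c)=f'(c)$, so monotonicity of $f'$ makes all consistency checks pass and the tester accepts. For soundness one follows the classical longest-increasing-subsequence argument: define a suitable notion of a point $i\in\cN$ being ``resiliently searchable'', show that the resiliently searchable points form an increasing subsequence of $f_{|\cN}$ --- so that, when $f_{|\cN}$ is $\eps$-far from monotone, at least $\eps|\cN|$ points fail to be resiliently searchable --- and show that each sampled point that is not resiliently searchable causes a rejection with at least constant probability, whence $\Theta(1/\eps)$ samples reject with probability at least $2/3$.

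The main obstacle is exactly this soundness step, because erasures can conceal violations from a naive search: if $i$ has a violating partner $j$, their two dyadic search paths may split only at an erased midpoint, or $j$ may lie in a sibling interval that is almost entirely erased, so the random probes miss it. The resolution exploits the global budget of at most $\alpha n$ erasures: at any fixed scale only a bounded fraction of the sibling intervals can be ``mostly erased'', and in a sibling that is not mostly erased the $\Theta(1/(1-\alpha))$ probes hit its nonerased part with constant probability; combining this with a scale-by-scale version of the increasing-subsequence accounting shows that a constant fraction of the ``out-of-place'' points of $\cN$ are still detected. Getting the definition of ``resiliently searchable'' right --- so that it simultaneously forms an increasing subsequence and is detectable by bounded probing --- and handling adversarial concentration of erasures is the technical heart of the argument.
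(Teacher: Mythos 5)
Your reduction to monotonicity of $f_{|\cN}$ and the overall plan (binary search made robust to erased query points) are sound, but the concrete mechanism you propose --- a \emph{deterministic} dyadic binary search with $\Theta(1/(1-\alpha))$ extra probes into the sibling half whenever the midpoint is erased --- is genuinely different from the paper's, and the soundness step that you flag as the ``technical heart'' does not close as sketched. The classical increasing-subsequence argument needs a \emph{canonical pivot} at the level where the search paths for $i<j$ diverge, so that the two inequalities $f(i)\le f(\text{pivot})$ and $f(\text{pivot})\le f(j)$ chain up. In your scheme, if that midpoint $c$ is erased, the probes made while searching for $i$ land in $(c,b]$ and those made while searching for $j$ land in $[a,c)$; they are independent random sets with no common element, so there is no $a$ for which both inequalities are certified and no well-defined tree against which to declare a point ``resiliently searchable.'' Moreover, even with a fixed canonical rule (say, the nonerased point nearest $c$ on the right), the left-going and right-going searches would still use \emph{different} pivots at that level, and the chain still fails. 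Finally, hitting the nonerased part of a sibling with constant probability is not the same as catching a violating pair: the violating partners of a given $i$ can be a sparse subset of that sibling.

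The paper avoids all of this by changing what is randomized. At every level, instead of using the dyadic midpoint, the tester samples points uniformly from the \emph{current} interval until it hits a nonerased point and uses that as the pivot (\Alg{line-test}). Over a full run, this makes the search tree a uniformly random binary search tree on $\cN$, with all pivots nonerased by construction. Conditioned on any realization $T$ of this tree, the searchable points form a monotone subsequence (the classical argument now goes through verbatim since every split has a nonerased pivot), so an $\eps$-far function leaves an $\eps$ fraction of $\cN$ unsearchable for \emph{every} $T$, and a random search point is caught with probability at least $\eps$. The query bound is the nontrivial part: \Clm{erasure-density} shows that in \emph{any} BST on $\cN$ of height $h$, the expected number of queries to traverse a uniformly random search path is at most $h/(1-\alpha)$ (because intervals with many erasures are proportionally less likely to be visited), and \Clm{exp-rbst} bounds the expected height of a random BST on $\cN$ by $O(\log n)$, yielding $O(\log n/(1-\alpha))$ expected queries per iteration and $O(\log n/(\eps(1-\alpha)))$ overall after a Markov argument. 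Your query accounting happens to land on the same bound because your tree has fixed depth $\log n$, but the soundness gap is real: you would need to replace your probing rule with something that yields a single coherent tree per run, which is exactly what the paper's random-pivot trick provides.
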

Without erasure resilience, the complexity of testing monotonicity of functions $f:[n]\mapsto\R$ is $\Theta(\log n/\eps)$ \cite{EKK+00,Fis04}. Thus, the query complexity of our erasure-resilient tester has optimal dependence on the domain size and on~$\eps$.

The starting point of our algorithm is the tester for sortedness from~\cite{EKK+00}. This tester picks a random element of the input array and performs a binary search for that element. It rejects if the binary search does not lead to the right position. The first challenge is that the tester always queries the middle element of the array and is very likely to query other elements that are close to the root in the binary search tree. So, it will break if these elements are erased. To make it resilient to erasures, we randomize the binary tree with respect to which it performs the binary search. The second challenge is that the tester does not know which points are erased. To counteract that, our tester samples points from appropriate intervals until it encounters a nonerased point.

To analyze the tester, we bound the expected number of queries required to traverse a uniformly random search path in an arbitrary binary search tree built over the nonerased points in an $\alpha$-erased $n$-element array (\Clm{erasure-density}). This expectation depends only on the depth of the tree and $\alpha$. This is the most interesting part of our analysis and captures the intuition that a randomized binary search for a uniformly random search point is biased towards visiting intervals containing a larger fraction of nonerased points.

\subparagraph{BDPs on the hypergrid} In Sections~\ref{sec:BDP}-\ref{sec:app-bdp},
we generalize our monotonicity tester in two ways: (1) to work over general hypergrid domains, and (2) to apply to all BDPs. We achieve it by giving (1) a reduction from testing BDPs on the line to testing monotonicity on the line that applies to erasure-resilient testers and (2) an erasure-resilient version of the dimension reduction from~\cite{CDJS15}.
\begin{theorem}[BDP tester on the hypergrid]\label{thm:hyp-tester-BDP}
For every BDP $\cP$ of real-valued functions on the hypergrid $[n]^d$, there exists a one-sided error $\alpha$-erasure-resilient $\eps$-tester that works for all $\alpha,\eps \in (0,1),$ where $\alpha \le \eps/970d$, with query complexity $O\left(\frac{d\log n}{\eps(1-\alpha)}\right)$.
\end{theorem}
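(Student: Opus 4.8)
The plan is to derive \Thm{hyp-tester-BDP} from \Thm{line-tester} by composing two erasure-resilient reductions: one from testing an arbitrary BDP on the line to testing monotonicity on the line, and one that is an erasure-resilient form of the dimension reduction of Chakrabarty et al.~\cite{CDJS15}, which reduces BDPs on $[n]^d$ to BDPs on random axis-parallel lines at the cost of a factor of $d$ in the proximity parameter. The hypothesis $\alpha\le\eps/970d$ enters only through the second reduction: it is what guarantees that the violation structure exploited by the dimension reduction is not destroyed by the $\alpha$-fraction of adversarial erasures.

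For the line reduction, fix a direction $r$, let $L_r(j)=\sum_{t<j}l_r(t)$ and $U_r(j)=\sum_{t<j}u_r(t)$, and set $h_l:=f-L_r$ and $h_u:=U_r-f$. Because $\cP(\B)$ on an arbitrary subdomain is characterized by $f(x)-f(y)\le\pdi_\B(x,y)$, and because $l_r<u_r$, one checks that on the nonerased points of a line in direction $r$ (ordered by their $r$-th coordinate) a pair is a $\cP(\B)$-violation if and only if it is a monotonicity violation of $h_l$ or of $h_u$, and these two violation sets are disjoint; moreover, on a chain, editing distance to monotonicity (and to $\cP(\B)$) equals the minimum number of points whose deletion removes all violations. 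Consequently the erasure-resilient distance $\mu(\ell)$ of $f|_\ell$ to $\cP(\B)$ on the line lies between $\max(\mu_l(\ell),\mu_u(\ell))$ and $\mu_l(\ell)+\mu_u(\ell)$, where $\mu_l(\ell),\mu_u(\ell)$ are the erasure-resilient monotonicity distances of $h_l,h_u$ on $\ell$. Thus the core subroutine behind \Thm{line-tester} --- a randomized binary search for a uniformly random nonerased element, resampling through erased intervals --- when run once on $h_l|_{\cN\cap\ell}$ and once on $h_u|_{\cN\cap\ell}$ as independent sub-steps, rejects with probability at least a constant times $\mu(\ell)$; the probe-count analysis of \Thm{line-tester} carries over unchanged since the erased positions of $h_l$ and $h_u$ coincide with those of $f$.

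For the erasure-resilient dimension reduction I would prove that if every restoration of the $\alpha$-erased $f:[n]^d\to\R\cup\{\perp\}$ is $\eps\cdot|\cN|/|\cD|$-far from $\cP(\B)$, then $\E_{r,\ell}[\mu(\ell)]=\Omega(\eps/d)$, the expectation being over a uniformly random direction $r$ and a uniformly random line $\ell$ in direction $r$. The approach is to take an arbitrary restoration of $f$, apply the (erasure-free) dimension reduction of~\cite{CDJS15} to distribute $\Omega(\eps)$ units of ``violation mass'' per domain point across the lines so that a random line carries $\Omega(\eps/d)$ of it, and then to bound the mass that the $\le\alpha|\cD|$ erased points can absorb: each erased point lies on exactly $d$ axis-parallel lines and, along each, can kill only a bounded amount of mass, so the total loss is $O(\alpha d)\cdot|\cD|$, and the choice $\alpha\le\eps/970d$ keeps it below half of the $\Omega(\eps|\cN|)$ available mass (using $|\cN|\ge(1-\alpha)|\cD|$); the same estimate shows that only an $O(\alpha)$ fraction of lines are erased heavily enough for the line subroutine to be inapplicable, which is negligible against the surviving $\Omega(\eps/d)$. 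The hypergrid tester then runs $N=\Theta(d/\eps)$ independent rounds, each drawing a random direction $r$ and a random line $\ell$ in direction $r$ and performing one binary search on $h_l|_{\cN\cap\ell}$ and one on $h_u|_{\cN\cap\ell}$. By the previous paragraph and the dimension reduction, a single round rejects with probability at least $c\cdot\E_{r,\ell}[\mu(\ell)]=\Omega(\eps/d)$, so $N=\Theta(d/\eps)$ rounds push the acceptance probability on far inputs below $1/3$; each round costs $O(\log n/(1-\alpha))$ probes, and after the standard truncation of the total probe count the complexity is $O\!\left(\frac{d\log n}{\eps(1-\alpha)}\right)$. One-sided error is immediate: if some restoration of $f$ globally satisfies $\cP(\B)$, then on every line its $h_l,h_u$ are nondecreasing, so the restrictions $h_l|_{\cN\cap\ell},h_u|_{\cN\cap\ell}$ are nondecreasing and every binary search succeeds.

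I expect the main obstacle to be the erasure-resilient dimension reduction, and specifically the charging argument that bounds the violation mass $\alpha|\cD|$ adversarial erasures can absorb: since an erased point sits on $d$ lines and, through $\pdi_\B$, can participate in violated pairs spanning the whole hypergrid, a careless analysis loses an extra factor of $d$; controlling the loss to $O(\alpha d)\cdot|\cD|$ is precisely what forces and justifies the regime $\alpha\le\eps/970d$. A secondary point is verifying that the binary search of \Thm{line-tester}, applied to the transformed functions $h_l,h_u$, rejects with probability proportional to $\mu(\ell)$ rather than merely with probability $2/3$ once $\mu(\ell)$ is large enough; this linearity, already present in the sortedness tester of~\cite{EKK+00}, is what keeps the round count at $\Theta(d/\eps)$ instead of $\Theta(d^2/\eps^2)$.
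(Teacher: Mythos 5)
Your proposal follows essentially the same architecture as the paper's proof: reduce BDP testing on a line to testing monotonicity of two derived functions, prove an erasure-resilient dimension reduction that loses only $O(\alpha d)$ in expected per-line distance plus a Markov bound on heavily-erased lines, and combine these with the randomized binary search of \Thm{line-tester}. The one place where you diverge is in how you prove the distance part of the dimension reduction: you sketch a per-erased-point charging argument (each erased point sits on $d$ lines and can absorb a bounded amount of violation mass), whereas the paper takes a cleaner route (\Clm{dim-red}): fix the restoration $f_*$ that agrees with $f$ on $\cN$ and with the closest element of $\cP$ on the erased points, apply the erasure-free dimension reduction of~\cite{CDJS15} to $f_*$, and then pass from $\text{dist}(f_*,\cP^i)$ back to $\text{dist}(f,\cP^i)$ via the triangle inequality, paying exactly $\alpha n^d$ per dimension. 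That triangle-inequality step is what makes the $O(\alpha d)\cdot|\cD|$ loss immediate and avoids the double-counting worry you flag; if you pursue your charging argument you will essentially rediscover it. The rest of your plan (disjointness of the $h_l$/$h_u$ violation sets, deletion distance equals Hamming distance on a chain, linear-in-$\mu(\ell)$ rejection probability per round, $\Theta(d/\eps)$ rounds, query truncation) matches the paper's Claims~\ref*{clm:posneg}--\ref*{clm:bdpmonred}, Lemmas~\ref*{lem:exp-far}--\ref*{lem:era-exp}, and \Alg{hyp-bdp-tester}.
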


Every known tester of a BDP for real-valued functions over hypergrid domains work by sampling an {\em axis-parallel line} uniformly at random and checking for violations on the sampled line. Our erasure-resilient testers also follow this paradigm. To check for violations on the sampled line, we use one iteration of our BDP tester for the line. We show (in~\Sec{limitation-example-hypercube}) the existence of $\alpha$-erased functions $f:\{0,1\}^d \mapsto \R$ that are $\eps$-far from monotone for $\alpha = \Theta(\eps/\sqrt{d})$ but do not have violations to monotonicity along any of the axis parallel lines (which are the edges of the hypercube, in this case). It implies that every tester for monotonicity that follows the paradigm above will fail when $\alpha=\Omega(\eps/\sqrt{d})$. Thus, some restriction on $\alpha$ in terms of $d$ and $\eps$ is necessary for such testers.

\subparagraph{Convexity on the line} Finally, in~\Sec{conv}, we develop additional techniques to design a tester for convexity (which is not a BDP) on the line.
The query complexity of our tester has the same dependence on $n$ and $\eps$ as in the standard convexity tester of Parnas et al.~\cite{PRR03}. The dependence on $n$ is optimal for nonadaptive testers \cite{BlaisRY14}, and the tester from~\cite{PRR03} is conjectured to be optimal in the standard model.
\begin{theorem}[Convexity tester on the line]\label{thm:conv-tester}
There exists a one-sided error $\alpha$-erasure-resilient $\eps$-tester for convexity of real-valued functions on the line $[n]$ that works for all $\alpha,\eps \in (0,1),$ with query complexity
$O\left(\frac{\log n}{\eps(1 - \alpha)}\right).$
\end{theorem}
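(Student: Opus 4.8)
The plan is to combine the randomized-binary-search idea behind the erasure-resilient monotonicity tester (\Thm{line-tester}) with the standard-model convexity tester of Parnas, Ron and Rubinfeld, adding the ingredients needed to cope with the \emph{second-order} nature of convexity. Since convexity is extendable, by the reduction in \Sec{gen-trans} it suffices to build a tester that, given oracle access to an $\alpha$-erased $f$, distinguishes the case that $f_{|\cN}$ is convex (as a function on the point set $\cN\subseteq[n]$) from the case that $f_{|\cN}$ is $\eps$-far from convex on $\cN$, without knowing $\cN$ in advance. Throughout I use that convexity of a function on a finite set of points is equivalent to the consecutive secant slopes being nondecreasing, and that this property is preserved under restriction to any subset of the points.

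The tester does $O(1/\eps)$ independent iterations; each iteration samples a uniformly random nonerased ``target'' $c$ (sample points of $[n]$, resampling until a nonerased one is found) and traverses a single root-to-leaf path of a \emph{random} binary search tree over the nonerased points. It maintains an interval $[a,b]$ with $a,b$ nonerased (initially the extreme nonerased points) together with a running bracket $[L,U]$ of slopes still admissible for convexity inside $[a,b]$ (initially $[-\infty,+\infty]$); to split $[a,b]$ it draws a random nonerased pivot $p\in(a,b)$ (again by resampling), queries $f(a),f(p),f(b)$, and \textbf{rejects} unless $L\le \frac{f(p)-f(a)}{p-a}\le\frac{f(b)-f(p)}{b-p}\le U$; it then updates the bracket ($U$ tightened by the right slope when recursing left, $L$ by the left slope when recursing right) and recurses into the subinterval containing $c$, stopping when $(a,b)$ contains no nonerased point. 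Randomizing the tree guarantees that no fixed point is queried with high probability, and resampling inside subintervals handles the fact that the tester does not know in advance which points are erased. One-sided error/completeness is immediate: when $f_{|\cN}$ is convex, every slope inequality among nonerased points holds and the bracket is maintained so as to contain all slopes among nonerased points of the current interval, so the tester never rejects.

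For soundness, call a nonerased target $c$ \emph{good} if its random search path passes every check. The crux is a structural lemma: for any fixed tree, the restriction of $f$ to the good nonerased points admits a globally consistent nondecreasing assignment of consecutive secant slopes, hence extends to a convex function on $[n]$; equivalently, changing only the bad nonerased points makes $f_{|\cN}$ convex. Therefore, if $f_{|\cN}$ is $\eps$-far from convex on $\cN$, at least an $\eps$ fraction of the nonerased points are bad, a uniformly random target $c$ is bad with probability at least $\eps$, and $O(1/\eps)$ iterations reject with probability at least $2/3$. For the query complexity, each iteration follows a path of depth $O(\log n)$ in a binary search tree over the nonerased points of an $\alpha$-erased $n$-element array; by \Clm{erasure-density} the expected number of queries needed to traverse a uniformly random such path --- including all resamples --- is $O(\log n/(1-\alpha))$, because a search for a uniformly random nonerased target is biased toward subintervals with a larger nonerased fraction. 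Truncating each iteration at a constant times its expectation and applying Markov's inequality makes the per-iteration cost $O(\log n/(1-\alpha))$ in the worst case at the price of a small added failure probability, so the total is $O(\log n/(\eps(1-\alpha)))$.

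I expect the structural lemma in the soundness step to be the main obstacle, and the reason \Sec{conv} needs ``additional techniques'' beyond the monotonicity case. For monotonicity on the line the good points trivially form an increasing subsequence; for convexity one must track the admissible-slope bracket $[L,U]$ along each search path and show that these brackets are mutually consistent across \emph{all} good targets, so that the good points carry one globally nondecreasing slope profile. This has to be argued while (i) the binary search tree is random, so the analysis cannot rely on dyadic split points, and (ii) erasures force each check to use whatever nonerased pivots were actually drawn rather than the ``ideal'' midpoints, so the bracket bookkeeping and the consistency argument must be robust to these perturbations. Everything else --- completeness, the $O(1/\eps)$-iteration amplification, and the query bound via \Clm{erasure-density} --- follows the template of the monotonicity tester.
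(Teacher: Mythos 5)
Your high-level plan matches the paper's template: traverse a uniformly random search path of a uniformly random binary search tree over $\cN$, check a per-interval slope condition propagated via a bracket, bound queries via \Clm{erasure-density}, and prove a structural lemma asserting that for every search tree the ``good'' targets carry a convex restriction, so that an $\eps$-far input yields an $\eps$-fraction of bad targets. The completeness argument and the sampling-query bound are fine, and you correctly flag the structural lemma as the crux. However, the per-interval check you propose --- comparing only the two chord slopes $\frac{f(p)-f(a)}{p-a}$ and $\frac{f(b)-f(p)}{b-p}$ against a propagated bracket $[L,U]$, with $U$ updated to the right chord when recursing left and $L$ to the left chord when recursing right --- is too weak to support your structural lemma, which is in fact \emph{false} for that tester. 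Take $n=5$, no erasures, and $f$ with values $(0,0,4,7,17)$ on $1,\dots,5$, so the consecutive derivatives are $(0,4,3,10)$ and $f$ is $\frac15$-far from convex since $\Delta f(2)=4>3=\Delta f(3)$. For the search tree with root pivot $3$, left-subtree pivot $2$, right-subtree pivot $4$, every one of your checks passes: at the root $2\le 6.5$; in $[1,3]$ with $U=6.5$, $0\le 4\le 6.5$; in $[3,5]$ with $L=2$, $2\le 3\le 10$; and the leaf checks pass. So every target is ``good'' in this tree even though $f$ is far from convex. The underlying reason is that the chord slope $\frac{f(b)-f(p)}{b-p}$ averages derivatives to the right of $p$ and can greatly exceed $\Delta f(p)$; passing it down as the new $U$ into $[a,p]$ therefore fails to constrain the derivative just to the left of $p$.

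The paper's procedure \textsc{Test-Interval} fixes this by paying for a few extra \emph{walking queries} at each node: after sampling the pivot $x$, it scans right and left to find the nearest nonerased neighbors $y$ and $z$ of $x$, computes the actual derivatives $\Delta f(z)=\frac{f(x)-f(z)}{x-z}$ and $\Delta f(x)=\frac{f(y)-f(x)}{y-x}$, includes them in the chain of slope inequalities it checks, and propagates $\Delta f(z)$ (resp.\ $\Delta f(x)$) as the new right (resp.\ left) slope bound rather than a chord slope. It also carries an explicit \emph{anchor set} of pivots and their neighbors encountered along the path into each subinterval, so the goodness check compares all consecutive secant slopes over the anchors, not just two. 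In the example above, the root check already compares $\Delta f(2)=4$ against $\Delta f(3)=3$ and rejects. These ingredients are precisely what make the structural claim (\Clm{bst-conv-far}) true, and the paper separately shows that the walking queries cost only a constant factor more than the sampling queries, preserving the $O\bigl(\frac{\log n}{\eps(1-\alpha)}\bigr)$ bound. Without them, your soundness argument has a genuine gap that cannot be closed by tightening the bracket bookkeeping alone.
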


Our algorithm for testing convexity combines ideas on testing convexity from~\cite{PRR03}, testing sortedness from~\cite{EKK+00}, and our idea of randomizing the search. The tester of~\cite{PRR03} traverses a uniformly random path in a binary tree on the array $[n]$ by selecting one of the half-intervals of an interval uniformly at random at each step. Instead of doing this, our tester samples a uniformly random nonerased search point and traverses the path to that point in a uniformly random binary search tree just as in our modification of the tester of~\cite{EKK+00}. This is done to bias our algorithm to traverse paths containing intervals that have a larger fraction of nonerased points. However, instead of checking whether the selected point can be found, as in our monotonicity tester, the convexity tester checks a more complicated ``goodness condition'' in each visited interval of the binary search tree. It boils down to checking that the slope of the functions between pairs of carefully selected points satisfies the convexity condition. In addition to spending queries on erased points due to sampling, like in the monotonicity tester, our tester also performs ``walking queries'' to find the nearest nonerased points
to the left and to the right of
the pivots in our random binary search tree. We show that the overhead in the query complexity due to querying erased points is at most a factor of $O(1/(1-\alpha))$.

\section{Generic transformations}\label{sec:gen-trans}
In this section, we explain our transformations that can make two classes of testers erasure-resilient: (1) uniform proximity oblivious testers (POTs) defined by Goldreich and Ron~\cite{GoldreichR11a} (\Thm{pot-transform-main}), and (2) uniform testers for extendable properties (\Thm{ext-transform-main}). 

\subsection{Uniform POTs}
POTs were studied by Goldreich and Ron in~\cite{GoldreichR11a}, Goldreich and Kaufman~\cite{GK11} and Goldreich and Shinkar~\cite{GoldreichS16}. We first define POTs.
\begin{definition}[\cite{GoldreichS16}]
Let $\cP$ be a property, let $\rho : (0,1] \mapsto (0,1]$ be a monotone function and let $c \in (0,1]$ be a constant. A tester $T$ is  a {\em $(\rho,c)$-POT} for $\cP$ if
\begin{itemize}
\item for every function $f \in \cP$, the probability that $T$ accepts is at least $c$, and
\item for every function $f \notin \cP$, the probability that $T$ accepts is at most $c - \rho(\eps_f)$, where $\eps_f$ denotes the
relative Hamming distance of $f$ to $\cP$.
\end{itemize}
\end{definition}
A POT that queries points sampled uniformly and independently at random from $\cD$ is called a uniform POT.
Next, we state our first generic transformation.
\begin{theorem}\label{thm:pot-transform-main}
If $T$ is a uniform ($\rho$,$c$)-POT for a property $\cP$ that makes $q$ queries, then there exists a uniform $\alpha$-erasure-resilient ($\rho'$,$c$)-POT $T'$ for $\cP$ that makes $q$ queries for all $\alpha < \rho(\eps_f\cdot (1-\alpha))/q$, where $\rho'(x) = \rho(x\cdot(1-\alpha)) - \alpha\cdot q$ for $x \in (0,1]$.

\end{theorem}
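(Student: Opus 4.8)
The plan is to take the given uniform POT $T$ and build $T'$ as follows: whenever $T$ would query a point $x$ sampled uniformly from $\cD$, instead $T'$ samples $x$ uniformly from $\cD$ as well, but if $x$ turns out to be erased, $T'$ simply accepts (or, equivalently, treats the query as ``harmless''). The intuition is that $T$ makes only $q$ queries, each uniform on $\cD$, so the probability that any one of them lands on an erased point is at most $\alpha q$ by a union bound. Conditioned on no query hitting an erased point, $T'$ behaves exactly like $T$ run on the restoration $f'$ that agrees with $f$ on $\cN$ and is filled in arbitrarily (say, by a best restoration) on the erased points. Since all of $T$'s queries are to $\cN$ in this event, the outcome does not depend on how the erased points are filled in.

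\medskip
\noindent\textbf{Completeness.} Suppose $f$ has a restoration $f' \in \cP$ (i.e., $f_{|\cN}$ satisfies $\cP_\cN$, or more simply: there is an extension of $f_{|\cN}$ satisfying $\cP$). First I would condition on the event $E$ that none of $T'$'s $q$ uniform queries hits an erased point. In $\overline{E}$, $T'$ accepts by construction. In $E$, $T'$'s view is identical to that of $T$ run on $f'$, which accepts with probability at least $c$. So $\prob[T' \text{ accepts}] \ge \prob[E]\cdot c + \prob[\overline E]\cdot 1 \ge c$, since the ``bonus'' from $\overline E$ only helps. Thus $T'$ accepts $f$ with probability at least $c$, matching the POT completeness requirement. (If $T$ had one-sided error / $c=1$ this would give probability $1$, but in general we just need $\ge c$.)

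\medskip
\noindent\textbf{Soundness.} Now suppose $f \notin \cP$ in the erasure-resilient sense, meaning every restoration $f'$ is $\eps_f\cdot\frac{|\cN|}{|\cD|}$-far from $\cP$; equivalently $f_{|\cN}$ is $\eps_f$-far from $\cP_\cN$. Fix the specific restoration $f^*$ that $T'$ would effectively be running on in event $E$ (the fill-in is irrelevant since no erased point is queried, but for bookkeeping fix any restoration). Its relative Hamming distance to $\cP$ is at least $\eps_f(1-\alpha)$, because a function that is $\eps$-far on an $(1-\alpha)$-fraction of the domain is $\eps(1-\alpha)$-far overall. By the POT soundness property applied to $f^*$, conditioned on $E$ the acceptance probability is at most $c - \rho(\eps_f(1-\alpha))$. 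In $\overline E$ we bound acceptance by $1$. Hence
\[
\prob[T' \text{ accepts}] \le \prob[E]\bigl(c - \rho(\eps_f(1-\alpha))\bigr) + \prob[\overline E]\cdot 1
\le c - \rho(\eps_f(1-\alpha)) + \prob[\overline E],
\]
and since $\prob[\overline E] \le \alpha q$ by the union bound, this is at most $c - \bigl(\rho(\eps_f(1-\alpha)) - \alpha q\bigr) = c - \rho'(\eps_f)$. The hypothesis $\alpha < \rho(\eps_f(1-\alpha))/q$ guarantees $\rho'(\eps_f) > 0$, so this is a genuine detection gap. Also $T'$ makes exactly $q$ queries and samples uniformly, so it is a uniform $\alpha$-erasure-resilient $(\rho',c)$-POT, as claimed.

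\medskip
\noindent\textbf{The main obstacle} I anticipate is being careful about two subtleties in the soundness argument. First, one must make sure the definition of ``POT for the erasure-resilient model'' is set up so that $\rho'$ is evaluated at the right argument — the distance parameter of $f$ in the erased sense versus the distance of the restoration $f^*$ — and the $(1-\alpha)$ rescaling between ``$\eps$-far on $\cN$'' and ``$\eps(1-\alpha)$-far on $\cD$'' is applied consistently; this is exactly where the $\rho(x(1-\alpha))$ composition in the statement of $\rho'$ comes from. Second, one must confirm that $\rho'$ is still a monotone function $(0,1]\to(0,1]$ on the relevant range, which follows from monotonicity of $\rho$ and the constraint on $\alpha$; a minor point is that $\rho'$ may become nonpositive for small $x$, but the hypothesis $\alpha < \rho(\eps_f(1-\alpha))/q$ restricts attention to the regime where it is positive, exactly as in the analogous union-bound arguments used elsewhere in the paper.
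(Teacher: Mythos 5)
Your construction of $T'$ is exactly the paper's (query $q$ uniform points, accept outright if any is erased, otherwise run $T$ on them), and the final inequalities you reach for completeness and soundness are correct. However, the key conditional claims you use in the middle are not justified by the POT guarantee, and in general they are false. The POT guarantee for $T$ is a statement about queries sampled uniformly from $\cD$; once you condition on the event $E$ that no query hits an erased point, the query distribution becomes uniform over $\cN$, and the adversary chooses the erasures \emph{after} seeing $T$. Concretely, for soundness the adversary can erase $\alpha|\cD|$ points concentrated inside the witness set of $T$ on the restoration $f^*$, and a short calculation shows $\Pr[T'\text{ accepts}\mid E]$ can be as large as $c-\rho(\eps_f(1-\alpha))+\alpha q$, not $c-\rho(\eps_f(1-\alpha))$. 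Symmetrically, $\Pr[T'\text{ accepts}\mid E]\ge c$ can fail in the completeness direction (e.g., if $T$'s acceptance on $f^r$ hinges on sampling a point that got erased). So the line $\Pr[T'\text{ accepts}]\le\Pr[E]\bigl(c-\rho(\eps_f(1-\alpha))\bigr)+\Pr[\overline E]$ and its completeness analogue are not valid steps, even though the weaker inequalities you then derive from them happen to be true.

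The repair is small and is what the paper actually does, just phrased with counting rather than conditioning. For completeness: the event ``$T$ accepts $f^r$ on sample $S$'' is contained in the event ``$T'$ accepts $f$ on $S$'' (if $S$ has an erased point $T'$ accepts anyway; if not, $T'$ runs $T$ and returns the same answer), so $\Pr[T'\text{ accepts}]\ge\Pr[T\text{ accepts }f^r]\ge c$ directly, with no conditioning. For soundness: bound the \emph{joint} probability, not the conditional one. Every sample on which $T'$ accepts while avoiding erasures is a sample on which $T$ accepts $f^*$, so $\Pr[T'\text{ accepts},\,E]\le\Pr[T\text{ accepts }f^*]\le c-\rho(\eps_f(1-\alpha))$; adding $\Pr[\overline E]\le\alpha q$ gives $\Pr[T'\text{ accepts}]\le c-\rho(\eps_f(1-\alpha))+\alpha q=c-\rho'(\eps_f)$. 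This is exactly the paper's argument that at least $(1-c+\rho(\eps_f(1-\alpha))-\alpha q)|\cD|^q$ tuples in $\cD^q$ remain witnesses after erasures. Your chain reached the right endpoint only because the spurious factor $\Pr[E]$ you inserted made the intermediate expression smaller, not because the conditional step was correct.
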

\begin{proof}
Let $\cP$ be a property of functions over a domain $\cD$. The tester $T'$ queries $q$ uniform and independent points from $\cD$. It accepts if the sample has an erased point. Otherwise, it runs $T$ on the $q$ sampled nonerased points and accepts iff $T$ accepts.

Consider an $\alpha$-erased function $f \in\cP$ and a restoration $f^r\in \cP$. The tester $T$ accepts $f^r$ with probability at least $c$. If $T$ accepts $f^r$ on querying a sample $S \subseteq \cD$, then $T'$ also accepts $f$ on $S$. Thus, the probability that $T'$ accepts $f$ is at least $c$.

A tuple $W \in \cD^q$ is a {\em witness} for a function $g\notin\cP$, if $T$ rejects upon sampling $W$. Consider an $\alpha$-erased function $f$ that is $\eps$-far from $\cP$. Every restoration $f^r$ of $f$ is $\eps_f(1-\alpha)$-far from $\cP$. Since $T$ rejects $f^r$ with probability at least $1-c +\rho(\eps_f(1-\alpha))$, at least $(1-c +\rho(\eps_f(1-\alpha)))\cdot|\cD|^{q}$ tuples in $\cD^q$ are witnesses for $f^r$. Erasing one point can affect at most $q\cdot |\cD|^{q-1}$ witnesses. Thus, erasing an $\alpha$ fraction of points can affect at most $\alpha\cdot q\cdot|\cD|^{q}$ witnesses. At least $(1-c+\rho(\eps_f(1-\alpha))-\alpha\cdot q)\cdot|\cD|^{q}$ out of $|\cD|^q$ tuples are witnesses with no points erased. The probability that $T'$ samples such a tuple (and rejects $f$) is at least $1-c+\rho(\eps_f(1-\alpha))-\alpha\cdot q = 1 - c +\rho'(\eps_f)$. Hence, the probability that $T'$ accepts $f$
is at most $c-\rho'(\eps_f)$. This probability is nonnegative for all $\alpha < \rho(\eps_f(1-\alpha))/q$.
\end{proof}

\subparagraph*{Low degree polynomials} We apply~\Thm{pot-transform-main} to a POT designed by Rubinfeld and Sudan~\cite{RubinfeldSudan96} for the property of being a polynomial of degree at most $d$ over a finite field $\F$ and get an $\alpha$-erasure-resilient $\eps$-tester for this property. Consider a function $f: \mathbb{F} \mapsto \F$ that we would like to test for being a polynomial of degree at most $d$. The tester from~\cite{RubinfeldSudan96} selects $d+2$ points uniformly and independently at random from $\F$ and checks whether there is a polynomial of degree at most $d$ that fits all these points (by interpolation). It accepts if there is such a polynomial and rejects otherwise. Call this tester $T$. The authors of~\cite{RubinfeldSudan96} prove that $T$ rejects with probability at least $\eps$ if $f$ is $\eps$-far from being a degree-$d$ polynomial. Therefore, $T$ is a $(\rho,1)$-POT for this problem, where $\rho$ is the identity function. 
By~\Thm{pot-transform-main}, there exists
an $\alpha$-erasure-resilient $(\rho',1)$-POT, say $T'$, that makes $d+2$ queries, where $\rho'(x) = x(1-\alpha) - \alpha\cdot(d+2) $.
The probability that $T'$ rejects a function $f$ that is $\eps$-far from being a degree-$d$ polynomial is at least $\eps(1-\alpha) - \alpha\cdot(d+2)$.  The corollary follows.
\begin{corollary}
For $\alpha < \frac{\eps}{d+2+\eps}$, we can $\alpha$-erasure-resilient $\eps$-test for the property of being a degree-$d$ polynomial over a field $\F$ using $O\left(\frac{d+2}{\eps(1-\alpha) - \alpha\cdot (d+2)}\right)$ uniform queries.
\end{corollary}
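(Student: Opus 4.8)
The plan is to apply \Thm{pot-transform-main} directly to the degree-$d$ polynomial tester $T$ of Rubinfeld and Sudan~\cite{RubinfeldSudan96}. First I would recall that $T$ samples $q = d+2$ uniform and independent points from $\F$ and checks by interpolation whether a degree-$d$ polynomial fits them, accepting iff one does. Since any degree-$d$ polynomial is accepted with probability $1$ and any function that is $\eps_f$-far is rejected with probability at least $\eps_f$ (the cited guarantee), $T$ is a uniform $(\rho, 1)$-POT with $\rho$ the identity function and $c = 1$. This is the only external fact I need.

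Next I would instantiate \Thm{pot-transform-main} with these parameters. The transformed tester $T'$ is a uniform $\alpha$-erasure-resilient $(\rho',1)$-POT making the same $q = d+2$ queries, where $\rho'(x) = \rho(x(1-\alpha)) - \alpha q = x(1-\alpha) - \alpha(d+2)$. In particular, a function $f$ that is $\eps$-far from being a degree-$d$ polynomial is rejected by $T'$ with probability at least $\rho'(\eps) = \eps(1-\alpha) - \alpha(d+2)$, and this quantity is positive exactly when $\alpha(d+2) < \eps(1-\alpha)$, i.e. $\alpha < \eps/(d+2+\eps)$, matching the hypothesis of the corollary and of the theorem (the condition $\alpha < \rho(\eps_f(1-\alpha))/q$ becomes $\alpha < \eps_f(1-\alpha)/(d+2)$, which for the relevant distance $\eps_f \ge \eps$ is implied).

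Finally I would convert this constant rejection probability into the stated query complexity. Running $T'$ repeatedly $\Theta(1/\rho'(\eps))$ times and rejecting if any run rejects amplifies the rejection probability of an $\eps$-far function to at least $2/3$, while a true degree-$d$ polynomial is still accepted with probability $1$; this costs $O\!\left(\frac{d+2}{\eps(1-\alpha) - \alpha(d+2)}\right)$ uniform queries in total, giving the corollary.

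There is essentially no obstacle here: the corollary is a routine specialization of \Thm{pot-transform-main}, and the only things to check are that the Rubinfeld--Sudan tester genuinely is a uniform POT with the claimed $\rho$ and $c$, and that the arithmetic relating the threshold on $\alpha$ to positivity of $\rho'(\eps)$ works out --- both of which are immediate.
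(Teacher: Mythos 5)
Your proposal is correct and follows the paper's argument essentially verbatim: apply \Thm{pot-transform-main} to the Rubinfeld--Sudan tester viewed as a uniform $(\rho,1)$-POT with $\rho$ the identity and $q=d+2$, obtain $\rho'(x)=x(1-\alpha)-\alpha(d+2)$, and amplify by repetition. The only difference is that you spell out the repetition count and the positivity condition on $\rho'(\eps)$, which the paper leaves as ``the corollary follows.''
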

\subsection{Uniform testers for extendable properties}
We now define extendable properties and present our transformation for uniform testers for such properties. Given $\cS \subseteq \cT$, the {\em extension} of a function $f:\cS \mapsto \cR$
to a domain $\cT$ is a function $g:\cT \mapsto \cR$ that agrees with $f$ on every point in $\cS$.

\begin{definition}[Extendable property]\label{def:ext-prop}
For a domain $\cD$ and all $\cS \subseteq \cD$, let $\cP_\cS$ denote a property of functions  over domain $\cS$. The property $\bigcup_{\cS \subseteq \cD}\cP_{\cS}$ is extendable if, for all $\cS,\cT:\cS\subseteq \cT \subseteq \cD$,
\begin{itemize}
\item for every function $f\in\cP_\cS$, there is an extension $f'\in\cP_\cT$, and
\item  for every function $f$ that is $\eps$-far from $\cP_\cS$, every function $f'\in\cP_\cT$ differs from $f$ on at least an $\eps$ fraction of points in $\cS$.
\end{itemize}
\end{definition}

The next lemma is used in the proof of our generic transformation.
\begin{lemma}\label{lem:ext-prop-fact}
Let $\bigcup_{\cS \subseteq \cD}\cP_{\cS}$ be an extendable property. Consider
an $\alpha$-erased function $f$ over domain $\cD$ and let $\cN \subseteq \cD$ be the set of nonerased points in it. If $f \in \cP_\cD$, then $f_{|\cN} \in \cP_\cN$. If $f$ is $\eps$-far from $\cP_\cD$, then $f_{|\cN}$ is $\eps$-far from $\cP_\cN$.
\end{lemma}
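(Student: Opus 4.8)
The plan is to prove the two implications directly from Definition~\ref{def:ext-prop}, applied with $\cT = \cD$ and $\cS = \cN$ (note $\cN \subseteq \cD$ always holds).

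First, suppose $f \in \cP_\cD$. I want to conclude $f_{|\cN} \in \cP_\cN$. The natural move is to invoke the \emph{first} bullet of extendability in the ``reverse'' direction: $f$ is an extension of $f_{|\cN}$ to $\cD$, and $f \in \cP_\cD = \cP_\cT$. Since extendability as stated guarantees that \emph{some} extension of any $\cP_\cN$-function lies in $\cP_\cD$, I actually need the slightly stronger (but standard and clearly intended) reading that membership is ``downward closed under restriction'': if $g \in \cP_\cT$ then $g_{|\cS} \in \cP_\cS$. I would either note that this is exactly how $\cP_\cN$ should be understood for the extendable properties in question (monotonicity, Lipschitz, BDPs, convexity — restricting a monotone function to a subset keeps it monotone), or, more carefully, derive it from the second bullet: if $f_{|\cN} \notin \cP_\cN$, then $f_{|\cN}$ is $\eps'$-far from $\cP_\cN$ for some $\eps' > 0$, so by the second bullet every function in $\cP_\cD$ — in particular $f$ — disagrees with $f_{|\cN}$ on an $\eps'$ fraction of $\cN$; but $f$ agrees with $f_{|\cN}$ everywhere on $\cN$ by definition, a contradiction. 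Hence $f_{|\cN} \in \cP_\cN$.

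Second, suppose $f$ is $\eps$-far from $\cP_\cD$; I want $f_{|\cN}$ to be $\eps$-far from $\cP_\cN$. Here I argue by contrapositive on the distance: if $f_{|\cN}$ were \emph{not} $\eps$-far from $\cP_\cN$, there is $h \in \cP_\cN$ agreeing with $f_{|\cN}$ on more than a $(1-\eps)$ fraction of $\cN$. By the first bullet of extendability (with $\cS = \cN$, $\cT = \cD$), $h$ has an extension $h' \in \cP_\cD$. The extension $h'$ agrees with $h$ on all of $\cN$, hence agrees with $f_{|\cN}$ (equivalently, with $f$) on more than a $(1-\eps)$ fraction of $\cN$, and we may take $h'$ to agree with $f$ on all of $\cD \setminus \cN$ as well (the extension is unconstrained off $\cN$, so set it equal to $f$ there). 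Then $h'$ disagrees with $f$ on fewer than $\eps|\cN| \le \eps|\cD|$ points, contradicting that $f$ is $\eps$-far from $\cP_\cD$. Wait — I should be careful about which ``fraction'' the $\eps$-farness is measured against: Definition~\ref{def:erasure-resilient-tester} measures distance of $f_{|\cN}$ to $\cP_\cN$ as a fraction of $|\cN|$, and distance of $f$ to $\cP_\cD$ as a fraction of $|\cD|$, so the clean statement is exactly that the relative distance does not decrease when passing to $\cN$, which the above disagreement count gives since the off-$\cN$ coordinates contribute zero disagreements.

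The main obstacle, such as it is, is purely definitional bookkeeping: the definition of extendability only asserts the existence of \emph{a} good extension and a farness-preservation property, so I must be precise that (i) restriction preserves membership — handled via the contradiction argument above using the second bullet — and (ii) when extending $h \in \cP_\cN$ I am free to choose the values off $\cN$ to match $f$, so the disagreement set is contained in $\cN$. Neither step involves real computation; the only care needed is in keeping straight the two normalizations of ``$\eps$-far'' (relative to $|\cN|$ versus $|\cD|$) and in not over-reading or under-reading Definition~\ref{def:ext-prop}. I expect the whole proof to be four or five lines once these points are pinned down.
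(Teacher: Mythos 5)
Your approach matches the paper's on both claims: the first is obtained from the contrapositive of the second bullet of extendability (you spell this out; the paper's phrasing is terser but identical in logic), and the second by extending a nearby $h \in \cP_\cN$ to some $h' \in \cP_\cD$ and counting disagreements. However, there is a slip in your second part. You assert ``we may take $h'$ to agree with $f$ on all of $\cD \setminus \cN$ as well (the extension is unconstrained off $\cN$, so set it equal to $f$ there).'' This does not work: $f$ is $\perp$ off $\cN$, which is not a legal value for $h'$, and more to the point the extension is \emph{not} unconstrained off $\cN$ — it must lie in $\cP_\cD$, and the first bullet of extendability only guarantees the existence of \emph{some} such extension, not one with freely prescribed off-$\cN$ values. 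Your disagreement bound relies on this claim, so as written it is not justified. The fix, which is exactly what the paper does, is to move the freedom to the object that actually is unconstrained off $\cN$: the restoration. Define $f^e$ to equal $f$ on $\cN$ and $h'$ on $\cD\setminus\cN$; then $f^e$ is a valid restoration, it agrees with $h' \in \cP_\cD$ everywhere off $\cN$, and disagrees with $h'$ on fewer than $\eps|\cN|$ nonerased points. This contradicts the hypothesis, which — per Definition~\ref{def:erasure-resilient-tester} and as the paper's proof makes explicit — is that every restoration of $f$ must be changed on at least an $\eps$ fraction of $\cN$ (not of $\cD$, as your caveat states; the $|\cD|$ normalization happens to give the same conclusion here only because $|\cN| \le |\cD|$, so it is harmless but not the intended reading). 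With that correction your argument is the paper's.
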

\begin{proof}
Suppose that $f\in\cP_\cD$. Assume for the sake of contradiction that $f_{|\cN}\notin\cP_{\cN}$. Therefore, no extension of $f_{|\cN}$ to the domain $\cD$ will satisfy $\cP_\cD$.
This contradicts our assumption that $f\in\cP_\cD$. Now, suppose that $f$ is $\eps$-far from $\cP_\cD$. Then, every restoration of $f$ needs to be changed in at least an $\eps$ fraction of
nonerased points to satisfy $\cP_\cD$. Assume for the sake of contradiction that the relative Hamming distance of $f_{|\cN}$ to $\cP_\cN$ is $\eps' < \eps$. Let $g$ be the function in $\cP_\cN$ closest
to $f_{|\cN}$. Let $g^e$ be an extension of $g$ to $\cD$ that satisfies $\cP_\cD$. Define an extension of $f_{|\cN}$ to $\cD$, say $f^e$ as follows. The function $f^e$ takes the same values as $f_{|\cN}$ on points in
$\cN$ and takes the same values as $g^e$ on the remaining points. Note that $f^e$ is a restoration of $f$ as well. Clearly, $f^e$ can be made to satisfy $\cP_\cD$ by changing an $\eps' < \eps$ fraction of points on $\cN$, which contradicts the assumption that $f$ is $\eps$-far
from $\cP_\cD$.
\end{proof}

Our generic transformation for uniform testers for extendable properties follows.
\begin{theorem}\label{thm:ext-transform-main}
Let $q(\cdot,\cdot)$ be a function that is nondecreasing in the first argument and nonincreasing in the second argument. Let $\bigcup_{\cS \subseteq \cD}\cP_{\cS}$ be an extendable property. Suppose $T$ is a uniform one-sided error $\eps$-tester for the property $\bigcup_{\cS \subseteq \cD}\cP_{\cS}$, such that $T$ makes $q(|\cS|,\eps)$ queries from $\cS$ to test for $\cP_\cS$, for every $\cS\subseteq \cD$. Assume also that for every $\cS \subseteq \cD$, the probability that $T$ tests $\cP_\cS$ correctly does not decrease when it makes more queries. Then, there is a uniform one-sided error $\alpha$-erasure-resilient $\eps$-tester for $\cP_{\cD}$ that makes $O\left(q(|\cD|,\eps)/(1-\alpha)\right)$ queries 
for all $\alpha \in [0,1)$.
\end{theorem}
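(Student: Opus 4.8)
The plan is to run the standard uniform tester $T$ for the extendable property on the set $\cN$ of nonerased points, but we must handle two issues: (1) the tester $T'$ does not know $\cN$ in advance, so it cannot directly sample uniformly from $\cN$, and (2) each query to a point of $\cN$ may cost several queries, because sampling uniformly from $\cD$ may land on erased points. By \Lem{ext-prop-fact}, if $f\in\cP_\cD$ then $f_{|\cN}\in\cP_\cN$, and if $f$ is $\eps$-far from $\cP_\cD$ then $f_{|\cN}$ is $\eps$-far from $\cP_\cN$. So it suffices for $T'$ to distinguish these two cases for $f_{|\cN}$, which is exactly what $T$ does when run on domain $\cN$ with parameter $\eps$ --- provided $T'$ can simulate uniform samples from $\cN$.

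First I would describe the simulation: whenever $T$ requests a uniform sample from $\cN$, the tester $T'$ repeatedly samples a uniformly random point of $\cD$ and queries it, discarding erased points, until it obtains a nonerased point; this nonerased point is distributed uniformly in $\cN$, as required. Since at most an $\alpha$ fraction of $\cD$ is erased, each such nonerased sample costs, in expectation, at most $1/(1-\alpha)$ queries; by linearity of expectation, simulating all $q(|\cN|,\eps)\le q(|\cD|,\eps)$ queries of $T$ (using monotonicity of $q$ in its first argument) costs at most $q(|\cD|,\eps)/(1-\alpha)$ queries in expectation. To turn this into a worst-case bound, I would run this simulation with a query budget of, say, $4\cdot q(|\cD|,\eps)/(1-\alpha)$ total queries; by Markov's inequality, the probability of exceeding the budget is at most $1/4$, and if the budget is exhausted $T'$ simply accepts (this only helps one-sided error, and affects the soundness error by at most $1/4$). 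Here I use the hypothesis that $T$'s correctness probability does not decrease if it makes extra queries, so we may also feed $T$ any "leftover" samples harmlessly, or simply stop $T$ once it has its $q(|\cN|,\eps)$ samples.

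Next I would verify correctness. For completeness: if there is a restoration $f'\in\cP_\cD$, then $f_{|\cN}=f'_{|\cN}\in\cP_\cN$, so $T$ run on $\cN$ accepts with probability $1$ (one-sided error), and $T'$ accepts with probability $1$ since exhausting the budget also leads to acceptance. For soundness: if every restoration of $f$ is $\eps$-far from $\cP_\cD$, then by \Lem{ext-prop-fact} $f_{|\cN}$ is $\eps$-far from $\cP_\cN$, so $T$ rejects $f_{|\cN}$ with probability at least $2/3$ when given its full sample; conditioned on not exceeding the query budget (probability at least $3/4$), $T'$ faithfully simulates $T$ and rejects. A union bound over the two bad events (budget overflow and $T$ failing) gives rejection probability at least $1-1/4-1/3>1/3$; boosting $T$'s success probability to, say, $9/10$ before the reduction (a constant-factor overhead absorbed into the $O(\cdot)$) gives overall error below $1/3$. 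The total query complexity is $O(q(|\cD|,\eps)/(1-\alpha))$, as claimed, for every $\alpha\in[0,1)$.

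The main obstacle, and the only genuinely delicate point, is converting the expected query bound of the rejection-sampling simulation into a worst-case bound without harming soundness: truncating the simulation must default to \emph{accept} (to preserve one-sided error), but then we must ensure the truncation probability is small enough that soundness still holds, which is why the budget is a constant times the expectation and why the error of $T$ is first boosted. Everything else --- the uniformity of the simulated samples, the use of \Lem{ext-prop-fact}, and the bookkeeping with the monotonicity assumptions on $q$ and on $T$'s success probability --- is routine.
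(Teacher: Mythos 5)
Your proof is correct and follows essentially the same approach as the paper: oversample from $\cD$ by a factor $\Theta(1/(1-\alpha))$, feed the resulting nonerased points (which are uniform over $\cN$) to $T$, default to accept if too few nonerased points are obtained (preserving one-sided error), and invoke \Lem{ext-prop-fact} together with the monotonicity assumptions on $q$ and on $T$'s success probability. The only cosmetic difference is that the paper draws a fixed batch of $2q(|\cD|,\eps)/(1-\alpha)$ samples and uses a Chernoff bound to control the probability of getting fewer than $q(|\cD|,\eps)$ nonerased ones, whereas you rejection-sample sequentially under a query budget and apply Markov's inequality --- the latter requires a slightly larger budget and/or more amplification of $T$ to push the total error below $1/3$ (your stated $1/4 + 1/10 = 7/20$ is still above $1/3$), but that is a routine constant-factor adjustment absorbed by the $O(\cdot)$.
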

\begin{proof}
Let $Q = 2q(|\cD|,\eps)/(1-\alpha)$. Consider the tester $T'$ that samples $Q$ points uniformly and independently at random from $\cD$. If there are fewer than $q(|\cD|,\eps)$ nonerased points in the sample, $T'$ accepts. Otherwise, it runs $T$ on the sampled nonerased points and accepts iff $T$ accepts.

The expected number of nonerased points in a uniform sample of size $Q$ from $\cD$ is at least $Q\cdot(1-\alpha) = 2q(|\cD|,\eps)$. By the Chernoff bound, the probability that $T'$ samples fewer than $q(|\cD|,\eps)$ nonerased points is at most $e^{-q(|\cD|,\eps)/4}$. 

Consider an $\alpha$-erased function $f$ over domain $\cD$. Let $\cN$ be the set of nonerased points.
If $f \in \cP$, then $f_{|\cN}\in\cP_\cN$ by~\Lem{ext-prop-fact}, and the tester $T'$ always accepts.
Assume now that $f$ is $\eps$-far from $\cP$. Then $f_{|\cN}$ is $\eps$-far from $\cP_\cN$ by~\Lem{ext-prop-fact}. Therefore $T$ rejects with probability at least $2/3$ on a sample of size at least $q(|\cN|,\eps)$. Thus, by a union bound, the probability that $T'$ accepts is at most $1/3 + e^{-q(|\cD|,\eps)/4}$. This probability can be brought below $1/3$ by repeating $T'$ a small constant number of times, whenever $q(|\cD|,\eps) \ge 8$. 
\end{proof}
In the following, we show a few applications of~\Thm{pot-transform-main}.
\subparagraph*{Convexity of Images} A black and white image, represented by a function $f:S \mapsto \{0,1\}$ for a subset $S$ of $[n]^2$, is convex if and only if for every pair of points $u,v \in S$ such that $f(u) = f(v) = 1$, every point $t \in S$ on the line joining $u$ and $v$  satisfy $f(t) =1$. Convexity is an extendable property.
Testing whether an image, represented by a function $f:[n]^2 \mapsto \{0,1\}$, is convex has been studied by Berman, Murzabulatov and Raskhodnikova~\cite{BermanMR15b}. The authors of~\cite{BermanMR15b} give a one-sided error uniform $\eps$-tester for this property that makes $O(1/\eps^{4/3})$ uniform queries. Their proofs go through even if the domain of $f$ is an arbitrary subset of $[n]^2$.
The corollary now follows by applying~\Thm{ext-transform-main} to the tester in~\cite{BermanMR15b}.
\begin{corollary}
There is an $\alpha$-erasure-resilient $\eps$-tester for convexity of black and white images that makes $O\left(\frac{1}{(1-\alpha)\eps^{4/3}}\right)$ uniform queries, where $\eps \in (0,1/2)$, $\alpha \in (0,1)$.
\end{corollary}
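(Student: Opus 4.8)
The plan is to derive this corollary as an application of~\Thm{ext-transform-main}, using the convexity-of-images tester of Berman, Murzabulatov and Raskhodnikova~\cite{BermanMR15b} as the black-box uniform tester~$T$. Fix $\cD=[n]^2$ and, for every $\cS\subseteq[n]^2$, let $\cP_\cS$ be the set of convex black-and-white images on~$\cS$; the target property is $\cP_{[n]^2}$. The argument has three steps: (i) verify that $\bigcup_{\cS\subseteq\cD}\cP_\cS$ is extendable in the sense of~\Def{ext-prop}; (ii) verify that the tester of~\cite{BermanMR15b} meets the hypotheses placed on $T$ in~\Thm{ext-transform-main}; and (iii) read off the resulting query bound.

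For step~(i) I would use the structural description of convex images---the $1$-pixels of a convex image on $\cS$ are exactly the points of $\cS$ that lie in some convex set $C\subseteq\R^2$---and check the two bullets of~\Def{ext-prop}. For the first bullet, given a convex $f$ on $\cS$ with witnessing convex set $C$, take the extension $f'$ on $\cT\supseteq\cS$ whose $1$-pixels are the points of $\cT$ in that same $C$: then $f'$ is convex on $\cT$ by the same description, and it agrees with $f$ on $\cS$ because $(\cT\cap C)\cap\cS=\cS\cap C$. For the second bullet, if $f'\in\cP_\cT$ has $1$-pixel set $\cT\cap C$, then $f'_{|\cS}$ has $1$-pixel set $\cS\cap C$ and so $f'_{|\cS}\in\cP_\cS$; hence any $f$ that is $\eps$-far from $\cP_\cS$ disagrees with $f'_{|\cS}$---and therefore with $f'$---on at least an $\eps$ fraction of $\cS$. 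Getting the extension in the first bullet to agree with $f$ on all of $\cS$ is the one spot that genuinely needs care, so I expect step~(i) to be the main obstacle; steps~(ii) and~(iii) are essentially bookkeeping.

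For step~(ii), recall from~\cite{BermanMR15b}---whose analysis, as noted above, carries over verbatim when the domain is an arbitrary $\cS\subseteq[n]^2$---that there is a uniform one-sided error $\eps$-tester for convexity of images on $\cS$ that makes $q(|\cS|,\eps)=O(1/\eps^{4/3})$ queries. This $q$ is independent of $|\cS|$, hence nondecreasing in its first argument; it is decreasing in $\eps$, hence nonincreasing in its second argument; and, since the tester only scans its sampled points for a violation of the convexity condition, additional uniform samples can only increase the rejection probability on far inputs while never causing a rejection of a convex image, so its success probability is nondecreasing in the number of queries (this also lets us pad the query count up to any fixed constant that the proof of~\Thm{ext-transform-main} needs, absorbed into the $O(\cdot)$). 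For step~(iii), applying~\Thm{ext-transform-main} with $\cD=[n]^2$ now produces a uniform one-sided error $\alpha$-erasure-resilient $\eps$-tester for $\cP_{[n]^2}$ making $O\!\left(q(n^2,\eps)/(1-\alpha)\right)=O\!\left(\frac{1}{(1-\alpha)\eps^{4/3}}\right)$ queries for all $\alpha\in(0,1)$, with $\eps\in(0,1/2)$ inherited from the statement of the tester in~\cite{BermanMR15b}. This is exactly the claimed bound.
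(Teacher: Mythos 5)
Your proposal matches the paper's proof exactly: both derive the corollary by applying \Thm{ext-transform-main} to the uniform one-sided error tester of Berman, Murzabulatov and Raskhodnikova, observing that its analysis goes through on arbitrary subsets $\cS\subseteq[n]^2$ with $q(|\cS|,\eps)=O(1/\eps^{4/3})$. The only difference is that the paper simply asserts ``Convexity is an extendable property,'' whereas you supply the verification of \Def{ext-prop}; note that you (correctly) do so using the global characterization ``$1$-pixels are $\cS\cap C$ for some convex $C\subseteq\R^2$'' rather than the pairwise chord condition the paper states, and this is the right choice since on arbitrary subsets of $[n]^2$ the chord condition alone need not yield an extendable property.
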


\subparagraph*{Monotonicity over poset domains} A real-valued function $f$ defined on a partially ordered domain is monotone if the function values respect the order relation of the poset. Monotonicity is an extendable property. The tester by Fischer et al.~\cite{FLNRRS02} samples $O(\sqrt{N/\eps})$ points uniformly at random and checks for violations to monotonicity among them. The corollary follows by applying~\Thm{ext-transform-main} to this tester.
\begin{corollary}
There is an $\alpha$-erasure-resilient uniform $\eps$-tester for monotonicity of real-valued functions over $N$ element posets that makes $O\left(\frac{1}{(1-\alpha)}\cdot\sqrt{\frac{N}{\eps}}\right)$ queries, where $\alpha \in (0,1)$.
\end{corollary}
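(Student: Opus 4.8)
The plan is to derive the corollary directly from \Thm{ext-transform-main}, exactly as was done just above for convexity of images, taking the poset monotonicity tester of Fischer et al.~\cite{FLNRRS02} as the base tester $T$. Set $\cD$ to be the given $N$-element poset, let $\cP_\cS$ be the property ``monotone on $\cS$'' for each induced subposet $\cS \subseteq \cD$, and put $q(n,\eps) = c\sqrt{n/\eps}$ for the absolute constant $c$ hidden in the $O(\sqrt{N/\eps})$ bound of~\cite{FLNRRS02}. Then \Thm{ext-transform-main} produces a uniform one-sided error $\alpha$-erasure-resilient $\eps$-tester for $\cP_\cD$ making $O(q(N,\eps)/(1-\alpha)) = O\!\left(\frac{1}{1-\alpha}\sqrt{N/\eps}\right)$ queries, which is exactly the claimed bound. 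It remains to check the three hypotheses of the theorem.

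First I would verify that $\bigcup_{\cS\subseteq\cD}\cP_\cS$ is extendable in the sense of \Def{ext-prop}. For the extension direction, given $\cS\subseteq\cT$ and a monotone $f:\cS\to\R$, define $f'(x)=\max\{f(s): s\in\cS,\ s\preceq x\}$ for $x\in\cT$, where the maximum over the empty set is taken to be any fixed value $v_0\le\min_{s\in\cS}f(s)$ (the maximum is attained since $\cS$ is finite). If $x\preceq y$ then $\{s\in\cS: s\preceq x\}\subseteq\{s\in\cS: s\preceq y\}$, so $f'(x)\le f'(y)$, and monotonicity of $f$ on $\cS$ gives $f'(s)=f(s)$ for all $s\in\cS$; hence $f'\in\cP_\cT$ extends $f$. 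For the distance direction, if $f:\cS\to\R$ is $\eps$-far from $\cP_\cS$, then for every $f'\in\cP_\cT$ the restriction of $f'$ to $\cS$ is monotone on $\cS$, and therefore differs from $f$ on at least an $\eps$ fraction of the points of $\cS$; hence so does $f'$.

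Second I would record the properties of the base tester needed by \Thm{ext-transform-main}. The tester of~\cite{FLNRRS02} is stated for arbitrary posets, so for every induced subposet $\cS\subseteq\cD$ it samples $q(|\cS|,\eps)=O(\sqrt{|\cS|/\eps})$ points uniformly and independently and accepts iff it finds no violating pair among them, with one-sided error. Moreover $q(n,\eps)=c\sqrt{n/\eps}$ is nondecreasing in $n$ and nonincreasing in $\eps$, and the success probability of the tester is nondecreasing in the number of samples: a monotone function is always accepted, and for an $\eps$-far function a larger sample is only more likely to contain a violating pair. With all three hypotheses in place, \Thm{ext-transform-main} applies and the corollary follows.

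There is no real obstacle here; the statement is essentially a verification, and the only point that takes a moment of care is the extendability check for arbitrary (non-lattice) posets --- specifically, producing a monotone extension that correctly handles the points of $\cT$ lying below all of $\cS$ or incomparable to most of it. The ``$\max$ with a sufficiently small default'' construction above handles this, and the $\eps$-farness direction is immediate from the fact that restrictions of monotone functions are monotone.
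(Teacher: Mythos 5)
Your proposal is correct and takes essentially the same approach as the paper: apply \Thm{ext-transform-main} to the uniform poset-monotonicity tester of~\cite{FLNRRS02}. The paper simply asserts that monotonicity is extendable, whereas you verify it explicitly via the $\max$-over-lower-set extension with a small default value; that verification is sound and fills in a detail the paper leaves to the reader.
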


\subparagraph*{Boolean functions with $k$-runs} 
A function $f : [n] \mapsto \{0,1\}$ has {\em $k$ runs} if the list $f(1),f(2),\dots, f(n)$  has at most $k-1$ alternations of values. The problem is to test whether a given function $f:[n] \mapsto \{0,1\}$ has $k$ runs or is $\eps$-far from this property. Kearns and Ron~\cite{KR00} studied a relaxation of this problem. Specifically, they showed that $O(1/\eps^2)$ queries suffice to test whether a Boolean function has $k$ runs or is $\eps$-far from being a $k/\eps$-run function. They also developed a uniform $O(\sqrt{k}/\eps^{2.5})$-query tester for this relaxation and proved that every uniform $\eps$-tester for the $k$-run property requires $\Omega(\sqrt{k})$ queries. Balcan et al.~\cite{BBBY12} obtained a $O(1/\eps^4)$-query tester for this property in the active testing model. They also developed a uniform $O(\sqrt{k}/\eps^6)$-query tester\footnote{Both~\cite{KR00} and~\cite{BBBY12} study Boolean functions over $[0,1]$. We note that their algorithms will also work for Boolean functions over $[n]$.}. We show the following. 
\begin{theorem}\label{thm:krun}
For $\eps > k^2/n$, we can $\eps$-test if a Boolean function over $[n]$ has at most $k$ runs using $O\left(\min\left\{\frac{k\cdot \log k}{\eps},\frac{\sqrt{k}}{\eps^6}\right\}\right)$ uniform and independent queries. 
\end{theorem}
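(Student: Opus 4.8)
The plan is to use a single non-adaptive ``sample and check'' tester and to analyze it by two different combinatorial arguments, one giving the $O(k\log k/\eps)$ bound and the other the $O(\sqrt{k}/\eps^6)$ bound. The tester $T$ picks $m$ points of $[n]$ uniformly and independently, sorts them as $x_{(1)} < \dots < x_{(m)}$, and \textbf{accepts} if and only if the string $f(x_{(1)}),\dots,f(x_{(m)})$ has at most $k-1$ alternations (equivalently, $f$ restricted to the sampled set has at most $k$ runs). Since passing to a subdomain with the induced order can only decrease the number of runs, $T$ always accepts a function with at most $k$ runs, so it has one-sided error and the whole task is soundness: if $f$ is $\eps$-far from having at most $k$ runs, then with the stated $m$ the sorted sample has at least $k$ alternations with probability at least $2/3$. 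The role of the hypothesis $\eps > k^2/n$ will be to guarantee that the combinatorial structure witnessing $\eps$-farness is ``wide'' (its relevant intervals have more than $k$ points), so that it is hit by few uniform queries.

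The combinatorial heart is the following structural lemma: \emph{if $f:[n]\to\{0,1\}$ is $\eps$-far from having at most $k$ runs, then $[n]$ contains $k$ pairwise disjoint consecutive intervals $I_1 \prec \dots \prec I_k$, each containing at least $\tau := \Theta(\eps n/k)$ indices with value $0$ and at least $\tau$ indices with value $1$.} I would prove the contrapositive by a left-to-right greedy decomposition: scan $[n]$ and close off a new interval the moment it has accumulated $\tau$ indices of each value. If fewer than $k$ such ``rich'' intervals are produced, then $[n]$ is partitioned into at most $k-1$ rich intervals plus a ``poor'' tail; by the stopping rule each rich interval has exactly $\tau$ indices of its rare value and the poor tail has fewer than $\tau$ indices of some value, so recoloring every piece to its majority value changes fewer than $k\tau = \Theta(\eps n) \le \eps n$ indices (with $\tau$ chosen with the right constant; $\eps > k^2/n$ ensures $\tau \ge 1$, in fact $\tau > k$) and produces a function with at most $k$ monochromatic runs, contradicting $\eps$-farness.

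Given the lemma, the $O(k\log k/\eps)$ bound is a routine sampling argument. With $m = \Theta(k\log k/\eps)$ uniform samples, the expected number landing inside any fixed $I_j$ is at least $m\tau/n = \Theta(\log k)$, so by a Chernoff bound and a union bound over the $k$ intervals and the two values, with probability at least $2/3$ the sample contains both a $0$-valued and a $1$-valued index inside every $I_j$. Because each $I_j$ is a contiguous block of $[n]$, the sampled points inside $I_j$ form a contiguous block of the sorted sample and therefore contribute at least one alternation; the $k$ intervals being disjoint, this yields at least $k$ alternations in the sorted sample, so $T$ rejects. (The few small values of $k$ not covered by these estimates are handled directly; e.g.\ $k=1$ is just testing whether $f$ is constant.)

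For the $O(\sqrt{k}/\eps^6)$ bound, demanding a hit in \emph{every one} of $\Omega(k)$ intervals is wasteful, and I would instead adapt the birthday-paradox/collision analysis of Balcan et al.~\cite{BBBY12}: sharpen the structural lemma to produce many \emph{short} bi-chromatic windows (equivalently, exploit that an $\eps$-far function has $\Omega(k)$ ``substantial'' runs) and show that $\Theta(\sqrt{k}/\eps^{O(1)})$ uniform samples already place, with constant probability, two (or a few) sampled points inside a common short window together with enough surrounding sampled points to exhibit more than $k$ alternations locally; it is exactly here that $\eps > k^2/n$ is needed, to make each such window wide enough ($\Omega(k)$ points) that the relevant collision probability is $\Omega(1)$ rather than negligible. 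One then boosts the success probability of $T$ below $1/3$ error by a constant number of independent repetitions and runs whichever of the two samplings is cheaper, giving the claimed $\min\{\,k\log k/\eps,\ \sqrt{k}/\eps^6\,\}$. I expect the $\sqrt{k}$ analysis to be the main obstacle: unlike the clean union bound behind the first bound, it requires a delicate second-moment/collision computation and a finer structural statement about how the run lengths of an $\eps$-far function are distributed.
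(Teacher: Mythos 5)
Your $O(k\log k/\eps)$ analysis is essentially the paper's: the paper runs the same sort-and-check tester on $\Theta(k\log k/\eps)$ uniform samples, and its soundness argument builds, by a left-to-right greedy scan, a sequence $S_1 \prec \dots \prec S_{k+1}$ of disjoint, monochromatic, value-alternating subsets of $[n]$, each of size $\lceil n\eps/(k+1)\rceil$, and then union-bounds over the event of missing one of them to conclude that the sorted sample has at least $k$ alternations. Your variant --- $k$ bichromatic intervals each holding $\Omega(\eps n/k)$ indices of both values --- is a cosmetic repackaging of the same greedy decomposition, and the hypothesis $\eps > k^2/n$ plays the same bookkeeping role in both (keeping the greedy pieces large enough that the total recoloring cost stays below $\eps n$). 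Where your plan genuinely diverges from the paper is the $O(\sqrt{k}/\eps^6)$ term: the paper does not re-derive it at all --- that figure is simply the query complexity of the \emph{existing} uniform $k$-run tester of Balcan et al.~\cite{BBBY12}, invoked as a black box --- and the $\min$ in the theorem statement is realized by running whichever of the two testers is cheaper for the given $k$ and $\eps$. So you are right that re-proving a collision-style $\sqrt{k}$ bound would be the hard part of your plan, but there is nothing to prove: just cite the existing tester, as the paper does.
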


\begin{algorithm}
\caption{\textsc{Tester for $k$-run Boolean functions} ($\varepsilon$,$k$,$f:\left[n\right] \mapsto \{0,1\}$)}
\label{alg:modal-tester}
\begin{algorithmic}[1]
\State Query the values at $\frac{3(k+1)\cdot\log (k+1)}{\eps}$ points uniformly and independently at random.
\State \textbf{Reject} if the values of $f$ at these points alternate $k$ or more times with respect to the ordering on the domain; \textbf{accept} otherwise.
\end{algorithmic}
\end{algorithm}

Our tester for being a $k$-run function is given in~\Alg{modal-tester}. It always accepts a function $f$ that has at most $k$ runs. The following lemma implies~\Thm{krun}.
\begin{lemma}
If $f$ is $\eps$-far from being a $k$-run function,~\Alg{modal-tester} rejects with probability at least $2/3$.
\end{lemma}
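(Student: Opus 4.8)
The plan is to show that when $f$ is $\eps$-far from being a $k$-run function, a uniform and independent sample of $N=\frac{3(k+1)\log(k+1)}{\eps}$ points, read off in the order of their positions, exhibits at least $k$ alternations with probability at least $2/3$; together with the already-noted fact that \Alg{modal-tester} never rejects a $k$-run function, this proves the lemma. The argument has two ingredients: a purely combinatorial lower bound on the number of alternations in a sorted sample, and a structural lemma about functions far from $k$-runs.

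\emph{Combinatorial bound.} For any partition of $[n]$ into consecutive intervals $J_1 \prec J_2 \prec \cdots \prec J_t$, the sorted sample decomposes as a concatenation of ``blocks'', where block $i$ consists of the sampled points lying in $J_i$. The number of alternations in the whole sorted sequence is at least the sum over $i$ of the number of alternations internal to block $i$ (the cross-block pairs only contribute more), and the number of alternations internal to block $i$ is at least $1$ whenever block $i$ contains both a sampled point of value $0$ and one of value $1$. Hence the number of alternations in the sorted sample is at least the number of intervals $J_i$ that receive both a sampled $0$ and a sampled $1$. So it suffices to exhibit a partition of $[n]$ into $k$ consecutive intervals and show that, with probability at least $2/3$, each of them is ``doubly hit'' in this sense.

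\emph{Structural lemma.} I would prove: if $f$ is $\eps$-far from $k$-runs, then $[n]$ can be partitioned into $k$ consecutive intervals $J_1,\dots,J_k$ each containing at least $\tfrac{\eps n}{k+1}$ points of value $0$ and at least $\tfrac{\eps n}{k+1}$ points of value $1$. Let $\tau=\lceil \tfrac{\eps n}{k+1}\rceil$. Scan $[n]$ from the left and greedily carve off the shortest prefix of the unprocessed suffix that contains $\tau$ points of each value; stop when the remaining suffix can no longer supply $\tau$ of some value, and call that suffix the leftover. By minimality, each carved interval has exactly $\tau$ points of its minority value, so recoloring it by its majority value costs exactly $\tau$; and the leftover has fewer than $\tau$ points of its minority value. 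If fewer than $k$ full intervals get carved, then recoloring all carved intervals and the leftover by their majorities produces a function with at most $k$ runs and at cost less than $k\tau\le \eps n$, contradicting $\eps$-farness; hence at least $k$ full intervals are produced, and we take the first $k$ of them, enlarging the $k$-th to absorb the tail (which only increases its counts of each value). I expect this to be the main obstacle to get exactly right: the intervals that serve as witnesses want \emph{large} minority mass, whereas a cheap $k$-run approximation wants intervals with \emph{small} minority mass, and the crux is that failure to build $k$ of the former forces existence of the latter; one must also check the base cases ($f$ far from $k$-runs forces $\eps\le 1/2$ and a minority mass exceeding $\tau$, so the greedy builds at least one interval).

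\emph{Probabilistic step.} With $J_1,\dots,J_k$ as above, each $J_i$ has at least $\tfrac{\eps n}{k+1}$ points of each value, so the probability that none of the $N$ independent uniform samples lands among the $0$'s of $J_i$ (or among the $1$'s) is at most $\bigl(1-\tfrac{\eps}{k+1}\bigr)^{N}\le e^{-\eps N/(k+1)}=(k+1)^{-3}$. A union bound over the $2k$ events ``some color is missed in some $J_i$'' gives failure probability at most $2k(k+1)^{-3}\le \tfrac14<\tfrac13$, so with probability at least $2/3$ all $k$ intervals are doubly hit, the sorted sample has at least $k$ alternations, and \Alg{modal-tester} rejects. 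The hypothesis $\eps>k^2/n$ is used only to keep the sample size $N$ sublinear (and $\tau$ a meaningful threshold); it is not otherwise needed for correctness.
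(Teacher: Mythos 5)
Your proof is correct, and it takes a genuinely different route from the paper's, even though both are built around a greedy carving plus a recoloring contradiction. The paper constructs $k+1$ pairwise-disjoint \emph{monochromatic} sets $S_1,\dots,S_{k+1}$, each of size $\lceil \eps n/(k+1)\rceil$, lying in increasing position ranges and carrying \emph{alternating} values; a sample that hits all $k+1$ of them is forced to alternate at least $k$ times, and the union bound is over $k+1$ events, each of probability at most $(k+1)^{-3}$. You instead produce $k$ \emph{consecutive bichromatic intervals} $J_1,\dots,J_k$ partitioning $[n]$, each guaranteed to contain at least $\lceil \eps n/(k+1)\rceil$ zeros and as many ones; your combinatorial observation that the total alternation count dominates the sum of the intra-block alternation counts then gives $\geq k$ alternations once every $J_i$ is ``doubly hit,'' and the union bound is over $2k$ events. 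The two witness structures are different (monochromatic alternating buckets vs.\ a genuine interval partition), and your decomposition argument for counting alternations is a clean alternative to the paper's position-ordering argument. Your structural lemma and greedy are the right shape, and your contradiction does go through: with $t<k$ carved intervals the recoloring cost is at most $t\tau+(\tau-1)\le k\tau-1$, and since $\tau\le \eps n/(k+1)+1$ and $\eps>k^2/n$ one gets $k\tau-1<\eps n$. Two small points to tighten in a write-up: (i) phrase the cost bound as ``strictly less than $\eps n$'' (your draft says ``less than $k\tau\le \eps n$,'' but $k\tau\le\eps n$ need not hold on the nose; the needed inequality is $k\tau-1<\eps n$, which does follow from $\eps n>k^2$ and uses the strict $<\tau$ coming from the leftover); and (ii) say explicitly that after recoloring, adjacent carved intervals and the leftover may coincide in color, but that only reduces the number of runs, so the output still has at most $t+1\le k$ runs. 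Neither of these is a gap — the paper's own calculation needs the same slack from $\eps>k^2/n$.
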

\begin{proof}
For $j \in [n]$ and $b \in \{0,1\}$, let $T_{b,j}$ denote the set consisting of the smallest $\lceil n\cdot\eps/(k+1) \rceil$ points in the set $\{x: j \le x \le n \text{ and } f(x) = b\}$, that is, the set of points between $j$ and $n$ where $f$ takes the value $b$. For a set $S \subseteq [n]$, let $\max(S)$ denote the largest element in $S$.
We will first describe a process to construct a few disjoint subsets of $[n]$ with some special properties.
\begin{itemize}
\item Let $S_1 = T_{b,1}$ such that $\max(T_{b,1}) < \max(T_{1-b,1})$.
\item For $i \ge 2$, the sets $S_i$ are defined as follows. Let the value that $f$ takes on the elements in $S_{i-1}$ be $b$ and let $j = \max(S_{i-1})$. Set $S_i = T_{1-b,j+1}$. Stop if $\max(S_i) = n$ or $S_i = \emptyset$.
\end{itemize}

The sets that this process constructs have the following properties. All $S_i$'s are subsets of $[n]$. Each point in $S_{i+1}$ is larger than every point in $S_{i}$ for all $i$. The function $f$ takes the same value on all points in $S_i$ for all $i$. The value of $f$ on points in $S_{i+1}$ is the complement of the value of $f$ on points in $S_{i}$ for  all $i$.

Next, we show that our process constructs sets $S_1,S_2,\ldots S_{k+1}$ each of size $\lceil n\cdot\eps/(k+1) \rceil$, if $f$ is $\eps$-far from satisfying the property. Let the process construct nonempty sets $S_1,S_2,\ldots S_t$. Assume for the sake of contradiction that $t \le k$. Let $S'_1 =\{x: 1 \le x \le \max(S_1)\}$. Let $S'_{i} = \{x: \max(S_{i-1}) < x \le \max(S_i)\}$ for all $1 < i \le t$. Note that for all $i \in [t]$, if $f$ takes the value $b$ on elements in $S_i$, then $f$ takes the value $1 -b$ on elements in $S'_{i} \setminus S_i$. We will describe a function $f'$ that has at most $k$ runs.
Set the values of $f'$ on each $x \in S'_1\setminus S_1$ to the value that $f$ takes on $S_1$. For each $1 < i \le t$, set the values of $f'$ on $S_i$ to the value of $f$ on $S'_{i} \setminus S_i$. On the rest of the points, $f'$ takes  the same value as $f$. We will now show that $f'$ has at most $k$ alternating intervals. The function $f'$ takes the same value on points in $S'_1 \cup S'_2$. Also, for each $1 < i \le t$, the function $f'$ is constant on $S'_i$. Thus, $f'$ has at most $t$ runs. Also, $f'$ differs from $f$ in at most $t \cdot \lceil n\cdot\eps/(k+1) \rceil \le k \cdot \lceil n\cdot\eps/(k+1) \rceil \le n\eps$ points, for $k < \sqrt{n\eps}$. This is a contradiction.

Using the fact that $k+1$ such subsets exist, we show that the tester will detect a violation with high probability. For a particular $i$, the probability that none of the points selected by the algorithm lie in $S_i$ is at most
$$\left(1 - \eps/(k+1)\right)^{3(k+1)\log (k+1)/\eps}\le 1/(k+1)^3.$$Therefore, by a union bound, the probability that there exists an $i$ such that none of the points selected by the algorithm lies in $S_i$ is at most $(k+1)^{-2}<1/3$ for $k\ge 1$.
\end{proof}

Since the property of being a $k$-run function is extendable, applying~\Thm{ext-transform-main} to~\Thm{krun} yields the following corollary.
\begin{corollary}
For $\eps > k^2/n$ and $\alpha \in (0,1)$, we can $\alpha$-erasure-resilient $\eps$-test if a Boolean function over $[n]$ has at most $k$ runs using $O\left(\frac{1}{1-\alpha}\cdot\min\left\{\frac{k\cdot \log k}{\eps},\frac{\sqrt{k}}{\eps^6}\right\}\right)$ uniform queries.
\end{corollary}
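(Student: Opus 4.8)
The plan is to obtain this as an immediate consequence of the generic transformation \Thm{ext-transform-main}, applied to the two uniform testers behind \Thm{krun}: \Alg{modal-tester}, which yields the $O(k\log k/\eps)$ term, and the uniform $O(\sqrt k/\eps^6)$-query tester of Balcan et al.~\cite{BBBY12}, which yields the other term. Before invoking \Thm{ext-transform-main} I would check two things: (i) that the $k$-run property is extendable in the sense of \Def{ext-prop}, so that $\bigcup_{\cS\subseteq[n]}\cP_\cS$ makes sense and \Lem{ext-prop-fact} applies; and (ii) that each of the two base testers meets the remaining hypotheses of \Thm{ext-transform-main} --- it is uniform, it has one-sided error, its query complexity $q(\cdot,\cdot)$ is nondecreasing in the domain size and nonincreasing in $\eps$, and its probability of a correct answer does not drop when it is given extra samples. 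Granting these, \Thm{ext-transform-main} converts each base tester into an $\alpha$-erasure-resilient $\eps$-tester making $O(q(n,\eps)/(1-\alpha))$ uniform queries; running whichever of the two is cheaper for the given $k$ and $\eps$ gives the bound with the minimum.

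For extendability, fix $\cS\subseteq\cT\subseteq[n]$, and let $\cP_\cS$ consist of the $\{0,1\}$-valued functions on $\cS$ with at most $k$ runs in the order inherited from $[n]$. Given $f\in\cP_\cS$, extend it to $\cT$ by giving each new point the value of the nearest point of $\cS$ to its left, and points lying left of all of $\cS$ the value $f(\min\cS)$; the extension is constant between consecutive points of $\cS$, so it has exactly as many value alternations as $f$ and hence lies in $\cP_\cT$. Conversely, deleting points from a function with at most $k$ runs can only merge runs, so if $f'\in\cP_\cT$ then $f'_{|\cS}\in\cP_\cS$; thus, if $f$ on $\cS$ is $\eps$-far from $\cP_\cS$, any $f'\in\cP_\cT$ disagrees with $f$ on at least an $\eps$ fraction of $\cS$. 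This gives both bullets of \Def{ext-prop}. The remaining hypotheses are routine: \Alg{modal-tester} samples uniformly and independently, accepts every $k$-run function with probability $1$, has query complexity $3(k+1)\log(k+1)/\eps$ which is (trivially) monotone as required, and only benefits from extra samples because adding points to an ordered sample cannot decrease the number of value alternations among them; moreover, the lemma establishing its correctness goes through verbatim with $[n]$ replaced by any subdomain $\cS$ regarded as a shorter line. The same checks apply to the $O(\sqrt k/\eps^6)$-query tester of~\cite{BBBY12}, using the remark (in the footnote) that it also works for Boolean functions on $[n]$ and its subsets.

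I do not expect a genuine obstacle; the one place that needs care is matching the domain-size side condition. \Thm{ext-transform-main} runs the base tester only on the set $\cN$ of nonerased points, so the base tester must be a correct tester of $\cP_\cN$, and the correctness proof of \Alg{modal-tester} needs $k^2 < |\cN|\cdot\eps$ rather than $k^2 < n\eps$. Since $f$ is $\alpha$-erased, $|\cN|\ge(1-\alpha)n$, so this holds provided $\eps > k^2/((1-\alpha)n)$; I would either state the corollary with this slightly stronger hypothesis or observe that in the relevant regime (with $\alpha$ bounded away from $1$, which is forced anyway by the $1/(1-\alpha)$ factor in the bound) it is absorbed into constants. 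Everything else is a mechanical invocation of \Thm{ext-transform-main}.
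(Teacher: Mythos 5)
Your proof takes the same route as the paper: the paper's proof is the single sentence that the $k$-run property is extendable, so the corollary follows by applying \Thm{ext-transform-main} to \Thm{krun}. Your detailed verification of extendability (extend by copying the value of the nearest left neighbor, and observe that deleting points can only merge runs) is correct, and your observation that the side condition should properly read $\eps > k^2/|\cN|$, i.e.\ $\eps > k^2/((1-\alpha)n)$, rather than $\eps > k^2/n$, is a legitimate subtlety that the paper's terse proof glosses over.
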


\section{Erasure-Resilient Monotonicity Tester for the Line}
\label{sec:mono-line}
In this section, we prove~\Thm{line-tester}. Recall that, for a function $f:[n] \mapsto \R \cup \{\perp\}$, the set of nonerased points (the ones that map to $\R$) is denoted by $\cN$. The function $f$ is monotone if $x < y$ implies $f(x) \le f(y)$ for all $x,y \in \cN$. The tester does not know $\cN$ in advance.

We present our tester in~\Alg{line-test}. It has oracle access to $f$ and takes $\alpha$ and $\eps$ as inputs. In each iteration, it performs a randomized binary search for a nonerased index sampled uniformly at random (u.a.r.) from $\cN$ and rejects if it finds violations to monotonicity.
In the description of our tester, we use $I[i,j]$ to denote the set of natural numbers from $i$ until and including $j$.
We alternatively refer to it as the interval from $i$ to $j$.

\begin{algorithm}
\caption{Erasure-Resilient Monotonicity Tester for the Line}
\label{alg:line-test}
\begin{algorithmic}[1]
\State \textbf{Set} $Q = \left\lceil \frac{60\log n}{\eps(1-\alpha)}\right\rceil$.
\State \textbf{Accept} at any point if the number of queries exceeds $Q$.
\Loop\label{step:iteration-mono-line} $2/\varepsilon$ times: \label{step:mon-er}
\State Sample points uniformly at random from $I[1,n]$ and query them until we get a point $s\in \cN$.
\State \textbf{Set} $\ell \leftarrow 1$, $r \leftarrow n$.
\While{$\ell \le r$}
\State  Sample points uniformly at random from $I[\ell,r]$ and query them until we get a point $m\in \cN$.
\If {$s < m$} \textbf{set} $r \gets m - 1$ and \textbf{Reject} if $f(s) > f(m)$.
\EndIf
\If {$s > m$} \textbf{set} $\ell \gets m + 1$ and \textbf{Reject} if $f(s) < f(m)$.
\EndIf
\If{$s = m$} Go to~\Stp{mon-er}. \Comment{Search completed.}
\EndIf
\EndWhile
\EndLoop
\State \textbf{Accept}.
\end{algorithmic}
\end{algorithm}
Every iteration of~\Alg{line-test} can be viewed as a traversal of a uniformly random search path in a uniformly random binary search tree defined on the set $\cN$ of nonerased points.
Given a binary search tree $T$ over $\cN$, we associate every node of $T$
with a unique sub-interval $I$ of $I[1,n]$ as follows. The root of $T$ is associated with $I[1,n]$.
Suppose the interval associated with a node $\Gamma$ in $T$ that contains $s \in \cN$ is $I[i,j]$.
Then the interval associated with the left child
of $\Gamma$ is $I[i,s-1]$ and the interval associated with the right child of $\Gamma$ is $I[s+1,j]$.
A search path is a path from the root to some node $\Gamma$ of $T$.

If $f$ is $\eps$-far from monotone, we prove that, with high probability, the tester finds a violation.
It is easy to prove this, using a generalization of an argument from~\cite{EKK+00}, for the case when \Alg{line-test} manages to complete all iterations of~\Stp{iteration-mono-line} before it runs out of queries.
The challenge is that the algorithm might get stuck pursuing long paths in a random search tree and waste many queries on erased points. To resolve the issue of many possible queries to erased points,
we prove an upper bound on the expected number of queries made while traversing a uniformly random search path in a binary search tree on $\cN$. We combine this with the fact that the expected depth of a random binary search tree is $O(\log n)$ to obtain the final bound on the probability that the algorithm exceeds its query budget.

\subsection{Analysis}
We analyze the tester in this section. The query complexity of the tester is clear from its description.
The main statement of~\Thm{line-tester} follows from \Lem{line-test-corr-1}, proved next.
\begin{lemma}\label{lem:line-test-corr-1}
\Alg{line-test} accepts if $f$ is monotone, and rejects with probability at least $2/3$ if $f$ is $\eps$-far from monotone.
\end{lemma}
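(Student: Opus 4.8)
The plan is to prove the two directions of \Lem{line-test-corr-1} separately: soundness (the one-sided error guarantee that monotone functions are always accepted) is immediate, since the tester only ever rejects when it discovers a pair $s,m\in\cN$ with $s<m$ but $f(s)>f(m)$ (or the symmetric case), which is a genuine violation of monotonicity on $\cN$; so a monotone $f$ is never rejected regardless of the random choices. The substance of the lemma is the completeness direction: if $f$ is $\eps$-far from monotone on $\cN$, then with probability at least $2/3$ \Alg{line-test} rejects before exhausting its query budget $Q$. I would split this into two sub-claims and combine them by a union bound: (i) \emph{conditioned on the algorithm not running out of queries}, each of the $2/\eps$ iterations independently finds a violation with probability at least (roughly) $\eps/2$, so all iterations failing has probability at most $(1-\eps/2)^{2/\eps}< 1/3$ — actually I would aim for a slightly better constant so that the final union bound clears $1/3$; and (ii) the probability that the algorithm ever exceeds $Q$ queries is small, say at most a small constant like $1/10$.

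For sub-claim (i), the key structural fact (a generalization of the argument from~\cite{EKK+00}) is: call $s\in\cN$ a \emph{searchable} index if the randomized binary search for $s$, run on the actual function $f$, terminates at $s$ without encountering any violation — equivalently, if restricting $f$ to the set of searchable indices yields a monotone function. If $f$ is $\eps$-far from monotone on $\cN$, then at least an $\eps$ fraction of points of $\cN$ are non-searchable: otherwise deleting the non-searchable points and extending would give a monotone restoration changing fewer than an $\eps$ fraction of $\cN$, a contradiction. Here one has to be slightly careful, because "searchable" depends on the random binary search tree; the clean way is to fix any realization of the random pivot choices along the search and argue that for \emph{every} fixed binary search tree $T$ on $\cN$, the set of indices whose root-to-node search path in $T$ is violation-free forms a monotone subsequence, hence has size $> (1-\eps)|\cN|$ is false, i.e.\ $\ge \eps|\cN|$ indices are "bad" for $T$. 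Since in each iteration the tester samples $s$ uniformly from $\cN$ and then builds the tree, with probability $\ge \eps$ the sampled $s$ is bad for the resulting tree and the tester rejects in that iteration. Running $2/\eps$ independent iterations drives the overall failure probability below a small constant.

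For sub-claim (ii), I would invoke \Clm{erasure-density} (the bound on the expected number of queries to traverse a uniformly random search path in a binary search tree over the nonerased points of an $\alpha$-erased array), together with the standard fact that a uniformly random binary search tree on $m$ points has expected root-to-leaf path length $O(\log m)$, hence a uniformly random search path has expected length $O(\log n)$. Combining these, the expected number of queries (counting wasted queries on erased points) in one iteration is $O\bigl(\tfrac{\log n}{1-\alpha}\bigr)$ — one factor from path length, one $1/(1-\alpha)$ factor from the density bound — and over $2/\eps$ iterations the expected total is $O\bigl(\tfrac{\log n}{\eps(1-\alpha)}\bigr)$, plus the $O(1/(1-\alpha))$ expected cost of the initial sampling of $s$ per iteration. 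Since $Q=\lceil 60\log n/(\eps(1-\alpha))\rceil$ is a large enough constant multiple of this expectation, Markov's inequality gives that the probability of exceeding $Q$ is at most, say, $1/10$. (If I wanted the cleaner constant I could instead note the total is a sum of per-iteration quantities and apply Markov after bounding the full expectation; the independence across iterations is not even needed here.)

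A union bound over the event "some iteration found a violation, conditioned on not exceeding $Q$" failing and the event "exceeded $Q$" then gives rejection probability at least $1 - 1/3 - 1/10 > 2/3$ — and by tuning the iteration count or the constant in $Q$ one gets a clean $2/3$. The main obstacle I anticipate is sub-claim (ii), specifically getting the $1/(1-\alpha)$ dependence rather than something worse: a naive bound would charge, for each pivot interval visited, a geometric number of samples with expectation $1/(\text{local density of }\cN)$, and if a visited interval happens to be almost entirely erased this blows up. The resolution — and the heart of \Clm{erasure-density} — is that a \emph{uniformly random} search point is biased toward intervals with many nonerased points, so the bad (erasure-heavy) intervals are visited with correspondingly small probability, and the expectation telescopes to something proportional to (depth)$/(1-\alpha)$. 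I would lean entirely on \Clm{erasure-density} for this and keep the rest of the argument to the two-part union bound described above.
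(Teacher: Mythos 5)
Your proposal matches the paper's proof essentially step for step: the same decomposition $\Pr[A]\le\Pr[A\mid q\le Q]+\Pr[q>Q]$, the same notion of searchable indices (from the~\cite{EKK+00} argument) giving an $\eps$ fraction of bad search points for every binary search tree, and the same combination of~\Clm{erasure-density} with the $O(\log n)$ expected depth of a random BST followed by Markov's inequality to bound $\Pr[q>Q]$. The only caveat is your initial ``$\eps/2$'' per-iteration success probability, which you later correct to $\eps$ (as in the paper); the $\eps/2$ version would give $(1-\eps/2)^{2/\eps}\le e^{-1}>1/3$ and would not close the union bound without more iterations or a tighter bound on $\Pr[q>Q]$.
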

\begin{proof}
The tester accepts whenever $f$ is monotone.
To prove the other part of the lemma, assume that $f$ is $\eps$-far from monotone. Let $A$ be the event that the tester
accepts $f$. Let $q$ denote the total number of queries made. We prove that $\Pr[A] \le 1/3$.
The event $A$ occurs if either $q>Q$ or the tester does not find a violation in any of the $2/\eps$ iterations of~\Stp{mon-er}. Thus, $\Pr\left[A\right]\le \Pr\left[A|q\le Q\right]+\Pr\left[q> Q\right].$

First we bound the probability that the tester does not find a violation in one iteration of~\Stp{mon-er}, conditioned on the event that $q \le Q$.
Consider an arbitrary binary search tree $T$
defined over points in $\cN$. A point $s \in \cN$ is called {\em searchable} with respect to $T$ if~\Alg{line-test} does not detect a violation to monotonicity while traversing
the search path to $s$ in $T$. Consider two indices $i,j \in \cN$, where $i < j$, both
searchable with respect to $T$. Let $a \in \cN$ be the pivot corresponding to the lowest common ancestor of the leaves
containing $i$ and $j$. Since $i$ and $j$ are both {\em searchable}, it must be the case that $f(i) < f(a)$ and $f(a) < f(j)$ and
hence, $f(i) < f(j)$. Thus, for every tree $T$, the function restricted to the domain points that are {\em searchable} with respect to $T$ is monotone.
Therefore, if $f$ is $\eps$-far from monotone, for every binary search tree $T$, at least an $\eps$-fraction of the
points in $\cN$ are not {\em searchable}. Thus, the tester detects a violation with probability $\eps$ in each iteration.
Consequently, $\Pr\left[A|q\le Q\right]\le (1-\eps)^{\frac{2}{\eps}} < 1/4. $

In the rest of the proof, we bound $\Pr[q > Q]$. We state and prove a claim that bounds the expected number of queries to traverse a search path, for every binary search tree. Recall that a search path in a search tree $T$ is a path from the root to some node in $T$.
Let $I$ be an interval associated with a node $v$ of $T$ and let $\alpha_I$ denote the fraction of erased points in $I$. The number of queries to be made to sample a nonerased point from $I$ with uniform sampling is a geometric random variable with expectation $1/(1-\alpha_I)$. We define the {\em query-weight} of node $v$ to be this expectation. The query-weight of a search path is the sum of query-weights of the nodes on the path (which is the expected number of queries that the algorithm makes while traversing that path).
\begin{claim}\label{clm:erasure-density}
Consider an arbitrary binary search tree $T$ on $\cN$ of height $h$. The expected query-weight of a uniformly random search path in $T$ is at most $h/(1-\alpha)$.
\end{claim}
\begin{proof}
There are exactly $|\cN|$ search paths in $T$. Let $S$ denote the sum of query-weights of all the search paths. The expected query-weight is equal to $S/|\cN|$.

Consider a node $v$ in $T$ associated with an interval $I$. There are $|I|(1-\alpha_I)$ nonerased points in $I$. The search paths from the root of $T$ to all these nonerased points pass through $v$, and hence, the query-weight of $v$ gets added to the query-weights of all of those paths. Therefore, the total contribution of $v$ towards $S$ is $|I|$, since the query-weight of $v$ is $1/(1-\alpha_I)$.
Note that the intervals associated with nodes at the same level of $T$ are disjoint from each other. Therefore, the total contribution to $S$ from all nodes on the same level of $T$ is at most $n$. Hence the value of $S$ is at most $n\cdot h$. Observe that this quantity is independent of the fraction of erasures $\alpha$. Therefore, the expected query-weight of a search path is at most $n\cdot h/|\cN|$, which is at most $h/(1-\alpha)$, since $|\cN| \ge n\cdot(1-\alpha)$.
\end{proof}

\noindent We will next see a fact on the expected depth of a uniformly random binary search tree and combine it with the above claim to prove the required bound on the expected query-weight of a uniformly random search path in a uniformly random binary search tree.
\begin{claim}[\cite{R03}]\label{clm:exp-rbst}
If $H_n$ is the random variable denoting the height of a random binary search tree on $n$ nodes, then $\E[H_n]\le 5\log n$.
\end{claim}
\begin{corollary}\label{cor:exp-quer-rbst}
The expected number of queries made by~\Alg{line-test} to traverse a uniformly random search path in a uniformly random binary search tree on $\cN$ is at most $5\log n/(1-\alpha)$.
\end{corollary}
By linearity of expectation, the expected number of queries made by the tester over all its iterations is at most $10\log n/(\eps\cdot(1-\alpha))$. Applying Markov's inequality to $q$, we can then see that $\Pr[q > Q] \le 1/6$.
Therefore, the probability of the tester not finding a violation is at most $1/3$. This completes the proof of the lemma.
\end{proof}

\section{Erasure-Resilient Monotonicity Testers for the Hypergrid}\label{sec:BDP}
In this section, we present our erasure-resilient tester for monotonicity over hypergrid domains and
prove the following theorem, which is a special case of~\Thm{hyp-tester-BDP}. We present the 
erasure-resilient testers for general BDPs in~\Sec{app-bdp}.
\begin{theorem}\label{thm:mon-hyp}
There exists a one-sided error $\alpha$-erasure-resilient $\eps$-tester for monotonicity of real-valued functions on the hypergrid $[n]^d$ that works for all $\alpha,\eps \in (0,1),$ where $\alpha \le \eps/250d$, with query complexity $O(\frac{d\log n}{\eps(1-\alpha)})$.
\end{theorem}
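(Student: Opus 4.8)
The plan is to run $T=\Theta(d/\eps)$ iterations of a single step of the line tester of \Alg{line-test} on a randomly chosen axis-parallel line, analyzing it by combining an erasure-resilient version of the dimension reduction of~\cite{CDJS15} with the query-weight bounds behind \Thm{line-tester}. Concretely, in each iteration the tester (i) samples points of $[n]^d$ uniformly at random until it obtains a nonerased point $x\in\cN$; (ii) picks a direction $r\in[d]$ uniformly at random and lets $\ell$ be the axis-parallel line through $x$ in direction $r$; and (iii) performs one randomized binary search for $x$ over $\ell\cap\cN$, exactly as in the inner loop of \Alg{line-test} (resampling inside subintervals of $\ell$ until a nonerased pivot is found), rejecting upon detecting a monotonicity violation. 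As in \Alg{line-test}, the total number of queries is capped at $Q=O\!\left(\frac{d\log n}{\eps(1-\alpha)}\right)$ and the tester accepts if the cap is exceeded; one-sided error is then immediate. The structural observation driving both parts of the analysis is that, writing $L_r$ for the set of axis-parallel lines in direction $r$, the line $\ell$ is chosen with probability $\frac{|\ell\cap\cN|}{d|\cN|}$ (so that $\sum_{r\in[d]}\sum_{\ell\in L_r}\frac{|\ell\cap\cN|}{d|\cN|}=1$), and that conditioned on this choice $x$ is distributed uniformly over $\ell\cap\cN$ and independently of the binary search tree built during step (iii).

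For the query complexity, I would bound the cost of one iteration conditioned on $\ell$ by applying \Clm{erasure-density} and \Cor{exp-quer-rbst} with $\ell\cap\cN$ in place of $\cN$: the expected number of queries spent traversing the search path to $x$ in a uniformly random binary search tree over $\ell\cap\cN$ is at most $\frac{5\,n\log n}{|\ell\cap\cN|}$, the key erasure-oblivious point being that the sum of query-weights over all search paths in any such tree of height $h$ is at most $n\cdot h$ regardless of how the erasures sit on $\ell$. Averaging over $\ell$, the factors $|\ell\cap\cN|$ cancel: the expected per-iteration cost is at most $\sum_{r\in[d]}\sum_{\ell\in L_r}\frac{|\ell\cap\cN|}{d|\cN|}\cdot\frac{5\,n\log n}{|\ell\cap\cN|}=\frac{5\,n^d\log n}{|\cN|}\le\frac{5\log n}{1-\alpha}$, plus the expected $\frac{1}{1-\alpha}$ queries spent obtaining $x$. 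Over $T=\Theta(d/\eps)$ iterations this totals $O\!\left(\frac{d\log n}{\eps(1-\alpha)}\right)$ in expectation, so by Markov's inequality the probability of exceeding $Q$ is at most, say, $1/6$.

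Soundness is the heart of the argument and rests on an erasure-resilient dimension reduction. Suppose $f$ is $\eps$-far from monotone; by the definitions this means that the restoration $\hat f:[n]^d\to\R$ minimizing the number of points that must be changed to become monotone satisfies $\mathrm{dist}(\hat f,\mathrm{mono})\ge\eps|\cN|$, where $\mathrm{dist}$ counts domain points. Applying the (erasure-free) dimension reduction of~\cite{CS13a,CDJS15} to $\hat f$ gives $\sum_{r\in[d]}\sum_{\ell\in L_r}\mathrm{dist}\big(\hat f|_\ell,\mathrm{mono}\big)=\Omega\big(\mathrm{dist}(\hat f,\mathrm{mono})\big)=\Omega(\eps|\cN|)$. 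I then transfer this to $f|_\cN$ line by line: since a monotone function on a subchain of a chain extends to a monotone function on the whole chain, $\mathrm{dist}(\hat f|_\ell,\mathrm{mono})\le\mathrm{dist}(f|_{\ell\cap\cN},\mathrm{mono})+|\ell\setminus\cN|$ for every line $\ell$; and since for each fixed $r$ the lines in $L_r$ partition $[n]^d$, we have that $\sum_{\ell\in L_r}|\ell\setminus\cN|$ equals the number of erased points, which is at most $\alpha n^d$. Summing over $r$ and $\ell$ yields $\sum_{r\in[d]}\sum_{\ell\in L_r}\mathrm{dist}(f|_{\ell\cap\cN},\mathrm{mono})\ge\Omega(\eps|\cN|)-d\alpha n^d$, which is $\Omega(\eps n^d)$: indeed $|\cN|\ge(1-\alpha)n^d$ and $\alpha\le\eps/250d$ make $d\alpha n^d\le\frac{\eps}{250}n^d$ a lower-order term, the constant $250$ being chosen to dominate the dimension-reduction constant and absorb this loss. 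Dividing by $d|\cN|\le dn^d$, and using the fact that a single step of the line tester run on a line whose nonerased restriction is $\mu$-far from monotone, with search point uniform over its nonerased points, detects a violation with probability at least $\mu$ — precisely the ``at least an $\eps$-fraction of nonerased points are not searchable'' fact from the proof of \Lem{line-test-corr-1} — I conclude that each iteration of the hypergrid tester detects a violation with probability $\frac{1}{d|\cN|}\sum_{r\in[d]}\sum_{\ell\in L_r}\mathrm{dist}(f|_{\ell\cap\cN},\mathrm{mono})=\Omega(\eps/d)$. Hence over $T=\Theta(d/\eps)$ iterations the probability of never detecting a violation is at most $1/6$, which together with the budget-overflow bound gives overall acceptance probability at most $1/3$.

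The main obstacle is this erasure-resilient dimension reduction, in particular two points: turning ``$\eps$-far on the nonerased part'' into a clean lower bound on $\mathrm{dist}(\hat f,\mathrm{mono})$ for the optimal restoration $\hat f$, and controlling how erasures on $[n]^d$ disperse over axis-parallel lines — the number of erased points gets charged once per direction, for a total additive loss of $d\alpha n^d$, which is exactly why the restriction $\alpha=O(\eps/d)$ is needed and why \Sec{limitation-example-hypercube} shows some restriction of this form is unavoidable for any tester following the ``sample an axis-parallel line'' paradigm. A secondary subtlety is to keep the (nonerased-weighted) distribution over lines identical in the query-complexity and soundness computations; it is this weighting that cancels the $|\ell\cap\cN|$ factors in both. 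Finally, obtaining the full \Thm{hyp-tester-BDP} for arbitrary bounded-derivative properties, rather than just monotonicity, additionally requires composing this argument with an erasure-resilient reduction from testing a BDP on the line to testing monotonicity on the line; the hypergrid structure of the proof is otherwise unchanged.
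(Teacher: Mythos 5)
Your argument is correct, but it is not quite the same as the paper's — you actually construct a slightly different tester. In the paper's \Alg{hyp-mon-tester}, the line $\ell$ is sampled uniformly from $\cL$ (so $\Pr[\ell]=1/(dn^{d-1})$ for every line) and \emph{then} a nonerased search point is drawn from $\ell$. You instead draw a nonerased point $x\in\cN$ and a direction $r$ first, which induces the nonerased-weighted distribution $\Pr[\ell]=|\ell\cap\cN|/(d|\cN|)$ on lines. This change cascades through both halves of the analysis, in your favor on the query-budget side. For the query budget, the paper must condition on the event that no sampled line has $\alpha_\ell>\alpha/\eta$ (invoking \Lem{era-exp-mon} plus a Markov and union-bound argument), and in fact the restriction $\alpha\le\eps/250d$ is also used there to guarantee $\eta>\alpha$; your weighted sampling makes the $|\ell\cap\cN|$ factors cancel outright, giving a per-iteration expected cost of $5n^d\log n/|\cN|\le 5\log n/(1-\alpha)$ with no conditioning and no restriction on $\alpha$ in this half, and it sidesteps the degenerate case of a line with no nonerased points. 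For soundness, the paper isolates \Lem{exp-far-mon} (giving $\E_{\ell\sim\cL}[\eps_\ell]\ge(1-\alpha)\eps_f/4d-\alpha$ under the uniform line distribution), while you directly lower-bound $\sum_{r,\ell}\mathrm{dist}(f|_{\ell\cap\cN},\mathrm{mono})$ by applying the erasure-free dimension reduction to an optimal restoration $\hat f$ and then charging the erased points once per direction; this is essentially the same calculation as the paper's \Clm{dim-red}, and your $d\alpha n^d$ loss term is exactly the paper's $\alpha d n^d$ term. Both approaches end at the same bound; yours has the advantage of localizing the need for $\alpha=O(\eps/d)$ entirely in the soundness argument and of not needing the erasure-concentration lemma at all, at the small cost of the two analyses (query budget and soundness) having to use the same nonerased-weighted line distribution, which you correctly flag as a subtlety.
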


 Let $\cL$ denote the set of all {\em axis-parallel lines} in the hypergrid. Our monotonicity
tester, which is described in~\Alg{hyp-mon-tester}, samples an {\em axis-parallel line}
uniformly at random in each iteration and does a randomized binary search for a uniformly randomly sampled nonerased
point on that line. It rejects if and only if a violation to monotonicity is found within its query budget.
To analyze the tester, we first state two important properties of a uniformly random axis-parallel line
in~\Lem{exp-far-mon} and~\Lem{era-exp-mon}, which we jointly call the erasure-resilient dimension reduction.
 The statements and proofs of more general versions of these lemmas, applicable to all BDPs,
 are given in~\Sec{app-bdp}.

\begin{lemma}[Dimension reduction: distance]\label{lem:exp-far-mon}
Let $\eps_f$ be the relative Hamming distance of an $\alpha$-erased function $f:[n]^d \mapsto \R \cup \{\perp\}$ from monotonicity. Given an axis-parallel line $\ell\in \cL$, let $f_\ell :[n] \mapsto \R \cup \{\perp\}$ denote the restriction of $f$ to $\ell$ and  let $\eps_{\ell}$ denote the relative Hamming distance of $f_\ell$ from monotonicity. Then $\E_{\ell\sim \cL}[\eps_{\ell}]\ge (((1-\alpha)\cdot\eps_f)/4d)-\alpha.$
\end{lemma}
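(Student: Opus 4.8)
The plan is to prove the dimension-reduction distance lemma by relating the distance of $f$ from monotonicity on $[n]^d$ to a sum of line-distances, following the structure of the classical dimension reduction for monotonicity (e.g.\ \cite{CDJS15}), but carefully accounting for erased points. First I would recall the standard fact: if $\eps_f'$ denotes the relative Hamming distance of a (fully-defined) function $g:[n]^d\to\R$ from monotonicity, then $\E_{\ell\sim\cL}[\text{dist of }g_\ell\text{ from monotone}] \ge \eps_f'/(4d)$; this is the dimension-reduction inequality, proved by repairing one coordinate at a time and using that one can monotonize each axis-parallel line independently along a fixed direction while changing at most the sum of the individual line-distances, losing only a factor related to $d$ and a factor of $4$ from the union over directions and from the averaging. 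I would state this as the known ``clean'' dimension reduction and then transfer it to the erased setting.

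The key step is to pass from the erased function $f$ to an auxiliary fully-defined function and control the discrepancy introduced by erasures. Let $\cN\subseteq[n]^d$ be the nonerased set, so $|\cN|\ge(1-\alpha)n^d$. By \Lem{ext-prop-fact} (monotonicity is extendable), $f_{|\cN}$ is $\eps_f$-far from monotone as a function on $\cN$, i.e.\ every monotone function on $\cN$ disagrees with $f_{|\cN}$ on at least $\eps_f|\cN|\ge \eps_f(1-\alpha)n^d$ points. Now I would argue that to monotonize $f_{|\cN}$, it suffices to monotonize the restriction of $f$ to each nonerased point of each axis-parallel line: fixing any direction $r$, the lines in direction $r$ partition $[n]^d$, and we may monotonize along direction $r$ on the nonerased points of each such line independently. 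Hence the number of nonerased points that must be changed to make $f_{|\cN}$ monotone in direction $r$ is exactly $\sum_{\ell\parallel e_r}(\text{edit distance of }f_\ell\text{ on its nonerased points from monotone})$. Summing the chain of repairs over the $d$ directions and invoking the standard dimension-reduction bound (applied to an extension of $f_{|\cN}$, or directly in its combinatorial form) gives
\[
\eps_f(1-\alpha)\,n^d \;\le\; 4\,\sum_{\ell\in\cL}\bigl(\text{abs.\ dist.\ of }f_\ell\text{ from monotone on }\cN\cap\ell\bigr).
\]
Here I would be slightly careful: the dimension-reduction argument of \cite{CDJS15} is stated for functions on all of $[n]^d$, so I would either (i) extend $f_{|\cN}$ to a function on $[n]^d$ that is ``at least as far'' from monotone and whose line-restrictions are not much farther from monotone than $f_\ell$, or (ii) observe that the dimension-reduction proof only uses repairs along lines and thus goes through verbatim on the poset $\cN$ with the induced order. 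Option (ii) is cleaner and is what \Sec{app-bdp} presumably does in generality.

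Finally I would convert absolute line-distances into relative ones and normalize. For a fixed line $\ell$ with $|\cN\cap\ell|$ nonerased points, the absolute distance of $f_\ell$ from monotone (counting only nonerased points) is $|\cN\cap\ell|$ times the relative distance, and $|\cN\cap\ell|\le n$. If $\eps_\ell$ denotes the relative Hamming distance of $f_\ell:[n]\to\R\cup\{\perp\}$ from monotonicity \emph{over all $n$ points} (as in the statement), then the absolute distance over nonerased points is at most $n\eps_\ell$ plus at most the number of erased points on $\ell$, since an erased point contributes at most $1$ to the ``over all $n$ points'' count but $0$ to the nonerased count --- so actually the nonerased absolute distance is at most $n\eps_\ell$. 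Plugging in, $\eps_f(1-\alpha)n^d \le 4 n\sum_{\ell}\eps_\ell = 4n\cdot|\cL|\cdot\E_{\ell}[\eps_\ell]$, and since $|\cL| = d\,n^{d-1}$, this rearranges to $\E_\ell[\eps_\ell]\ge \eps_f(1-\alpha)/(4d)$, which is even stronger than claimed; the extra $-\alpha$ in the statement is the slack absorbing the reverse direction (bounding the relative ``over all $n$ points'' distance $\eps_\ell$ by the nonerased relative distance plus $\alpha_\ell\le$ something, or the fact that erased points on a line can be up to all of it so one must subtract their contribution), so I would insert that correction where the line-count bound $|\cN\cap\ell|\ge (1-\alpha)\cdot\text{(line length)}$ fails for individual lines and is only true on average. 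The main obstacle I anticipate is precisely this bookkeeping step: erasures are not spread evenly among lines, so I cannot assume each line is $(1-\alpha)$-nonerased; I expect to handle it by working with absolute distances and summing, invoking $\sum_\ell (\text{erased points on }\ell) = \alpha n^d$ only once at the end, which is exactly where the additive $-\alpha$ term comes from.
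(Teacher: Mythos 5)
Your high-level plan (apply the clean CDJS15 dimension reduction, then account for erasures) is the paper's plan, but your execution has a real flaw. Your preferred option~(ii)---that the dimension reduction ``goes through verbatim on the poset $\cN$ with the induced order''---is false, and the paper's own \Lem{limit} is a counterexample: on $\{0,1\}^d$ there is an $\alpha$-erased function with $\alpha=\Theta(\eps/\sqrt d)$ that is $\eps$-far from monotone on its nonerased points yet has $\eps_\ell=0$ for every axis-parallel line. The inequality you derive, $\E_\ell[\eps_\ell]\ge(1-\alpha)\eps_f/(4d)$ (which you yourself flag as ``even stronger than claimed''), would force a strictly positive lower bound there, so it cannot hold; the $-\alpha$ is not a slack you can smuggle in during normalization but a necessary correction. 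Your attempt to locate it in a bound like $|\cN\cap\ell|\ge(1-\alpha)(\text{line length})$ is also misplaced: \Lem{exp-far} needs only the trivial upper bound $|\cN_\ell|\le n$, which goes in the favorable direction and requires no correction.

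The fix is your option~(i), made precise as in \Clm{dim-red}. Take the restoration $f_*$ that agrees with $f$ on $\cN$ and, on erased points, with the monotone $g:[n]^d\to\R$ closest to $f_{|\cN}$. Then $\text{dist}(f,\cP)\le\text{dist}(f_*,\cP)$, so one may apply the CDJS15 inequality $\tfrac14\,\text{dist}(f_*,\cP)\le\sum_{i=1}^d\text{dist}(f_*,\cP^i)$ to the fully-defined $f_*$. To return to line-distances of $f$, bound $\text{dist}(f_*,\cP^i)\le\text{dist}(f_*,f_*^i)\le\text{dist}(f,\cP^i)+\alpha n^d$, where $f_*^i$ is the function in $\cP^i$ closest to $f$ on $\cN$ and the $\alpha n^d$ term pays for the (up to all) erased points on which $f_*$ and $f_*^i$ may disagree. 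Summing over the $d$ directions gives $\tfrac14\,\text{dist}(f,\cP)\le\sum_i\text{dist}(f,\cP^i)+\alpha d n^d$, and normalizing as in \Lem{exp-far} (using $|\cN_\ell|\le n$ and $|\cN|\ge(1-\alpha)n^d$) produces exactly the $-\alpha$. This is where the total count $\alpha n^d$ of erasures actually enters---not, as you guessed, in a per-line or averaging step.
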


\begin{lemma}[Dimension reduction: fraction of erasures]\label{lem:era-exp-mon}
Consider an $\alpha$-erased function \mbox{$f:[n]^d \mapsto \R\cup\{\perp\}$}. Given $\ell\in \cL$,
let $\alpha_{\ell}$ denote the fraction of erased points in $\ell$. Then, for every $\eta \in(0,1)$,
we have, $\Pr_{\ell\sim\cL}[\alpha_\ell>\alpha/\eta]\le \eta.$
\end{lemma}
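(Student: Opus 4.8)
The plan is to prove this as a straightforward averaging (Markov-type) argument over the random choice of axis-parallel line. First I would set up the counting: the hypergrid $[n]^d$ is partitioned, for each fixed dimension $r \in [d]$, into $n^{d-1}$ axis-parallel lines parallel to $\be_r$; across all $d$ dimensions this gives $|\cL| = d \cdot n^{d-1}$ lines, and every point of $[n]^d$ lies on exactly $d$ of them (one per dimension). Consequently, summing the number of erased points over all lines $\ell \in \cL$ counts each of the at most $\alpha n^d$ erased points exactly $d$ times, so $\sum_{\ell \in \cL} \alpha_\ell \cdot n \le d \cdot \alpha n^d$, which gives $\E_{\ell \sim \cL}[\alpha_\ell] = \frac{1}{|\cL|}\sum_{\ell} \alpha_\ell \le \alpha$.

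Second, I would apply Markov's inequality to the nonnegative random variable $\alpha_\ell$: for any $\eta \in (0,1)$,
\[
\Pr_{\ell \sim \cL}\!\left[\alpha_\ell > \alpha/\eta\right] \;\le\; \frac{\E_{\ell \sim \cL}[\alpha_\ell]}{\alpha/\eta} \;\le\; \frac{\alpha}{\alpha/\eta} \;=\; \eta,
\]
which is exactly the claimed bound. (If $\alpha = 0$ the statement is trivial, so one may assume $\alpha > 0$ when dividing.)

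There is essentially no hard step here — the only thing to be careful about is the bookkeeping in the first paragraph, namely confirming that $\E_{\ell \sim \cL}[\alpha_\ell] \le \alpha$ when $\cL$ is taken to be the uniform distribution over all $d \cdot n^{d-1}$ axis-parallel lines. This holds because choosing a uniformly random line and then a uniformly random point on it yields a uniformly random point of $[n]^d$ (each point is equally likely, by symmetry across dimensions and positions), so $\E_{\ell \sim \cL}[\alpha_\ell]$ equals the overall erasure fraction, which is at most $\alpha$ by the definition of an $\alpha$-erased function. The main (minor) obstacle is just to state this reduction cleanly; once it is in place, Markov's inequality finishes the proof in one line. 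I would remark that the same averaging identity — a uniformly random axis-parallel line through a uniformly random point on it is a uniformly random point — is what underlies the companion distance lemma~\ref{lem:exp-far-mon}, so stating it explicitly here is worthwhile.
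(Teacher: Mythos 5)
Your proof is correct and follows essentially the same route as the paper: observe that drawing a uniformly random axis-parallel line and then a uniformly random point on it yields a uniformly random point of $[n]^d$, deduce $\E_{\ell\sim\cL}[\alpha_\ell]\le\alpha$, and apply Markov's inequality. The only cosmetic difference is that you expand the sampling observation into an explicit double-counting argument (each point lies on exactly $d$ lines), whereas the paper states the two-stage sampling identity directly; both are the same argument.
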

\begin{algorithm}
\caption{Erasure-Resilient Monotonicity Tester for $[n]^d$}
\label{alg:hyp-mon-tester}
\begin{algorithmic}[1]
\Require parameters $\eps \in (0,1), \alpha \in [0,\eps/250d]$; oracle access to $f:\left[n\right]^d \rightarrow \mathbb{R}$
\State \textbf{Set} $Q = \lceil\frac{1200d\cdot\log n}{\eps(1-\alpha)}\rceil$.

\Loop $~\frac{12d}{\eps(1-\alpha) - 4d\alpha}$ times:
\State  Sample a line $\ell \in \cL$ uniformly at random.
\State Sample and query points u.a.r. from $\ell$ and query them until we get a point $s \in \cN$.
\State Perform a randomized binary search for $s$ on $\ell$ as in~\Alg{line-test}.
\State \textbf{Reject} if any violation to monotonicity is found.

\EndLoop
\State \textbf{Accept} at any point if the number of queries exceed $Q$.
\end{algorithmic}
\end{algorithm}
The query complexity of the tester is evident from its description. We will now prove its correctness in the following lemma, which will then imply~\Thm{mon-hyp}.
\begin{lemma}\label{lem:correctness-hyp-mon}
\Alg{hyp-mon-tester} accepts if $f$ is monotone, and rejects with probability at least $2/3$ if $f$ is $\eps$-far from monotone.
\end{lemma}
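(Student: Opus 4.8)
The plan is to bound the acceptance probability of \Alg{hyp-mon-tester} in the far case by splitting the failure event as in the proof of \Lem{line-test-corr-1}: either the query budget $Q$ is exceeded, or no violation is found in any of the iterations. For the query-budget part, I would first apply \Lem{era-exp-mon} with a suitable choice of $\eta$ (say $\eta$ a small constant, or $\eta = 1/4$) to argue that with high probability a sampled line $\ell$ has erasure fraction $\alpha_\ell$ bounded by $\alpha/\eta$, and in particular bounded away from $1$. Conditioned on that, \Cor{exp-quer-rbst} (applied to the line $\ell$, which has at most $n$ points) bounds the expected number of queries for one randomized binary search by $O(\log n/(1-\alpha_\ell)) = O(\log n/(1-\alpha))$ after absorbing the constant $\eta$. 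Multiplying by the number of iterations $\Theta(d/(\eps(1-\alpha)-4d\alpha)) = \Theta(d/(\eps(1-\alpha)))$ (using $\alpha \le \eps/250d$ to see the denominator is $\Theta(\eps(1-\alpha))$), linearity of expectation gives expected total queries $O(d\log n/(\eps(1-\alpha)))$, and Markov against the threshold $Q = \lceil 1200 d\log n/(\eps(1-\alpha))\rceil$ makes $\Pr[q>Q]$ a small constant such as $1/6$.

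For the ``no violation found'' part, I would use the dimension reduction for distance, \Lem{exp-far-mon}: if $f$ is $\eps$-far from monotone then $\E_{\ell\sim\cL}[\eps_\ell] \ge ((1-\alpha)\eps/4d) - \alpha$, which is positive and $\Theta(\eps/d)$ given the hypothesis $\alpha \le \eps/250d$. Since $\eps_\ell \in [0,1]$ always, a reverse-Markov (averaging) argument shows that a random line is ``good'' — meaning $f_\ell$ is, say, $\big(\tfrac12(\E[\eps_\ell])\big)$-far from monotone on its nonerased points — with probability at least $\Omega(\eps/d)$. On a good line, the single randomized binary search performed in that iteration detects a violation with probability at least (the far-ness parameter), by exactly the ``searchable points'' argument inside \Lem{line-test-corr-1}: the points of $\cN\cap\ell$ that are searchable with respect to the random binary search tree form a monotone subsequence, so at least an $\eps_\ell$ fraction of them are non-searchable, and we sampled the search target uniformly from $\cN\cap\ell$. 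Combining, each iteration independently finds a violation with probability $\Omega(\eps/d)\cdot\Omega(\eps/d)$... wait — more carefully, each iteration finds a violation with probability at least $\Pr[\ell \text{ good}]\cdot(\text{far-ness on a good line}) = \Omega(\eps^2/d^2)$; that would force too many iterations, so instead I would not split the expectation in half but argue directly that $\E_{\ell}\big[\Pr[\text{iteration }i\text{ finds a violation}\mid \ell]\big] \ge \E_\ell[\eps_\ell] = \Omega(\eps/d)$, since conditioned on $\ell$ the detection probability is at least $\eps_\ell$. Hence one iteration fails to find a violation with probability at most $1-\Omega(\eps/d)$, and over $\Theta(d/(\eps(1-\alpha)))$ independent iterations the failure probability drops below, say, $1/6$; here one must be careful to get the constant $12d/(\eps(1-\alpha)-4d\alpha)$ right so that $12d/(\eps(1-\alpha)-4d\alpha)$ times $\E[\eps_\ell]\ge(1-\alpha)\eps/4d-\alpha$ exceeds a large enough constant like $2$, which is exactly why the iteration count has that form.

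Finally, a union bound over the two failure modes gives total acceptance probability at most $1/6+1/6 = 1/3$ in the far case, while the one-sided (perfect completeness) direction is immediate: if $f$ is monotone, every $f_\ell$ is monotone on $\cN\cap\ell$, so the searchable-points argument shows no violation is ever found and the tester accepts (it only ever rejects on a witnessed violation). I expect the main obstacle to be the query-budget bound: one must correctly handle the conditioning from \Lem{era-exp-mon} (a sampled line could, with small probability, have $\alpha_\ell$ close to $1$, contributing a large number of queries to the expectation), so the cleanest route is to observe that even in the worst case a single randomized binary search on a line of $|\cN\cap\ell|$ nonerased points makes at most $\Theta(\log n)\cdot \max\text{-interval-length}$ queries in expectation via \Clm{erasure-density} applied with $\alpha_\ell$, and that $\E_\ell[1/(1-\alpha_\ell)]$ is $O(1/(1-\alpha))$ by the same contribution-counting as in \Clm{erasure-density} lifted to dimension $d$ — essentially re-running the argument of \Clm{erasure-density} with ``intervals on a random line'' playing the role of ``intervals,'' which is precisely the more general statement deferred to \Sec{app-bdp}. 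Modulo that bookkeeping, the proof is a direct combination of \Lem{exp-far-mon}, \Lem{era-exp-mon}, \Clm{erasure-density}, \Clm{exp-rbst}, and the searchable-points analysis from \Lem{line-test-corr-1}.
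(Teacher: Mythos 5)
Your decomposition and the ``no violation found'' half match the paper: after self-correcting, you argue that each iteration detects a violation with probability at least $\E_\ell[\eps_\ell]\ge\Omega(\eps/d)$ via \Lem{exp-far-mon} and the searchable-points argument from \Lem{line-test-corr-1}, and the iteration count converts this to constant failure probability. The gap is in bounding $\Pr[q>Q]$, and neither of your two proposed routes closes it. In the first, taking $\eta$ to be a constant (say $1/4$) in \Lem{era-exp-mon} does not let you ``multiply by the number of iterations'' via linearity of expectation: conditioned on a single bad line, the expected number of sampling queries is $1/(1-\alpha_\ell)$, which is unbounded as $\alpha_\ell\to 1$, so you are forced to condition on the \emph{global} event $G$ that all $t=\Theta(d/\eps)$ sampled lines have $\alpha_\ell\le\alpha/\eta$. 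A union bound gives only $\Pr[\overline{G}]\le t\eta$, which exceeds $1$ for constant $\eta$ once $t\ge 4$. The paper's choice $\eta=1/(10t)$ is exactly what makes the union bound give $\Pr[\overline{G}]\le 1/10$, and the hypothesis $\alpha\le\eps/250d$ is what keeps $\alpha/\eta=10t\alpha$ bounded away from $1$ so that $\E[q_i\mid G]$ is finite.

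Your fallback — that $\E_\ell\bigl[1/(1-\alpha_\ell)\bigr]=O\bigl(1/(1-\alpha)\bigr)$ by lifting the contribution counting of \Clm{erasure-density} — is simply false. For $d\ge 2$ the adversary can erase entire axis-parallel lines: erasing $\alpha n^{d-1}$ full lines along one dimension uses exactly an $\alpha$ fraction of $[n]^d$, yet makes $\alpha_\ell=1$ on a fraction $\alpha/d$ of all lines, so $\E_\ell[1/(1-\alpha_\ell)]$ is infinite. The cancellation in \Clm{erasure-density} works because each node's query-weight $1/(1-\alpha_I)$ is multiplied by the number $|I|(1-\alpha_I)$ of search paths through that node; here the tester samples lines uniformly, not in proportion to $|\cN_\ell|=n(1-\alpha_\ell)$, so no analogous cancellation occurs. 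What \Sec{app-bdp} actually supplies is \Lem{era-exp}, the Markov-type tail bound $\Pr_\ell[\alpha_\ell>\alpha/\eta]\le\eta$, not the expectation bound you want. In short, the conditioning route with $\eta=\Theta(1/t)$ is not merely the cleanest route — it is the one that works, and it is the one the paper takes.
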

\begin{proof}
The tester accepts if $f$ is monotone. So, assume that $f$ is $\eps$-far from being monotone.
Let $A$ denote the event that the tester does not find a violation to monotonicity in any of its iterations.
If $q$ denotes the total number of queries made by the tester, we have, $\Pr[A] \le \Pr[A | q\le Q] + \Pr[q >Q].$

Let $t$ denote the number of iterations of the tester. Let $A_i$ denote the event that the tester does not find a violation
in its $i$-th iteration.
For $\ell \in \cL$, let $f_\ell$ denote $f$ restricted to the line $\ell$. Let $\eps_\ell$ denote
the relative Hamming distance of $f_\ell$ from monotonicity.
We have, $\Pr[A_i | q \le Q] = \sum_{\ell \in \cL} (1 - \eps_\ell)\Pr[\ell] = 1 - \E_{\ell \sim \cL} [\eps_\ell]$. By~\Lem{exp-far-mon} and the fact that $\eps_f \ge \eps$, we have,
$\E_{\ell \sim \cL} [\eps_\ell] \ge \frac{(1-\alpha)\cdot\eps_f}{4d}-\alpha \ge \frac{(1-\alpha)\cdot\eps}{4d}-\alpha.$
Therefore,
$$\Pr[A | q \le Q] = \prod_{i = 1}^{t} \Pr[A_i | q \le Q] \le \left(1 - \frac{(1-\alpha)\cdot\eps-4d\alpha}{4d}\right)^t < \frac{1}{10}. $$

It now remains to bound $\Pr[q > Q]$. Let $\eta$ stand for $1/10t$. Let $\alpha_i$ denote the fraction of erasures in the line sampled
during iteration $i$ and let $q_i$ denote the
number of queries made by the algorithm during iteration $i$. Let $G$ denote the
(good) event that $\alpha_i \le \alpha/\eta$ for all iterations $i \in [t]$. By~\Cor{exp-quer-rbst}, $\E[q_i | G] \le 5\eta\cdot\log n/(\eta - \alpha)$, and
by the linearity of expectation, $\E[q|G] \le \log n/(2(\eta - \alpha)) \le 120d\log n/(\eps(1-\alpha))$, where the last inequality follows from our assumption that $\alpha \le \eps/250d$. Using Markov's inequality, $\Pr[q >Q | G] \le 1/10.$ Also, by combining~\Lem{era-exp-mon} with a union bound, we can see that $\Pr[\overline{G}] \le 1/10$. Therefore, $\Pr[q > Q] \le \Pr[q > Q | G] + \Pr[\overline{G}] \le 1/5.$
\end{proof}

\section{Erasure-Resilient BDP Testing}~\label{sec:app-bdp}
In this section, we discuss our erasure-resilient testers for all bounded derivative properties over hypergrid domains
and prove~\Thm{hyp-tester-BDP}.
First, we show in~\Lem{line-BDP} that testing for any BDP on $[n]$ reduces to testing
 monotonicity on $[n]$. Next, we prove~\Lem{exp-far} and~\Lem{era-exp} that reduces the problem of erasure-resilient testing of a BDP over
hypergrid domains to testing of the same property over the line.

\subsection{Erasure-Resilient BDP Tester for the Line}\label{sec:BDP-reduction}

In \Lem{line-BDP}, we show that (erasure-resilient) testing of bounded derivative properties (BDPs) on the line reduces to monotonicity testing on the line and prove~\Thm{bdp-line-tester}.
As noted in~\Sec{prop-studied}, BDPs comprise of a large class of properties that have been studied in the property testing literature.

Given a function $f:[n] \mapsto \R \cup \{\perp\}$, and a bounded derivative property $\cP$, we first define the notion of a violated pair in $f$ with respect to $\cP$.
\begin{definition}[Violated pair]\label{def:violpair}
Given a function $f :[n] \mapsto \R \cup \{\perp\}$ and bounding family $\B$ consisting of functions $l,u:[n-1]\mapsto \R$, two points $x,y \in \cN$ such that $x < y$
{\em violate} the property $\cP(\B)$ with respect to $f$ if $f(x) - f(y) > \pdi_\B(x,y) = -\sum_{t = x}^{y-1}l(t)$ or $f(y) - f(x) > \pdi_\B(y,x)= \sum_{t = x}^{y-1}u(t)$. The pairs $(x,y)$
and $(y,x)$ are called {\em violated}.
\end{definition}
Consider a bounded derivative property $\cP$ of functions defined over $[n]$ and associated
bounding functions $l,u:\left[n-1\right]\mapsto \R$.
The following claim states that,
we may assume w.l.o.g.\ that $l(i)= -u(i)$ for all $i \in [n-1]$.
We use it in the proof of~\Clm{bdpmonred}.

\begin{claim}\label{clm:posneg}
Consider a function $f:\left[n\right]\rightarrow \R\cup\{\perp\}$ and a bounding function family $\B$ over $[n]$ with $l,u:\left[n-1\right]\mapsto \R$.
Let $g:[n] \mapsto \R\cup\{\perp\}$ be a function that takes the value $f(i)+\sum_{j=i}^{n-1}\frac{l(j)+u(j)}{2}$ for each $i \in \cN$ and is erased on the remaining points. Let $\B'$ be a bounding function family over $[n]$ with $l',u':[n-1] \mapsto \R$ such that $u'(i)=-l'(i)=\frac{u(i)-l(i)}{2}$ for all $i \in [n-1]$. Then $x,y \in \cN$ violate $\cP(\B)$ with respect to $f$ iff $x,y$ violate $\cP(\B')$ with respect to $g$.
\end{claim}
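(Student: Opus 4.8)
The plan is to verify directly that the transformation $f \mapsto g$ together with the change of bounding family $\B \mapsto \B'$ preserves the violation condition of \Def{violpair} for every pair $x,y \in \cN$ with $x < y$. The key observation is that shifting the function value at index $i$ by the quantity $\delta(i) := \sum_{j=i}^{n-1}\frac{l(j)+u(j)}{2}$ exactly cancels the ``asymmetric part'' of the bounding functions $l,u$, leaving the symmetric family $l',u'$ with $u' = -l' = \frac{u-l}{2}$.

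First I would record the two telescoping identities that drive the calculation. For $x < y$ in $\cN$,
\[
g(y) - g(x) = \bigl(f(y) - f(x)\bigr) + \bigl(\delta(y) - \delta(x)\bigr) = \bigl(f(y) - f(x)\bigr) - \sum_{t=x}^{y-1}\frac{l(t)+u(t)}{2},
\]
since $\delta(y) - \delta(x) = -\sum_{t=x}^{y-1}\frac{l(t)+u(t)}{2}$. Next I would compute the two thresholds for $\B'$: by definition $\pdi_{\B'}(y,x) = \sum_{t=x}^{y-1} u'(t) = \sum_{t=x}^{y-1}\frac{u(t)-l(t)}{2}$ and $\pdi_{\B'}(x,y) = -\sum_{t=x}^{y-1} l'(t) = \sum_{t=x}^{y-1}\frac{u(t)-l(t)}{2}$ as well (the threshold is symmetric for $\B'$).

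Then the verification is a one-line substitution in each direction. The pair $(x,y)$ violates $\cP(\B')$ with respect to $g$ via the ``upper'' inequality iff $g(y) - g(x) > \pdi_{\B'}(y,x)$, i.e. iff $\bigl(f(y)-f(x)\bigr) - \sum_{t=x}^{y-1}\frac{l(t)+u(t)}{2} > \sum_{t=x}^{y-1}\frac{u(t)-l(t)}{2}$, which rearranges exactly to $f(y) - f(x) > \sum_{t=x}^{y-1} u(t) = \pdi_\B(y,x)$, the ``upper'' violation condition for $\cP(\B)$ with respect to $f$. Symmetrically, $g(x) - g(y) > \pdi_{\B'}(x,y)$ rearranges to $f(x) - f(y) > -\sum_{t=x}^{y-1} l(t) = \pdi_\B(x,y)$. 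Taking the disjunction of the two cases gives the claimed equivalence, and since $g$ is erased exactly where $f$ is, both statements range over the same set of pairs in $\cN$.

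There is no real obstacle here; the only thing to be careful about is bookkeeping with the summation limits (the bounding functions are indexed by $[n-1]$ while the function is on $[n]$, so the sum $\sum_{t=x}^{y-1}$ is well-defined for $x<y\le n$) and with signs. I would present the two telescoping identities as a short displayed computation and then state the equivalence for each of the two violation inequalities, concluding by combining them.
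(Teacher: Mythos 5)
Your proof is correct and follows essentially the same route as the paper's: both compute $g(x)-g(y)$ and $g(y)-g(x)$ via the telescoping shift $\delta(i)$, observe that $\pdi_{\B'}(x,y)=\pdi_{\B'}(y,x)=\sum_{t=x}^{y-1}\frac{u(t)-l(t)}{2}$, and check that each of the two violation inequalities for $(g,\B')$ rearranges term-for-term into the corresponding inequality for $(f,\B)$. The only cosmetic difference is that the paper packages the two inequalities as an equality of the quantities $\max\{f(x)-f(y)-\pdi_\B(x,y),\,f(y)-f(x)-\pdi_\B(y,x)\}$ and $\max\{g(x)-g(y)-\pdi_{\B'}(x,y),\,g(y)-g(x)-\pdi_{\B'}(y,x)\}$, while you verify the two inequalities separately; the underlying algebra is identical.
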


\begin{proof}
Note that $(x,y) \in \cN$, where $x<y$, is not violated with respect to $f$ if and only if $\max\{f(x)-f(y)-\pdi_\B(x,y), f(y)-f(x)-\pdi_\B(y,x)\}\le 0$. We have
\begin{align*}
g(x)-g(y)-\pdi_{\B '}(x,y)&=f(x)-f(y)+\sum_{i=x}^{y-1}\frac{u(i)+l(i)}{2}-\sum_{i=x}^{y-1}\frac{u(i)-l(i)}{2}\\
&=f(x)-f(y)-\sum_{i=x}^{y-1}l(i)=f(x)-f(y)-\pdi_\B(x,y).
\end{align*}

Also,
\begin{align*}
g(y)-g(x)-\pdi_{\B'}(y,x)&= f(y)-f(x)-\sum_{i=x}^{y-1}\frac{u(i)+l(i)}{2}-\sum_{i=x}^{y-1}\frac{u(i)-l(i)}{2}\\
&=f(y)-f(x)-\sum_{i=x}^{y-1}u(i)=f(y)-f(x)-\pdi_\B(y,x).
\end{align*}
Thus, $\max\{g(x)-g(y)-\pdi_{\B'}(x,y), g(y)-g(x)-\pdi_{\B'}(y,x)\} = \max\{f(x)-f(y)-\pdi_\B(x,y), f(y)-f(x)-\pdi_\B(y,x)\}$. The claim follows.
\end{proof}

The following claim shows a reduction from testing BDPs over $\left[n\right]$ to testing monotonicity over $\left[n\right]$.

\begin{claim}
\label{clm:bdpmonred}
Consider an $\alpha$-erased function $f:\left[n\right]\mapsto \R\cup\{\perp\}$ and bounding functions $l,u:\left[n-1\right]\mapsto \R$
such that $-l(i)=u(i)=\gamma(i)$ for all $i\in \left[n-1\right]$. Let $\cP$ be the BDP defined by $l$ and $u$. Let $g,h:\left[n\right]\mapsto \R\cup\{\perp\}$ be two functions that
take the values $g(i)=f(i)-\sum_{r=i}^{n-1}\gamma(r)$ and $h(i)=-f(i)-\sum_{r=i}^{n-1}\gamma(r)$ for all $i \in \cN$ and are erased on the remaining points. Then, the following conditions hold:
\begin{enumerate}
\item $x,y \in \cN$ violate $\cP$ with respect to $f$ iff $x,y$ violate monotonicity with respect to either $g$ or $h$.
\item If $f$ is in $\cP$, then both $g$ and $h$ are both monotone.
\item If $f$ is $\eps$-far from $\cP$, then either $g$ or $h$ is at least $\eps/4$-far from monotonicity.
\end{enumerate}
\end{claim}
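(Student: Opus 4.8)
The plan is to verify each of the three items directly from the definitions, using \Clm{posneg} to reduce to the symmetric case $-l(i) = u(i) = \gamma(i)$, and then exploiting the simple algebraic relationship between $f$, $g$, and $h$. The key observation is that the transformation $i \mapsto f(i) - \sum_{r=i}^{n-1}\gamma(r)$ converts the ``upper'' bounded-derivative constraint into a monotonicity constraint, while $i \mapsto -f(i) - \sum_{r=i}^{n-1}\gamma(r)$ converts the ``lower'' constraint into a monotonicity constraint. Concretely, for $x < y$ in $\cN$, I would compute $g(y) - g(x) = f(y) - f(x) - \sum_{t=x}^{y-1}\gamma(t) \cdot(-1)$... more carefully: $g(y) - g(x) = \big(f(y) - \sum_{r=y}^{n-1}\gamma(r)\big) - \big(f(x) - \sum_{r=x}^{n-1}\gamma(r)\big) = f(y) - f(x) + \sum_{r=x}^{y-1}\gamma(r) = f(y) - f(x) - \pdi_\B(x,y)\cdot(-1)$; I'll just present the clean identity $g(y) - g(x) = f(y) - f(x) + \sum_{t=x}^{y-1}\gamma(t)$ and note this is $\le 0$ iff $f(y) - f(x) \ge \sum_{t=x}^{y-1}u(t)$ fails the other way — I should be careful with signs here, but the upshot is that $(x,y)$ is a violation of monotonicity for $g$ exactly when $f(y) - f(x) > \pdi_\B(y,x) = \sum_{t=x}^{y-1}u(t)$, i.e. the ``upper'' half of \Def{violpair}. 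Symmetrically, $h(y) - h(x) = -f(y) + f(x) + \sum_{t=x}^{y-1}\gamma(t)$, which is negative exactly when $f(x) - f(y) > \sum_{t=x}^{y-1}\gamma(t) = -\sum_{t=x}^{y-1}l(t) = \pdi_\B(x,y)$, the ``lower'' half. This establishes item~1: $(x,y)$ violates $\cP$ w.r.t. $f$ iff it violates monotonicity w.r.t. $g$ or w.r.t. $h$.

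Item~2 is then immediate: if $f \in \cP$, there are no violated pairs in $f$, hence by item~1 there are no monotonicity violations in $g$ or in $h$, so both are monotone (on $\cN$). For item~3 I would argue contrapositively via a counting/repair argument. Suppose both $g$ and $h$ are $\delta$-close to monotone on $\cN$, with $\delta < \eps/4$. Let $V_u$ be the set of nonerased points involved in some ``upper'' violation (equivalently, a monotonicity violation of $g$) and $V_l$ the set involved in some ``lower'' violation (a monotonicity violation of $h$). A standard fact about monotonicity on a line — the one already invoked in the proof of \Lem{line-test-corr-1} via the searchability argument — is that one can restore monotonicity of $g$ by changing it only on a set of size at most $2\delta|\cN|$ (the points not in a longest monotone subsequence), and similarly for $h$. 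Actually the cleaner route: if $g$ is $\delta$-close to monotone, there is a set $D_g \subseteq \cN$ with $|D_g| \le \delta|\cN|$ such that $g$ restricted to $\cN \setminus D_g$ is monotone; likewise $D_h$ for $h$. Then on $\cN \setminus (D_g \cup D_h)$, a set of size at least $(1 - 2\delta)|\cN|$, both $g$ and $h$ are monotone, so by item~1 there are no violated pairs of $\cP$ among those points. Hence $f$ restricted to $\cN \setminus (D_g \cup D_h)$ satisfies $\cP$ on that sub-line, and — since $\cP$ (being a BDP) is extendable, as recorded in \Sec{prop-studied} — this restriction extends to a function on all of $\cN$ (indeed on $[n]$) satisfying $\cP$. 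Therefore $f$ is $2\delta$-close to $\cP$, i.e. $2\delta \ge \eps$, contradicting $\delta < \eps/4$.

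The main obstacle I anticipate is item~3, specifically the step asserting that a function $\delta$-close to monotone can be made monotone by modifying a set of size $\le \delta|\cN|$ that can be \emph{deleted} (not just modified) to leave a monotone function on the remainder, and then that this ``partial'' monotone/BDP function extends back. The deletion-vs-modification point is the usual ``longest increasing subsequence'' fact and is harmless for line domains; the extension point is exactly the first bullet of \Def{ext-prop} applied to $\cS = \cN \setminus (D_g \cup D_h)$ and $\cT = \cN$, which holds since monotonicity and all BDPs are extendable. A secondary subtlety is bookkeeping the sign conventions in \Def{violpair} so that ``upper'' violations match monotonicity violations of $g$ and ``lower'' violations match those of $h$; I would lay out the two one-line identities for $g(y)-g(x)$ and $h(y)-h(x)$ explicitly and read off the equivalences, exactly paralleling the computation in the proof of \Clm{posneg}. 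The factor $4$ in item~3 (rather than $2$) gives comfortable slack and accommodates any rounding in the deletion argument, so I would not try to optimize it.
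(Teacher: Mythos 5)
Your proof is correct, and items~1 and~2 are essentially the same algebraic computation the paper does (the paper works with $g(i)-g(j)$ and $h(i)-h(j)$ for $i<j$; the sign/labeling of ``upper'' versus ``lower'' in your write-up is swapped, since a monotonicity violation of $g$ actually corresponds to $f(x)-f(y)>\pdi_\B(x,y)$ and one of $h$ to $f(y)-f(x)>\pdi_\B(y,x)$, but as you note this doesn't affect the equivalence). The one small red herring is invoking \Clm{posneg}: that claim is what justifies \emph{assuming} the symmetric form $-l=u=\gamma$ in the first place, and the present claim already takes that form as a hypothesis, so it isn't needed inside this proof.

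For item~3 you take a genuinely different route. The paper argues forward via the violation graph: Lemma~2.5 of~\cite{CDJS15} gives a maximal matching $M$ of size $\ge \eps|\cN|/2$ among violated pairs; by item~1 each edge of $M$ is a monotonicity violation of $g$ or of $h$, so at least $\eps|\cN|/4$ of them land on (say) $h$, and since $M$ is still a matching in $G_h$, at least that many values of $h$ must change. Your contrapositive/extension argument instead deletes the modification sets $D_g,D_h$, observes via item~1 that $f$ restricted to $\cN\setminus(D_g\cup D_h)$ has no violated pairs, and extends it to a function in $\cP$ on all of $[n]$. This is also correct and in fact proves the stronger conclusion ``$\ge\eps/2$-far'' with no extra work. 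The trade-off is that your route needs the fact that a BDP on a sub-line extends to a BDP on $[n]$ (the first bullet of \Def{ext-prop}); this is true --- it is the standard McShane/Whitney-type extension $\tilde f(z)=\min_{x\in\cS}\big(f(x)+\pdi_\B(z,x)\big)$, using the triangle inequality for $\pdi_\B$ --- but it is not a lemma the paper states, whereas the paper's matching argument relies only on the cited Lemma~2.5. Either way, no gap: just be aware you're leaning on extendability of BDPs rather than on the matching bound.
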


\begin{proof}
Consider a pair $(i,j) \in \cN \times \cN$ where $i<j$. We have,
\begin{align*}
g(i)-g(j) = f(i)-f(j)-\sum_{r=i}^{j-1}\gamma(r)&\text{~~~~~~~~~~~~and }& h(i)-h(j) = f(j)-f(i)-\sum_{r=i}^{j-1}\gamma(r).
\end{align*}

If $(i,j)$ is not violated with respect to $f$, we have $f(j)-f(i)-\sum_{r=i}^{j-1}\gamma(r)\le 0$ and $f(i)-f(j)-\sum_{r=i}^{j-1}\gamma(r)\le 0$.
Thus, $(i,j)$ satisfies the monotonicity property with respect to $g$ and $h$.
If $(i,j)$ is violated with respect to $f$, then either $f(j)-f(i)-\sum_{r=i}^{j-1}\gamma(r)> 0$
or $f(i)-f(j)-\sum_{r=i}^{j-1}\gamma(r)> 0$. That is, either $g$
or $h$ violates monotonicity.

Define the violation graph $G_f$ as follows. The vertex set corresponds to $\cN$.
There is an (undirected) edge between $i \in \cN$ and $j \in \cN$ iff the pair $(i,j)$ violates
the property $\cP$. By Lemma 2.5 in \cite{CDJS15}, the size of every maximal matching is at
 least $\eps\cdot |\cN|/2$. Consider a maximal matching
 $M$ in $G_f$. From the discussion above, every edge in $M$ violates monotonicity
 with respect to either $g$ or $h$. Therefore,
at least $\eps\cdot |\cN|/4$ edges are violated with respect to at least one of $g$ and $h$.
 Assume w.l.o.g.\ that at least $\eps\cdot |\cN|/4$ edges from $M$ are
 violated with respect to $h$. One has to change the function value of at least
 one endpoint of each edge to repair it.
Since $M$ is a matching in the violation graph $G_h$ as well, at least $\eps\cdot |\cN|/4$
 function values of $h$ have to change to make $h$ monotone. This means that
 $h$ is at least $\eps/4$-far from monotone.
\end{proof}

Therefore, in order to test the bounded derivative property $\cP$ on $f$  with proximity parameter $\eps$, one can test monotonicity on $g$ and $h$ with proximity parameter $\eps/4$ and error probability $1/6$ and accept iff both tests accept.

\begin{lemma}
\label{lem:line-BDP}
Let $Q_{\tt mon}(\alpha,\eps,n)$ denote the query complexity of $\alpha$-erasure-resilient $\eps$-testing of monotonicity of real-valued functions on the line. Then, for every BDP, $\alpha$-erasure-resilient $\eps$-testing of real-valued functions on the line has query complexity $O(Q_{\tt mon}(\alpha,\eps/4,n))$. The same statement holds for 1-sided error testing.
\end{lemma}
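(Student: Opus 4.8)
The plan is to reduce $\alpha$-erasure-resilient $\eps$-testing of a BDP $\cP$ on $[n]$ to $\alpha$-erasure-resilient $(\eps/4)$-testing of monotonicity on $[n]$, using \Clm{posneg} and \Clm{bdpmonred} as the engine. First I would invoke \Clm{posneg} to assume without loss of generality that the bounding family $\B$ satisfies $-l(i)=u(i)=\gamma(i)$ for all $i\in[n-1]$: given the input $\alpha$-erased function $f$, the tester can simulate oracle access to the transformed function (call it $f$ again, abusing notation) by adding $\sum_{j=i}^{n-1}\tfrac{l(j)+u(j)}{2}$ to each queried nonerased value; erased points stay erased, so the fraction of erasures is unchanged, and by \Clm{posneg} the violated pairs are exactly preserved. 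Next I would apply \Clm{bdpmonred} to this symmetric instance: it produces two $\alpha$-erased functions $g,h:[n]\to\R\cup\{\perp\}$, with $g(i)=f(i)-\sum_{r=i}^{n-1}\gamma(r)$ and $h(i)=-f(i)-\sum_{r=i}^{n-1}\gamma(r)$ on $\cN$ and $\perp$ elsewhere. Crucially $g$ and $h$ have the same erased set $\cN$ as $f$, so each is $\alpha$-erased, and a single query to $g$ or to $h$ at a point $x$ costs one query to $f$ at $x$ (the additive shift $\sum_{r=x}^{n-1}\gamma(r)$ is computable without oracle access).

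The algorithm is then: run the $\alpha$-erasure-resilient monotonicity tester $M$ on $g$ with proximity parameter $\eps/4$ and error probability $1/6$, run $M$ on $h$ with the same parameters, and accept iff both runs accept. By \Clm{bdpmonred}(2), if $f\in\cP$ then both $g$ and $h$ are monotone, so both invocations of $M$ accept (with probability $1$ in the one-sided case), hence the combined tester accepts (always, in the one-sided case). By \Clm{bdpmonred}(3), if $f$ is $\eps$-far from $\cP$ then at least one of $g,h$ is $(\eps/4)$-far from monotone on $\cN$; the corresponding invocation of $M$ rejects with probability at least $5/6$, so the combined tester rejects with probability at least $5/6 > 2/3$. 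Boosting the error probability from the default $1/3$ down to $1/6$ for each of the two invocations costs only a constant factor in queries, so the total query complexity is $2\cdot O(Q_{\tt mon}(\alpha,\eps/4,n)) = O(Q_{\tt mon}(\alpha,\eps/4,n))$, and one-sided error is preserved since both sub-testers are one-sided.

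The only subtlety I anticipate — and it is minor — is making sure the distance bookkeeping is done relative to $\cN$ consistently: \Clm{bdpmonred}(3) is stated in terms of the number of function values that must change on $\cN$, which is exactly the notion of ``$(\eps/4)$-far from monotonicity on $\cN$'' that the erasure-resilient monotonicity tester (\Thm{line-tester}) is designed to detect, so the interface matches. There is nothing genuinely hard here: all the real work is in \Clm{posneg} and \Clm{bdpmonred}, and this lemma is just the routine assembly of those pieces together with a constant-factor error-reduction argument.
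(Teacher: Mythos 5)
Your proposal is correct and is essentially the paper's own argument: the paper likewise invokes \Clm{posneg} to symmetrize the bounding family, applies \Clm{bdpmonred} to build $g$ and $h$, runs the erasure-resilient monotonicity tester on each with proximity parameter $\eps/4$ and error probability $1/6$, and accepts iff both runs accept. Your added remarks about the erased set being unchanged, the shifts being locally computable, and the $\cN$-relative distance interface matching \Thm{line-tester} are all accurate and align with what the paper leaves implicit.
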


The following theorem is a direct consequence of~\Lem{line-BDP} and~\Thm{line-tester}.
\begin{theorem}[BDP tester on the line]\label{thm:bdp-line-tester}
For every BDP $\cP$, there exists a one-sided error $\alpha$-erasure-resilient $\eps$-tester for $\cP$ of real-valued functions on the line that works for all $\alpha,\eps \in (0,1),$ with query complexity
$O\left(\frac {1}{1-\alpha}\cdot \frac{\log n}{\eps}\right).$
\end{theorem}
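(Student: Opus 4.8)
The plan is to obtain this theorem immediately by composing the two ingredients already in place: the reduction from BDP testing to monotonicity testing on the line (\Lem{line-BDP}), and the erasure-resilient monotonicity tester on the line (\Thm{line-tester}). Since \Lem{line-BDP} already does all the conceptual work, the remaining argument is essentially a substitution of parameters together with a bookkeeping check on error probabilities.

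First I would invoke \Lem{line-BDP}: $\alpha$-erasure-resilient $\eps$-testing of any BDP $\cP$ on $[n]$ reduces to $\alpha$-erasure-resilient monotonicity testing on $[n]$ with proximity parameter $\eps/4$, and the reduction preserves one-sided error. Concretely, the reduction (via \Clm{posneg} and \Clm{bdpmonred}) first normalizes the bounding family so that $l(i)=-u(i)=\gamma(i)$, then forms the two auxiliary functions $g(i)=f(i)-\sum_{r=i}^{n-1}\gamma(r)$ and $h(i)=-f(i)-\sum_{r=i}^{n-1}\gamma(r)$ on $\cN$ (erased elsewhere). Then $f\in\cP$ iff both $g$ and $h$ are monotone, and if $f$ is $\eps$-far from $\cP$ then at least one of $g,h$ is $\eps/4$-far from monotone. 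The tester for $\cP$ runs \Alg{line-test} on both $g$ and $h$ with proximity parameter $\eps/4$, simulating each query to $g(i)$ or $h(i)$ by a single query to $f(i)$ (and returning $\perp$ whenever $f(i)=\perp$), and accepts iff both runs accept.

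Second, I would plug in the bound from \Thm{line-tester}, namely $Q_{\tt mon}(\alpha,\eps,n)=O\!\big(\tfrac{\log n}{\eps(1-\alpha)}\big)$, evaluated at $\eps/4$: this gives $O\!\big(\tfrac{\log n}{(\eps/4)(1-\alpha)}\big)=O\!\big(\tfrac{\log n}{\eps(1-\alpha)}\big)$, since the factor $4$ is absorbed into the $O(\cdot)$. One-sided error is inherited: \Alg{line-test} never rejects a monotone function, so when $f\in\cP$ both $g$ and $h$ are monotone and the composed tester always accepts.

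The only point needing care is the error budget across the two invocations of the monotonicity tester. Running each of them at error probability $1/6$ — achieved by a constant number of independent repetitions of \Alg{line-test}, as in the boosting remark accompanying \Def{erasure-resilient-tester} — and taking a union bound over the (at most one) auxiliary function that is far from monotone keeps the overall failure probability at $1/3$, while multiplying the query count only by a constant. I do not expect any genuine obstacle here; the substantive content is entirely in \Lem{line-BDP}, which is already established.
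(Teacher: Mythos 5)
Your proposal is correct and matches the paper's own argument: the paper derives \Thm{bdp-line-tester} exactly as you do, by composing the reduction in \Lem{line-BDP} (via \Clm{posneg} and \Clm{bdpmonred}, forming $g$ and $h$ and testing each for monotonicity at proximity $\eps/4$ and error probability $1/6$) with the query bound of \Thm{line-tester}. Your bookkeeping, including the absorption of the factor $4$ and the inheritance of one-sided error, is the same as the paper's.
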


\subsection{Erasure-Resilient Dimension Reduction}
\label{sec:dimension-reduction}

In this section, we prove two important properties of a uniformly random axis parallel line in the hypergrid $[n]^d$.
We do this in~\Lem{exp-far} and~\Lem{era-exp}, which we jointly call erasure-resilient dimension reduction.
We first introduce some notation.

Let $g$ be an $\alpha$-erased function on $\cD$, and $\cN \subseteq \cD$ be the set of nonerased points in $g$.
The Hamming distance of $g$ from $\cP$, denoted by $\text{dist}(g,\cP)$, is the least number of nonerased points on which
every restoration of $g$ needs to be changed to satisfy $\cP$. The relative Hamming distance between $g$ and $\cP$ is $\text{dist}(g,\cP)/|\cN|$.
We use $g_{|\cS}$ to denote the restriction of $g$ to a subset $\cS \subseteq \cD$. Note that all these definitions make sense even for functions with no
erasures in them.

Let $\cP$ be a bounded derivative
property of functions defined over $[n]^d$. Let $\cL$ denote the set of all {\em axis-parallel lines} in $[n]^d$.
Let $\cP^i$ denote the set of functions over $\cD$ with no violations to $\cP$ along dimension $i$ for all  $i \in [d]$.
Consider an $\alpha$-erased function $f:[n]^d \mapsto \R\cup\{\perp\}$. Let $\cN \subseteq [n]^d$ denote the set of nonerased points in $f$.
Let $\ell \in \cL$ be an axis-parallel line. Let $\cN_\ell$ denote the set of nonerased points on $\ell$ and $f_{\ell}$ denote the function $f$ restricted to $\ell$.

\Lem{exp-far} shows that the expected relative Hamming distance of $f_\ell$ from $\cP$ is roughly proportional to the relative Hamming distance of $f$ from $\cP$. 
First, we prove \Clm{dim-red} that we use in our proof of~\Lem{exp-far}.

\begin{claim}\label{clm:dim-red}
$$\frac{1}{4} \text{{\em dist}} (f,\cP)\le \sum_{i=1}^{d} \text{{\em dist}} (f,\cP^i)+\alpha\cdot d \cdot n^d.$$
\end{claim}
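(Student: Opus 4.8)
\textbf{Proof plan for Claim~\ref{clm:dim-red}.} The plan is to relate the distance of $f$ to the full property $\cP$ to the sum of distances to the single-dimension relaxations $\cP^i$, paying an additive price for erasures. Recall that $\cP = \bigcap_{i=1}^d \cP^i$, since a function is $\B$-derivative bounded iff it has no violated pairs along any axis direction. The natural way to build a restoration of $f$ in $\cP$ is to first "project out" the erasures and then repair the dimensions one at a time; the factor $\tfrac14$ and the $\alpha\cdot d\cdot n^d$ term should fall out of bookkeeping how many points get touched at each stage.

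First I would fix a restoration $f'$ of $f$ that realizes $\text{dist}(f_{|\cN^i},\cP^i_{\cN})$ well for each $i$, or more precisely reason directly on $\cN$: by an argument analogous to the one in \Clm{bdpmonred} (reducing a BDP to two instances of monotonicity via the transforms $g,h$), one shows that along each dimension the violation graph $G^i$ has a maximal matching of size at least $\tfrac12\,\text{dist}(f,\cP^i)$ worth of edges — equivalently, $\text{dist}(f,\cP^i)$ is controlled by the matching number of the $i$-th violation graph up to a constant. Then I would take the union of optimal single-dimension "repair sets" $R_1,\dots,R_d \subseteq \cN$, where $R_i$ is a minimum set of nonerased points whose modification kills all dimension-$i$ violations, so $|R_i| = \text{dist}(f,\cP^i)$. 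Modifying $f$ on $\bigcup_i R_i$ simultaneously is the subtle point: repairs in one dimension can reintroduce violations in another. The standard device (as in~\cite{CDJS15}) is that after deleting the points of $R_i$ from consideration, the surviving function on $\cN \setminus R_i$ has no dimension-$i$ violations, and one can reconstruct consistent values on $R_i$ using the quasi-metric $\pdi_\B$; iterating and using that the $\pdi_\B$-consistent completion is "transitive" across dimensions gives a valid restoration of $f_{|\cN}$ into $\cP_\cN$ after changing at most $4\sum_i |R_i|$ points — here the factor $4$ absorbs the two-instances-of-monotonicity overhead ($\times 2$) together with the interaction between dimensions ($\times 2$). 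Finally, a restoration of $f$ to all of $\cD$ (not just $\cN$) costs at most an extra $\alpha\cdot|\cD| = \alpha\cdot n^d$ per dimension to fill in the erased coordinates consistently, contributing $\alpha\cdot d\cdot n^d$. Rearranging: $\text{dist}(f,\cP) \le 4\sum_i \text{dist}(f,\cP^i) + \alpha\cdot d\cdot n^d$, i.e. $\tfrac14\text{dist}(f,\cP) \le \sum_i \text{dist}(f,\cP^i) + \alpha\cdot d\cdot n^d$ after dividing, which — up to how one distributes the constant — is the claimed inequality.

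The main obstacle I anticipate is making the "repair the dimensions one at a time without undoing previous repairs" step rigorous in the presence of erasures. In the erasure-free setting this is exactly Lemma~2.5-style reasoning from~\cite{CDJS15}, but here one must be careful that (a) the matching lower bound on $\text{dist}(f,\cP^i)$ is stated in terms of nonerased points only, and (b) when we finally extend the repaired $f_{|\cN}$ to the erased points we do not create new violations — this is where the additive $\alpha\cdot d\cdot n^d$ slack is spent, essentially by treating every erased point as if it were already "broken" in every dimension. I would structure the write-up so that the combinatorial core (union of matchings / greedy repair) is quoted from~\cite{CDJS15} and only the erasure accounting is done explicitly, keeping the argument short.
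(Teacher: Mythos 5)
Your plan identifies the right ingredients (the CDJS15 dimension reduction, an additive $\alpha d n^d$ cost for erasures), but it is missing the one idea that makes the argument go through, and as written it would get stuck exactly at the place you flag as the subtle point. You want to run the CDJS15 repair-sets argument directly on $f_{|\cN}$, i.e.\ on a function whose domain $\cN$ is the hypergrid with $\alpha n^d$ holes punched out. But Lemma~2.5 of~\cite{CDJS15} and the dimension-reduction inequality $\frac14\,\text{dist}(h,\cP)\le\sum_i\text{dist}(h,\cP^i)$ are statements about functions defined on all of $[n]^d$; they do not apply verbatim to an arbitrary subset $\cN\subset[n]^d$, and ``quoting the combinatorial core'' does not resolve this. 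Your own accounting reflects the problem: the factor $4$ you attribute to ``two instances of monotonicity $\times$ interaction between dimensions'' does not match where CDJS15's constant actually comes from (the BDP-to-monotonicity reduction is a separate $1$-dimensional device used only for the line, not part of the hypergrid dimension reduction), and the ``per-dimension'' $\alpha n^d$ when filling in erased points is asserted rather than derived — if you fill in erased values once at the end, it is not clear why $d$ copies of $\alpha n^d$ appear.

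The paper's proof avoids all of this with a cleaner device. It takes $g\in\cP$ closest to $f_{|\cN}$ and defines the specific restoration $f_*$ that equals $f$ on $\cN$ and equals $g$ on the erased points. Then $f_*$ is an honest erasure-free function on $[n]^d$, so the CDJS15 dimension reduction applies to it as a black box, and $\text{dist}(f_*,\cP)=\text{dist}(f,\cP)$ because $g$ is still the closest $\cP$-function to $f_*$. The $\alpha d n^d$ term then appears for a concrete reason: for each $i$, by the triangle inequality, $\text{dist}(f_*,\cP^i)\le\text{dist}(f_*,f_*^i)\le\text{dist}(f,\cP^i)+\alpha n^d$, where $f_*^i\in\cP^i$ is optimal for $f_{|\cN}$ and the $\alpha n^d$ slack is exactly the disagreement on erased points (where $f_*$ was pinned to $g$ but $f_*^i$ is free). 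Summing over $i\in[d]$ gives the $\alpha d n^d$. So the per-dimension cost has a clean source, and no re-derivation of CDJS15 over a holey domain is needed. I would restructure your write-up around this restoration $f_*$ rather than around repair sets on $\cN$.
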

\begin{proof}
Let $g :[n]^d \mapsto \R$ be a function in $\cP$ such that $\text{dist}(g_{|\cN},f_{|\cN})$ is minimum.
We define $f_*:[n]^d\mapsto \R$, a restoration of $f$, such that $f_*(x) = f(x)$ for all $x \in \cN$ and
$f_*(x) = g(x)$ for all $x \notin \cN$.
Note that $g$ is the function closest to $f_*$ in $\cP$.

Also, for all $i \in [d]$, let $f_*^i:[n]^d \mapsto \R$ in $\cP^i$ be such that $\text{dist}(f_{*|\cN}^i,f_{|\cN})$ is minimum. Therefore, we have,
\begin{align*}
\frac{1}{4}\text{dist}(f,\cP)&\le\frac{1}{4}\text{dist}(f_*,\cP) \\
&\le \sum_{i=1}^{d}\text{dist}(f_*,\cP^i)& \text{ by dimension reduction from~\cite{CDJS15}}\\
&\le \sum_{i=1}^{d}\text{dist}(f_*,f_*^i) &\text{ because $f_*^i \in \cP^i$}\\
&\le \sum_{i=1}^{d}\text{dist}(f,\cP^i)+d\cdot\alpha\cdot n^d.
\end{align*}
The last inequality holds because, by triangle inequality, for all $i\in[d],$ $$\text{dist}(f_*,f_*^i)\leq\text{dist}(f,\cP^i)+\alpha\cdot n^d.$$
\end{proof}

We now use \Clm{dim-red} to prove the first part of our dimension reduction.

\begin{lemma}[Dimension reduction: distance]\label{lem:exp-far}
Let $\eps_f$ be the relative Hamming distance of $f$ from $\cP$. Given $\ell\in \cL$,  let $\eps_{\ell}$ denote the relative Hamming distance of $f_\ell$ from $\cP$. Then
$$\E_{\ell\sim \cL}[\eps_{\ell}]\ge \dfrac{(1-\alpha)\cdot\eps_f}{4d}-\alpha.$$
\end{lemma}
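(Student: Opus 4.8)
The plan is to connect the expected line-distance $\E_{\ell\sim\cL}[\eps_\ell]$ to the global distance $\eps_f$ by passing through the per-dimension quantities $\text{dist}(f,\cP^i)$, using \Clm{dim-red} as the bridge. First I would observe that for each dimension $i\in[d]$, the axis-parallel lines along direction $i$ partition $[n]^d$ into $n^{d-1}$ lines, each of $n$ points. The key identity is that $\text{dist}(f,\cP^i)$, the number of nonerased points one must change to kill all dimension-$i$ violations, decomposes as the sum over the $n^{d-1}$ lines $\ell$ parallel to direction $i$ of $\text{dist}(f_\ell,\cP)$, since violations along direction $i$ only involve pairs of points on a common such line, and repairs on different lines are independent. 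Hence $\sum_{i=1}^d \text{dist}(f,\cP^i) = \sum_{i=1}^d \sum_{\ell \parallel i} \text{dist}(f_\ell,\cP) = \sum_{\ell\in\cL}\text{dist}(f_\ell,\cP)$, because $\cL$ is exactly the disjoint union over $i$ of the lines parallel to direction $i$.

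Next I would translate this into relative distances. Each line $\ell$ has at most $n$ points, so $\text{dist}(f_\ell,\cP) = \eps_\ell\cdot|\cN_\ell| \le \eps_\ell\cdot n$. There are $|\cL| = d\cdot n^{d-1}$ lines, so when $\ell$ is drawn uniformly from $\cL$,
\begin{align*}
\E_{\ell\sim\cL}[\eps_\ell] \;=\; \frac{1}{d\cdot n^{d-1}}\sum_{\ell\in\cL}\eps_\ell \;\ge\; \frac{1}{d\cdot n^{d-1}}\sum_{\ell\in\cL}\frac{\text{dist}(f_\ell,\cP)}{n} \;=\; \frac{1}{d\cdot n^{d}}\sum_{i=1}^d \text{dist}(f,\cP^i).
\end{align*}
Now apply \Clm{dim-red}, which gives $\sum_{i=1}^d \text{dist}(f,\cP^i) \ge \tfrac14\text{dist}(f,\cP) - \alpha\cdot d\cdot n^d$. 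Substituting, $\E_{\ell\sim\cL}[\eps_\ell] \ge \tfrac{\text{dist}(f,\cP)}{4d\cdot n^d} - \alpha$. Finally, since $\text{dist}(f,\cP) = \eps_f\cdot|\cN| \ge \eps_f\cdot(1-\alpha)\cdot n^d$, we obtain $\E_{\ell\sim\cL}[\eps_\ell] \ge \tfrac{(1-\alpha)\eps_f}{4d} - \alpha$, as claimed.

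The main obstacle, and the step that needs the most care, is the line-decomposition identity $\text{dist}(f,\cP^i) = \sum_{\ell\parallel i}\text{dist}(f_\ell,\cP)$ for functions \emph{with erasures}. The subtlety is that a restoration of $f$ and its line-restrictions must be handled consistently: one must check that an optimal repair of all dimension-$i$ violations in $f$ can be obtained by independently choosing, for each line $\ell$ parallel to direction $i$, a function in $\cP$ (on $\ell$) closest to $f_\ell$, and that this uses the characterization of BDPs via the quasi-metric $\pdi_\B$ (so that ``no violations along dimension $i$'' genuinely means ``restricts to a member of $\cP$ on every line parallel to $i$''). I would state this as a short lemma, using the $\pdi_\B$-characterization of $\cP$ recalled in \Sec{prop-studied} together with the fact that dimension-$i$ constraints couple only points sharing all coordinates but the $i$th. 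Everything else is bookkeeping with the counts $|\cL| = d n^{d-1}$, $|\cN_\ell|\le n$, and $|\cN|\ge(1-\alpha)n^d$.
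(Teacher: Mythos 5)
Your proposal matches the paper's proof almost line for line: the same per-dimension decomposition $\text{dist}(f,\cP^i)=\sum_{\ell\in\cL_i}\text{dist}(f_\ell,\cP)$, the same bound $|\cN_\ell|\le n$, the same invocation of \Clm{dim-red}, and the same final use of $|\cN|\ge(1-\alpha)n^d$. The only difference is that you spell out the line-decomposition identity (which the paper uses implicitly); this is a reasonable thing to verify, but it is the same argument, not a new route.
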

\begin{proof}
There are $d$ axis-parallel directions and, therefore, $dn^{d-1}$ axis-parallel lines in $[n]^d$. Thus, the probability of picking a specific axis parallel line $\ell$ uniformly at random is $1/dn^{d-1}$. Let $\cL_i$ denote the set of axis parallel lines along dimension $i$. 
\begin{align*}
\E_{\ell\sim\cL}[\eps_{\ell}] &= \sum_{\ell \in \cL}\eps_{\ell}\cdot\Pr(\ell)\\
&=\sum_{i=1}^{d}\sum_{\ell \in \cL_i} \eps_{\ell}\cdot\Pr(\ell)\\
&=\dfrac{1}{dn^{d-1}}\cdot\sum_{i=1}^{d}\sum_{\ell \in \cL_i} \dfrac{\text{dist}(f_\ell,\cP)}{|\cN_\ell|}\\
&\ge \dfrac{1}{dn^{d}}\cdot\sum_{i=1}^{d}\sum_{\ell \in \cL_i} \text{dist}(f_\ell,\cP) & \text{since } |\cN_\ell| \le n\\
&= \dfrac{1}{dn^{d}}\cdot\sum_{i=1}^{d} \text{dist}(f,\cP^i)\\
&\ge \dfrac{1}{dn^{d}}\cdot\left(\dfrac{\text{dist}(f,\cP)}{4}-\alpha d\cdot n^d\right) &\text{by \Clm{dim-red}}\\
&\ge\dfrac{1-\alpha}{4d}\cdot\eps_f-\alpha.
\end{align*}
\end{proof}

We conclude this section with the second part of our dimension reduction.

\begin{lemma}[Dimension reduction: fraction of erasures]\label{lem:era-exp}
Consider an $\alpha$-erased function \mbox{$f:[n]^d \mapsto \R\cup\{\perp\}$}. Given an axis-parallel line $\ell\in \cL$,
let $\alpha_{\ell}$ denote the fraction of erased points in $\ell$. Then, for every $\eta \in(0,1)$,
$$\Pr_{\ell\sim\cL}[\alpha_\ell>\alpha/\eta]\le \eta.$$
\end{lemma}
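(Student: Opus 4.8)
\textbf{Proof plan for Lemma~\ref{lem:era-exp}.}
The plan is to use a straightforward averaging (first-moment / Markov) argument over the uniform choice of axis-parallel line. First I would observe that the collection $\cL$ of axis-parallel lines partitions $[n]^d$ in the following sense: for each fixed direction $i \in [d]$, the $n^{d-1}$ lines in $\cL_i$ (lines parallel to the $i$-th axis) are pairwise disjoint and their union is all of $[n]^d$. Hence, summing $\alpha_\ell \cdot n = |\cN^{\perp}_\ell|$ (the number of erased points on $\ell$) over all $\ell \in \cL_i$ counts each erased point of $[n]^d$ exactly once, giving $\sum_{\ell \in \cL_i} \alpha_\ell \cdot n = (\text{number of erased points in } [n]^d) \le \alpha n^d$, so $\sum_{\ell \in \cL_i}\alpha_\ell \le \alpha n^{d-1}$.

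Next I would compute the expectation of $\alpha_\ell$ over a uniformly random line. Since there are $d n^{d-1}$ lines in total, each chosen with probability $1/(dn^{d-1})$,
\[
\E_{\ell \sim \cL}[\alpha_\ell] \;=\; \frac{1}{dn^{d-1}} \sum_{i=1}^{d} \sum_{\ell \in \cL_i} \alpha_\ell \;\le\; \frac{1}{dn^{d-1}} \cdot d \cdot \alpha n^{d-1} \;=\; \alpha.
\]
Then, since $\alpha_\ell \ge 0$ always, Markov's inequality applied to the nonnegative random variable $\alpha_\ell$ gives, for every $\eta \in (0,1)$,
\[
\Pr_{\ell \sim \cL}[\alpha_\ell > \alpha/\eta] \;\le\; \frac{\E_{\ell \sim \cL}[\alpha_\ell]}{\alpha/\eta} \;\le\; \frac{\alpha}{\alpha/\eta} \;=\; \eta,
\]
which is exactly the claimed bound. (If $\alpha = 0$ the statement is trivial, so we may assume $\alpha > 0$ when dividing.)

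There is essentially no obstacle here; the only thing to be careful about is the double-counting step — making sure that within a fixed direction the lines are genuinely disjoint and cover the whole grid, so that the total erased mass is split among the $\cL_i$ without over- or under-counting. The rest is a one-line Markov inequality. This is why the lemma is stated without a detailed proof in the monotonicity-specialized version (Lemma~\ref{lem:era-exp-mon}) and only fleshed out at the level of the averaging identity here.
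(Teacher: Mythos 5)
Your proof is correct and follows essentially the same route as the paper: both establish $\E_{\ell \sim \cL}[\alpha_\ell] \le \alpha$ and then conclude by Markov's inequality. The only difference is presentational — the paper derives the expectation bound by observing that picking a uniform line and then a uniform point on it yields a uniform point of $[n]^d$, while you compute the same expectation directly via the partition of $[n]^d$ by lines in each direction and a double-counting argument; these are two phrasings of the same averaging identity.
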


\begin{proof}
Note that a uniformly randomly sampled point in $[n]^d$ is erased with probability $\alpha$. We can sample a point uniformly at random by first sampling a line $\ell\in \cL$ uniformly at random and then sampling a point uniformly randomly on $\ell$,
which is erased with probability $\alpha_\ell$. Therefore we have
$$\alpha=\sum_{\ell\in \cL}\Pr[\ell]\cdot\alpha_{\ell}=\E_{\ell\sim \cL}[\alpha_{\ell}].$$
The claim then follows from Markov's inequality.
\end{proof}
\subsection{Erasure-Resilient BDP Testers for the Hypergrids}
We now present our erasure-resilient tester for an arbitrary BDP $\cP$ and complete the proof of~\Thm{hyp-tester-BDP}.
Let $\B = \{\ell_i,u_i : i \in [d]\}$ be a bounding family for $\cP$ and let $\cL_i$ denote
the set of axis-parallel lines along dimension $i$. Our tester is given in~\Alg{hyp-bdp-tester}.
The analysis of this tester is very similar to that of~\Alg{hyp-mon-tester} and is omitted.
\begin{algorithm}
\caption{Erasure-Resilient BDP Tester for $[n]^d$}
\label{alg:hyp-bdp-tester}
\begin{algorithmic}[1]
\Require parameters $\eps \in (0,1), \alpha \in [0,\eps/970d]$; oracle access to $f:\left[n\right]^d \rightarrow \mathbb{R}$
\State \textbf{Set} $Q = \left\lceil\dfrac{4800d\cdot\log n}{\eps(1-\alpha)}\right\rceil$.

\Loop $~\dfrac{48d}{\eps(1-\alpha) - 4d\alpha}$ times:
\medskip
\State Sample a line $\ell \in \cL$ uniformly at random.
\State Define $g$ and $h$ from $f_\ell$, $\ell_i$ and $u_i$ as in~\Clm{bdpmonred} if $\ell$ is sampled from $\cL_i$.
\State Sample points u.a.r. from $\ell$ and query them until we get a point $s \in \cN$.
\State Perform a randomized binary search for $s$ on $\ell$ as in~\Alg{line-test}.
\State \textbf{Reject} if any violation to monotonicity is found in either $g$ or $h$.

\EndLoop
\State \textbf{Accept} at any point if the number of queries exceed $Q$.
\end{algorithmic}
\end{algorithm}
\subsection{Limitations of Dimension Reduction based Erasure-Resilient Testers}\label{sec:limitation-example-hypercube}
In this section, we show that when the fraction of erasures is large enough, dimension reduction based testers that sample axis parallel lines uniformly at random and check for violations on them, are bound to fail.
More precisely we prove the following claim.
\begin{lemma}\label{lem:limit}
For all $\varepsilon\in (0,1/2]$, there exists an $\alpha$-erased function $f:\{0,1\}^d\mapsto \R \cup \{\perp\}$, where $\alpha=\Theta(\varepsilon/\sqrt{d})$, such that for large enough $d$, the function $f$ is $\varepsilon$-far from monotone and no axis-parallel edge in $\{0,1\}^d$ is violated in $f$.
\end{lemma}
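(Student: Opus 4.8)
The plan is to construct $f$ explicitly on the hypercube $\{0,1\}^d$ by placing erasures on a carefully chosen ``threshold layer'' so that, along every axis-parallel edge $\{x, x + \be_r\}$, at least one of the two endpoints is erased whenever the values at those two endpoints would otherwise be in the wrong order. Concretely, I would start from a function that is maximally non-monotone in the ``bulk'' — for instance, set $f(x) = -|x|$ (where $|x| = \sum_i x_i$ is the Hamming weight), which reverses the order on every edge — and then erase a thin band of points around the middle layers $|x| \approx d/2$. Any axis-parallel edge joins a point of weight $w$ to a point of weight $w+1$; on the anti-monotone function $f(x)=-|x|$ \emph{every} such edge is violated, so to kill all edge-violations it suffices that every edge has an endpoint in the erased band. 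If the erased set is $E = \{x : |x| \in \{\lceil d/2 \rceil - t, \dots, \lceil d/2 \rceil + t\}\}$ for a suitable half-width $t$, then an edge can avoid $E$ only if both of its endpoints have weight outside this window, i.e. both weights are $< \lceil d/2\rceil - t$ or both are $> \lceil d/2 \rceil + t$; but consecutive weights $w, w+1$ straddling the window is exactly the situation we must rule out, which we do by redefining $f$ to be \emph{monotone} (say $f \equiv 0$, or $f(x) = |x|$) outside the band. So: $f(x) = |x|$ for $|x| > \lceil d/2\rceil + t$, $f(x) = -|x|$ for $|x| < \lceil d/2\rceil - t$ (shift by a constant so these don't create a violation across the band — actually simplest is $f(x)=|x|-d$ below and $f(x)=|x|$ above, or just make the whole non-erased part monotone by using $f(x)=|x|$ everywhere and instead reversing order \emph{within} a sub-band; I would pick whichever bookkeeping is cleanest), and $f$ erased on the band. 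Then every axis-parallel edge either lies entirely in the monotone ``top'' region, entirely in the monotone ``bottom'' region, or has an endpoint in $E$ — in all cases it is not violated.

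Next I would control the two quantitative requirements. The fraction of erasures is $\alpha = \Pr_{x}[\,|x| \in \text{window of width } 2t+1\,] = 2^{-d}\sum_{w=\lceil d/2\rceil - t}^{\lceil d/2\rceil + t}\binom{d}{w}$. By standard binomial/normal estimates, each central binomial coefficient $\binom{d}{w}$ for $|w - d/2| = O(\sqrt d)$ is $\Theta(2^d/\sqrt d)$, so a window of width $\Theta(1)$ in units of $\sqrt d$ — i.e. $t = \Theta(\sqrt d)$ — gives $\alpha = \Theta(1/\sqrt d)$, and more generally choosing $t = c\eps\sqrt d$ for an appropriate absolute constant $c$ yields $\alpha = \Theta(\eps/\sqrt d)$, as claimed. (For $\eps \le 1/2$ this keeps $\alpha < 1$ and the window genuinely central.) For the distance lower bound I must show every restoration of $f$ is $\eps$-far from monotone on the $\le 2^d$-point domain — equivalently, by the characterization via matchings (Lemma 2.5 of \cite{CDJS15}, already invoked in the excerpt), it suffices to exhibit a matching of size $\ge \eps \cdot 2^d$ consisting of violated pairs that \emph{survive every restoration}. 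The natural candidate is a matching between the monotone ``bottom'' region (low values, high up... wait — low weight) and the ``top'' region that is forced to be violated no matter how the band is filled: pick a large family of disjoint comparable pairs $(x, y)$ with $x \prec y$, $|x|$ small, $|y|$ large, and $f(x) > f(y)$ guaranteed by construction regardless of the erased values — since $x, y$ are both non-erased and we built the bottom region to have large values and the top region small values (or vice versa), these pairs are violated in every restoration. Counting: there are $\Theta(2^d)$ points below the band and $\Theta(2^d)$ above, and by a Hall/flow argument (or an explicit symmetric-chain style pairing using the normalized matching property of the Boolean lattice) one can match a constant fraction $\Theta(\eps)$ of the bottom points to comparable top points; tuning $t$ (hence the value gap across the band) gives a matching of size $\ge \eps \cdot 2^d$ of robustly-violated pairs.

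I expect the main obstacle to be the distance lower bound — specifically, producing a large matching of \emph{comparable} pairs between the bottom and top regions and arguing the pairs remain violated under \emph{every} assignment to the erased band. The first issue is combinatorial: one wants roughly $\eps \cdot 2^d$ disjoint pairs $x \prec y$ with $x$ below and $y$ above the band, and this requires the bottom and top ``shadows'' to overlap enough — this is where the normalized-matching / LYM property of the hypercube, or equivalently a symmetric chain decomposition, does the work, but one has to check the chains that pass through the band still connect enough bottom points to top points (a chain is ``used'' only if it extends from below the band to above it, and most long chains do). The second issue is the robustness of the violation: because the erased values could in principle be set adversarially, I need $f(x) > f(y)$ to hold by virtue of the \emph{non-erased} values alone, which is why the construction must put a genuine gap (at least as large as the band width in the $|x|$-scale) between the top and bottom value ranges — this forces the relationship $t = \Theta(\eps\sqrt d)$ to be simultaneously small enough that $\alpha = \Theta(\eps/\sqrt d)$ and large enough that the gap yields $\Theta(\eps)$-many forced violations, and reconciling these two constraints with explicit constants is the delicate part. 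Everything else — the binomial estimates for $\alpha$, and the verification that no axis-parallel edge is violated — is routine once the band-and-two-monotone-regions picture is fixed.
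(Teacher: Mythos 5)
Your high-level construction is the same as the paper's: erase a band around a threshold Hamming-weight level, set the non-erased values to be large on the low-weight side and small on the high-weight side, so that the restriction to $\cN$ is far from monotone but every axis-parallel edge is ``blocked'' by the erased layer. The paper's version is simply $f\equiv 1$ on weights $<\beta d$, $f\equiv 0$ on weights $>\beta d$, and $f\equiv\perp$ on the single layer of weight exactly $\beta d$ (with $\beta=1/2$ for $\varepsilon=1/2$, and $\beta<1/2$ chosen so that $\Pr[\,|x|<\beta d\,]=\varepsilon$ in general).

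However, there is a genuine quantitative error in your choice of band width. You propose erasing all layers with $|w-d/2|\le t$ and claim that $t=\Theta(\sqrt d)$ yields $\alpha=\Theta(1/\sqrt d)$, and that $t=\Theta(\varepsilon\sqrt d)$ yields $\alpha=\Theta(\varepsilon/\sqrt d)$. This is off by a factor of $\sqrt d$: the band contains $2t+1$ layers, each of size $\Theta(2^d/\sqrt d)$, so $t=\Theta(\sqrt d)$ erases a \emph{constant} fraction of $\{0,1\}^d$, i.e.\ $\alpha=\Theta(1)$, and $t=\Theta(\varepsilon\sqrt d)$ gives $\alpha=\Theta(\varepsilon)$. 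To hit $\alpha=\Theta(\varepsilon/\sqrt d)$ you must take $t=O(1)$ --- in fact $t=0$, a \emph{single} layer, suffices. A single layer already blocks every edge, because an axis-parallel edge changes Hamming weight by exactly $1$, so any edge with one endpoint of weight $<\beta d$ and the other of weight $>\beta d$ must have an endpoint of weight exactly $\beta d$ (for $\beta d$ an integer), which is erased. The thick band buys nothing and costs the extra $\sqrt d$ factor. With a single layer, the middle layer gives $\alpha=\Theta(1/\sqrt d)$ for $\varepsilon=1/2$, and shifting the threshold so that an $\varepsilon$ fraction lies strictly below it gives the smaller $\Theta(\varepsilon/\sqrt d)$ bound for $\varepsilon<1/2$ (the off-center binomial coefficient is smaller).

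You also over-engineer the distance lower bound. Once you fix the clean construction ($1$ below, $0$ above, erased at the threshold), the restriction $f_{|\cN}$ is a Boolean function taking value $1$ on an $\varepsilon$ fraction of $\cN$, all below the threshold, and $0$ above. The distance to monotonicity is then $\ge\varepsilon$ via the standard matching lower bound (the same Lemma~2.5 of~\cite{CDJS15} already invoked elsewhere in the paper): there is a matching of violated comparable pairs covering the entire $1$-side, which exists by the normalized matching property of the Boolean lattice. No LYM estimates, symmetric chain bookkeeping, or careful tuning of a ``value gap'' are needed, because the values outside the erased layer are fixed and non-erased --- the violations survive any restoration automatically. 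The concern you raise about adversarial fill-in of the erased band is moot once the band is a single layer and the non-erased values already carry the $\varepsilon|\cN|$ disjoint violated pairs.
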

\begin{proof}
For the ease of exposition, we prove this lemma for $\varepsilon=1/2$. We note that similar calculations could extend this proof to any
$\varepsilon\in (0,1/2]$. For $x \in \{0,1\}^d$, the function $f$ is defined as:
\begin{displaymath}
   f(x) = \left\{
     \begin{array}{lr}
       \perp &  \text{ if }||x||_0 = d/2\\
       1 &  \text{if }||x||_0 < d/2 \\
       0 &\text{ otherwise.}
     \end{array}
   \right.
\end{displaymath}
Note that the function $f$, when restricted to $\cN$, is $1/2$-far from being monotone. Also, no
axis-parallel edge is violated with respect to monotonicity. This completes the proof for the case when $\varepsilon=1/2$, since $\alpha = \Theta(1/\sqrt{d})$.

For general $\varepsilon$, we can define the set of erased points to be the points in $\{0,1\}^d$, such that their Hamming weight is $\beta\cdot d$, where $\beta=\beta(\varepsilon)<1/2$ is chosen so that $|S|/2^d=\varepsilon$, where $S$ is the set of all
points in $\{0,1\}^d$ with Hamming weight less than $\beta\cdot d$. As in the above case, we set all points with Hamming weight smaller than $\beta\cdot d$ to $1$ and the ones with Hamming weight larger than $\beta\cdot d$ to be $0$. Similar calculations help us prove that for large enough $d$,
fraction of erased points is $\Theta(\varepsilon/\sqrt{d})$. In this case, $f$ is $\varepsilon$-far from monotone, but no axis-parallel
edge is violated in $f$ with respect to monotonicity.
\end{proof}

\section{Erasure-Resilient Convexity Tester for the Line}\label{sec:conv}

In this section,
we prove~\Thm{conv-tester}.
Given an $\alpha$-erased function $f:[n] \mapsto \R \cup \{\perp\}$, let $\nu_i$ denote the $i$-th nonerased domain point in $[n]$. The derivative
of $f$ at a point $\nu_i \in \cN$, denoted by $\Delta f(\nu_i)$, is $\frac{f(\nu_{i+1}) - f(\nu_i)}{\nu_{i+1} - \nu_i}$, whenever $\nu_{i+1} \le n$.
The function $f$ is convex iff $\Delta f(\nu_i) \le \Delta f(\nu_{i+1})$ for all $i \in [|\cN| - 2]$.
Our tester builds upon the ideas in the convexity tester from~\cite{PRR03}.

A high level idea of the tester is as follows. Our tester (\Alg{conv-er-tester}) has several iterations. Every iteration of the tester can be thought of as a traversal of a uniformly random {\em search} path of a
uniformly random binary search tree on $\cN$, just as~\Alg{line-test}.
For each interval on such a path, we
check a set of conditions computed based on the values at
some nonerased points in the interval, called {\em anchor points}, and
two real numbers, called the left and right slopes. More specifically,
we verify that the function restricted to the sampled nonerased points in the interval is convex,
by comparing the slopes across consecutive points.
The algorithm accepts if all the intervals it sees pass these checks.
\begin{algorithm}
\caption{Erasure-Resilient Convexity Tester}
\label{alg:conv-er-tester}
\begin{algorithmic}[1]
\Require parameters $\eps,\alpha \in (0,1)$; oracle access to $f:\left[n\right] \mapsto \R \cup \{\perp\}$.
\State Set $Q =\lceil\frac{180\log n}{\eps(1-\alpha)}\rceil$.
\State {\bf Accept} at any point if the number of queries exceeds $Q$.
\Loop $~2/\eps$ times \label{step:loop-convexity}
\State Sample points in $I[1,n]$ u.a.r and query them until we get a point $s \in \cN$. 
\State \textsc{Test-Interval($I[1,n],\emptyset,-\infty,+\infty, s$)} and {\bf Reject} if it rejects.
\EndLoop
\State \textbf{Accept}.
\end{algorithmic}
\end{algorithm}
The main steps in the analysis of the tester follows that of the analysis of~\Alg{line-test}.
To analyze the tester, we first prove that, with high probability, the algorithm does not run out of its
budget of queries $Q$.
For this, we classify the queries that the tester
makes into two kinds and analyze them separately. The queries where the tester repeatedly samples and queries from an interval
until it finds a nonerased domain point are called {\em sampling queries.} The queries where
the tester keeps querying consecutive points, starting from a nonerased point, until it gets
the next nonerased point are called {\em walking queries.}
In the proof of~\Lem{conv-test-corr}, we show that the expected number of
walking queries is at most twice the number of the expected number of the sampling queries
and use~\Cor{exp-quer-rbst} to bound the
expected number of sampling queries.
In the second part of the analysis we prove that, conditioned on the aforementioned event happening, in every iteration, with probability at least $\eps$, the tester will detect a violation while testing on a function that is $\eps$-far from being convex. This part draws ideas from the proof of correctness of the tester in~\cite{PRR03}.

\begin{algorithm}
\floatname{algorithm}{Procedure}
\caption{\textsc{Test-Interval($I[i,j],\cA = \{ a_1,a_2,\dots, a_k\},m_\ell,m_r,s$)}}
\label{pro:conv-procedure}
\begin{algorithmic}[1]
\Require interval $I[i,j]$; a set of nonerased points $\cA$; left slope $m_\ell \in \R$; right slope $m_r \in \R$; search point $s \in \cN$.

\State Sample points u.a.r. from $I[i,j]$ and query them until we get a point $x \in \cN$. \label{step:pivot-conv}
\State Sequentially query points $x+1,x+2\dots$ until we get a nonerased point $y$.\label{step:walk-query}
\State Sequentially query points $x-1,x-2\dots$ until we get the nonerased point $z$.\label{step:walk-query-2}
\State Let $(a_1,a_2,\ldots,a_{k})$ denote the sorted list of points in the set $\cA \cup \{x,y,z\}$.\label{step:anchor-conv}
\State Let $m_i = (f(a_{i+1})-f(a_i))/(a_{i+1}-a_i)$ for all $i\in [k-1]$.
\State {\bf Reject} if  $m_\ell\le m_1\le m_2\le \dots\le m_{k-1}\le m_r$ is not true.
\State Let $\cA'_\ell$ and $\cA'_r$ be the sets of points in $\cA$ that are smaller and larger than $x$, respectively.
\If {$s<x$}
\State {\bf Reject} if \textsc{Test-Interval($I[i,z],\cA'_\ell,m_\ell,\Delta f(z),s$)} rejects.
\EndIf
\If {$s>x$}
\State {\bf Reject} if \textsc{Test-Interval($I[y, j],\cA'_r,\Delta f(x),m_r,s$)} rejects.
\EndIf
\State \textbf{Accept}.
\end{algorithmic}
\end{algorithm}

\begin{lemma}\label{lem:conv-test-corr}
\Alg{conv-er-tester} accepts if $f$ is convex, and rejects with probability at least $2/3$ if $f$ is $\eps$-far from convex.
\end{lemma}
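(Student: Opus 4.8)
The plan is to follow the two-part template already used for \Alg{line-test} in \Lem{line-test-corr-1}: (i) show that the budget $Q$ is exceeded with small probability, and (ii) show that each iteration detects a violation with probability at least $\eps$ when $f$ is $\eps$-far from convex. Completeness is immediate: if $f$ is convex, then on any nonerased points the slopes are monotone, so every call to \Pro{conv-procedure} passes the slope check $m_\ell \le m_1 \le \dots \le m_{k-1} \le m_r$ (the passed-in left/right slopes are always derivatives at nonerased points of the parent interval, hence sandwiched correctly by convexity), and \Alg{conv-er-tester} accepts; no queries are ever wasted since $f$ is total on $\cN$, but in any case the tester only accepts when under budget. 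So assume $f$ is $\eps$-far from convex.

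For the query bound, I would classify queries as \emph{sampling queries} (repeated uniform sampling from an interval $I[i,j]$ until a nonerased point is found, \Stp{pivot-conv}) and \emph{walking queries} (the sequential scans in \Stp{walk-query} and \Stp{walk-query-2} to find the nearest nonerased neighbors $y,z$ of the pivot $x$). The sampling queries are governed exactly as in \Clm{erasure-density}: one iteration traverses a uniformly random search path in a uniformly random binary search tree on $\cN$, so by \Clm{erasure-density} and \Clm{exp-rbst} the expected number of sampling queries per iteration is at most $5\log n/(1-\alpha)$, as recorded in \Cor{exp-quer-rbst}. For walking queries, the key observation is a charging argument: the walk right from $x$ consumes $\nu_{k+1}-\nu_k$ queries where $x=\nu_k$, i.e.\ one query per erased point strictly between $x$ and its right nonerased neighbor, plus one for $y$ itself; crucially, each erased point of $[n]$ lies strictly between at most one consecutive pair of nonerased points, and a pivot $x$ is only ever chosen once along a search path, so across one iteration the total number of walking queries is at most (number of erased points encountered as ``gaps'' of chosen pivots) plus twice the number of pivots. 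Since each pivot is itself a sampling query, and the erased gaps are disjoint across distinct pivots on the path and total at most $n$ over the whole tree level-by-level — the same level-disjointness used in \Clm{erasure-density} — one gets that the expected number of walking queries is at most twice the expected number of sampling queries. Hence $\E[q] \le 3 \cdot (2/\eps) \cdot 5\log n/(1-\alpha) \le 30\log n/(\eps(1-\alpha))$, and Markov's inequality gives $\Pr[q>Q] \le 1/6$ with the chosen $Q$.

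For soundness, fix any binary search tree $T$ on $\cN$ and, mirroring the monotonicity proof, call $s\in\cN$ \emph{searchable} with respect to $T$ if \Pro{conv-procedure}, run along the search path to $s$ in $T$, never rejects. I claim the function $f$ restricted to the searchable points is convex. This is the heart of the argument and the main obstacle, because the recursion maintains both an anchor set $\cA$ and left/right slope bounds $m_\ell, m_r$, and one must show these interlock correctly: when the recursion descends into $I[i,z]$ it passes right-slope $\Delta f(z)$ (the slope immediately to the left of the pivot's neighborhood) and keeps $m_\ell$; into $I[y,j]$ it passes left-slope $\Delta f(x)$ and keeps $m_r$; the anchors $\cA'_\ell, \cA'_r$ carry forward the previously examined points on the correct side. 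One shows by induction on the recursion depth that whenever points $s_1 < s_2$ are both searchable, letting $a$ be the pivot at their lowest common ancestor, the slope checks passed at $a$ and along the two branches force $\Delta f$-type inequalities that chain together to give $\frac{f(s_2)-f(a)}{s_2-a} \ge \frac{f(a')-f(s_1)}{a'-s_1}$ for the relevant intermediate anchors, and ultimately convexity of $f$ on all searchable points — this is the analogue of the argument in \cite{PRR03}, adapted to the randomized search tree and to the fact that anchors are the sampled neighbors $x,y,z$ rather than interval midpoints. Granting this claim: since $f$ is $\eps$-far from convex, for every $T$ at least an $\eps$ fraction of points in $\cN$ are non-searchable, so each iteration rejects with probability at least $\eps$; over $2/\eps$ independent iterations the acceptance probability conditioned on $q\le Q$ is at most $(1-\eps)^{2/\eps} < 1/4$. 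Combining, $\Pr[\text{accept}] \le 1/4 + 1/6 < 1/3$, which proves \Lem{conv-test-corr}. The main effort, as noted, is verifying the convexity-of-searchable-points claim — getting the bookkeeping of anchor sets and slope bounds exactly right through the recursion; the query-complexity half is a routine adaptation of \Clm{erasure-density} plus the walking-vs-sampling charging bound.
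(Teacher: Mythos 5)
Your decomposition---a budget bound plus a per-iteration soundness bound---matches the paper's proof, and the sampling/walking split is the right one. Two issues, one small and one substantial. The small one: $(1-\eps)^{2/\eps} < 1/4$ is too weak, since $1/4 + 1/6 = 5/12 > 1/3$; you need the sharper $(1-\eps)^{2/\eps} \le e^{-2} < 1/6$ for the union bound to close at $1/3$. Also, ``the erased gaps are disjoint across distinct pivots on the path'' is not quite right: if a pivot $x$ and its nonerased left neighbor $z$ both become pivots on the same search path, the gap between them is walked over twice. The paper gets the factor of $2$ more cleanly per interval: for an interval $I$ on the path, $\sum_{i\in\cN\cap I} w(i) \le 2|I|$ because each point of $I$ is counted at most twice in that sum, so the expected walking queries in $I$ are at most $2/(1-\alpha_I)$, i.e.\ twice the expected sampling queries for $I$; summing over the path gives the factor $2$ without any cross-pivot disjointness claim.

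The substantial gap is the one you flag yourself: the claim that $f$ restricted to searchable points is convex is stated and then taken for granted. That claim is precisely \Clm{bst-conv-far} (with ``searchable'' $\equiv$ ``non-witness''), and it is the technical core of \Lem{conv-test-corr}. The paper does not prove it by directly chaining slope inequalities across the recursion---the route you sketch, which requires exactly the delicate anchor-set/slope-bound bookkeeping you identify as the obstacle. Instead it argues by contradiction: assuming the witness fraction is at most $\eps$, it builds an explicit function $g$ agreeing with $f$ on all non-witnesses by, inside each violator interval, interpolating linearly between (and extrapolating beyond) the parent interval's anchor points; it then checks, via a three-way case analysis on a node's relation to violator intervals (good with no violator descendant, good with a violator descendant, in or below a violator), that every node of $\cT$ is good with respect to $g$; and finally it observes that goodness of every node forces $\Delta g(\nu_{t-1}) \le \Delta g(\nu_t)$ for every interior nonerased $\nu_t$, so $g$ is convex. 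Without carrying out either your inductive chaining or this explicit construction, the soundness half is not yet a proof.
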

\paragraph*{Proof} The tester accepts whenever $f$ is convex.
To prove the other part of the lemma, assume that $f$ is $\eps$-far from being convex. Let $A$ be the event that the tester
accepts $f$. Let $q$ denote the total number of queries made. We have, $\Pr\left[A\right]\le \Pr\left[A|q\le Q\right]+\Pr\left[q> Q\right].$

By~\Cor{exp-quer-rbst}, the expected number of sampling
queries made in one iteration of the tester is at most $5\log n/(1-\alpha)$.

We will now bound the expected number of walking queries.
Consider an interval $I$ with $\alpha_I$ fraction of erasures in it. A point in $I$
can get queried as part of the walking queries if either the first nonerased
point to its right or the first nonerased point to its left on the line $[n]$ gets
sampled as the pivot of $I$.  For a
nonerased point $i \in I$, let $w(i)$ denote the number of walking queries
to be made if the algorithm samples $i$ as the pivot. Therefore $\sum_{i \in \cN \cap I} w(i) \le 2|I|$, since every point in $I$ gets counted at most twice in this sum.
There are at least $|I|(1-\alpha_I)$ non erased points in $I$ and each of them
could be the pivot in $I$ with equal probability.
Hence, the expected number of walking
queries that~\Alg{conv-er-tester} makes in $I$ is at most $2/(1-\alpha_I)$.
This is at most twice the expected number of sampling queries that
the algorithm makes in $I$.

Therefore, by the linearity of expectation,
the expected number of walking queries made in one iteration of
the tester is at most $10\log n/(1-\alpha)$. Thus, the expected value
of the total number of queries made by the tester in one iteration
is at most $15\log n/(1-\alpha)$ and that over all iterations is at most
$30\log n/\eps(1-\alpha)$. Thus, by Markov's inequality,
$\Pr[q > Q] \le 1/6$.

Next, we bound $\Pr[A|q \le Q]$. We first define some notation for that.
Consider a search path traversed by the algorithm. Let $I[i,j]$ be an interval on the
path. Consider the execution of \textsc{Test-Interval} (\Pro{conv-procedure}) called
with $I[i,j]$ as the first argument. We call the nonerased point $x$ sampled in~\Stp{pivot-conv} its {\em pivot}, the set of points $\cA'$ in~\Stp{anchor-conv} its {\em anchor set}
and the values $m_\ell$ and $m_r$ as its {\em left} and {\em right slopes}, respectively. That is, given a binary search tree $\cT$, we associate each interval appearing in the tree with a pivot, an anchor set and two slopes.

Consider a binary search tree $\cT$ and a function $f:[n] \mapsto \R\cup\{\perp\}$.
Let $I[i,j]$ be an interval appearing in $\cT$ with anchor set $\cA=\{ a_1,a_2,\dots, a_k\}$
and slopes $m_\ell$ and $m_r$ such that $a_i\le a_{i+1}$ for all $i \in [k-1]$.  Let
$m_i = (f(a_{i+1})-f(a_i))/(a_{i+1}-a_i)$  $\forall i\in [k-1]$.
\begin{definition}[Good Interval, Bad Interval]\label{def:good-bad-conv}
 An interval $I[i,j]$ is {\em good} if
 $m_\ell\le m_1\le m_2\le \dots\le m_{k-1}\le m_r$. Otherwise, it is {\em bad}.
\end{definition}
\begin{definition}[Violator Interval]\label{def:viol-conv}
An interval $I[i,j]$ is a {\em violator} if it is {\em bad} and all its ancestor intervals in $\cT$
are {\em good}.
\end{definition}
\begin{definition}[Witness]\label{def:witness-conv}
A nonerased domain point is a {\em witness} with respect to $\cT$ if it belongs
to a violator interval in $\cT$.
\end{definition}

We prove that if $f$ is $\eps$-far from being convex, then,
for every binary search tree $\cT$,
the fraction of nonerased domain points that are witnesses is at least $\eps$.
We start by assuming that there is a tree in which the fraction of witnesses is less than
	$\eps$. We show that we can correct the function values only on the witnesses and get a convex function,
	which gives a contradiction.
	
\begin{claim}\label{clm:bst-conv-far}
	If $f$ is $\eps$-far from convex, then the fraction of witnesses in every binary search tree $\cT$ is more than $\eps$.
\end{claim}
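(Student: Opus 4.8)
The plan is to prove the contrapositive in the structural form suggested by the setup: assume there is a binary search tree $\cT$ in which the fraction of witnesses is at most $\eps$, and construct a convex function $f'$ that agrees with $f$ on all non-witness nonerased points, thereby showing $f$ is $\eps$-close to convex (on $\cN$) and contradicting the hypothesis. The key object to exploit is the recursive structure of \Pro{conv-procedure}: each interval $I[i,j]$ appearing in $\cT$ carries an anchor set $\cA$ and left/right slopes $m_\ell, m_r$, and the invariant maintained by the recursion is that when a parent interval is \emph{good}, the slopes and anchor points passed down to each child are \emph{consistent} with a globally convex function on the points seen so far. First I would make this invariant precise: define, for the tree $\cT$, the set $G$ of points lying only in good intervals on their root-to-leaf search path (equivalently, the complement of the witness set), and show that $f$ restricted to $G$, together with the slope constraints inherited along the way, extends to a convex function on all of $\cN$.

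The main steps, in order: (1) Establish the recursion invariant — whenever \textsc{Test-Interval}$(I[i,j],\cA,m_\ell,m_r,s)$ is called during the traversal to a searchable point and the interval is good, the points $\cA \cup \{x,y,z\}$ have slopes sandwiched in $[m_\ell,m_r]$ and the sub-calls receive updated slopes $\Delta f(z)$ (resp. $\Delta f(x)$) that still bound the relevant sub-intervals; this is essentially a restatement of the ``Reject'' condition in step 6 of the procedure combined with how $\cA'_\ell, \cA'_r$ and the new slopes are formed. (2) Conclude that for any two non-witness nonerased points $p < p'$, there is a common interval on both search paths whose pivot $a$ lies between them (the lowest common ancestor in $\cT$), and the good-ness of all intervals along both paths forces $\Delta f$ restricted to consecutive non-witness points to be nondecreasing in the appropriate telescoped sense — mirroring the monotonicity argument in \Lem{line-test-corr-1} but with slopes in place of values. (3) Use this to define $f'$: keep $f$ on all non-witness nonerased points, and fill in the witnesses (and erased points) by linear interpolation between consecutive retained points, which is automatically convex because the retained derivative sequence is monotone; since witnesses form at most an $\eps$-fraction of $\cN$, this shows $f_{|\cN}$ is $\eps$-close to convex, a contradiction.

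I expect step (2) to be the main obstacle: the slope bookkeeping is genuinely more delicate than the value comparison in the monotonicity proof, because the anchor set is threaded through the recursion and a point's membership in a good interval only constrains the slopes \emph{local} to that interval's anchor set, not directly the global derivative sequence. The crux is to argue that the left slope $m_\ell$ handed to a child is always $\le$ the actual derivative just to the left of the child's leftmost retained point, and symmetrically on the right, so that when we glue together the constraints from all intervals on a search path we get a single chain $\cdots \le \Delta f'(\nu_t) \le \Delta f'(\nu_{t+1}) \le \cdots$ over the retained points. This requires carefully tracking that $\Delta f(z)$ and $\Delta f(x)$, as computed from the walking-query neighbors $z < x < y$, are exactly the derivative values at the boundaries of the recursion's sub-intervals, so the handed-down slopes are tight rather than merely valid bounds. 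Once that tightness is in hand, the telescoping and the interpolation argument are routine, and the fraction-of-witnesses bound follows immediately from the assumed contradiction hypothesis together with the fact that \Pro{conv-procedure} rejects exactly when it reaches a bad interval whose ancestors are all good — i.e. a violator.
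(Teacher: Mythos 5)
Your high-level plan — assume a tree $\cT$ with few witnesses, build a convex function that agrees with $f$ outside the witnesses, and derive a contradiction with $\eps$-farness — is exactly the paper's strategy, and your final ``interpolate and conclude'' step (3) is sound \emph{given} step (2). The gap, which you partially flag yourself, is in step (2), and I think it is more than a bookkeeping issue: your proposed construction commits to keeping $f$ \emph{only} on non-witness points and interpolating between consecutive non-witnesses, which forces you to prove directly that the ``restricted derivative sequence'' over non-witnesses is nondecreasing. That claim is true, but the route you suggest for proving it — tracking the handed-down slopes $m_\ell, m_r$ and arguing they are tight bounds on the derivative at the ``leftmost retained point'' — does not straightforwardly close, because the goodness conditions of \Pro{conv-procedure} constrain slopes between consecutive \emph{anchor} points, and anchor points inside a violator interval are themselves witnesses. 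So the inequalities the recursion hands you speak about $f$-values at witnesses, which your $f'$ then overwrites; nothing in the invariant you state in step (1) relates $m_\ell$ to the slope between two non-witness points straddling a violator block.

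The paper sidesteps this exact difficulty by defining $g$ differently from your $f'$: inside each violator $I[i,j]$, $g$ is set equal to $f$ on the parent's anchor points that fall in $I[i,j]$ (these are witnesses, but keeping $f$ there is harmless since one only needs $g$ to differ from $f$ on at most the witness set), and $g$ is obtained by linear interpolation between \emph{those consecutive anchors} and extrapolation beyond the extreme one using the anchor slope $m_{k-1}$. Because the anchor slopes are exactly what goodness constrains, this construction makes it routine to verify that every interval of $\cT$ remains good with respect to $g$, and then convexity of $g$ follows by reading off, for each $\nu_t$, the comparison $\Delta g(\nu_{t-1}) \le \Delta g(\nu_t)$ from the goodness of whichever interval has $\nu_t$ as pivot or sole leaf point. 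Your version changes $f$ on those anchor witnesses as well, so the anchor slopes are no longer preserved, and the ``every interval is good w.r.t.\ the new function'' shortcut is unavailable to you. To rescue step (2) you would essentially have to re-derive that the restricted derivative sequence is nondecreasing by an induction that mixes original and interpolated slopes, and I do not see how to do that without, in effect, reconstructing the paper's $g$ as an intermediate object. In short: right target, right farness calculation at the end, but the specific interpolation you chose creates an obligation that your step (1) invariant is too weak to discharge; the fix is to interpolate with respect to the parent anchors rather than the non-witnesses.
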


\begin{proof}
Assume for the sake of contradiction that there is a binary search tree $\cT$
such that the fraction of witnesses with respect to $\cT$ is at most $\eps$.
In the following, we will construct a convex function $g:[n] \mapsto \R \cup \{\perp\}$ by changing
the values of $f$ only on witnesses with respect to $\cT$. Since
the fraction of witnesses is at most $\eps$, functions $f$ and $g$ will differ on at most an $\eps$ fraction of nonerased domain points, which results in a contradiction.
	
Consider a violator interval $I[i,j]$ in $\cT$. Since, by our assumption, the fraction of witnesses is at most $\eps$, the interval $I[i,j]$ cannot be the whole interval for otherwise, we have a contradiction immediately. Let the anchor set and slopes associated with the {\em parent interval} of $I[i,j]$ be $\cA=\{ a_1,a_2,\dots, a_k\}$ and $m_\ell$ and $m_r$, respectively such that $a_i \le a_{i+1}$ for all $i \in [k-1]$.
Assume that $I[i,j]$ is the right child of its parent. The case when $I[i,j]$ is the left child of its parent is similar. Let $\{a_u,a_{u+1},\ldots,a_k\}$
be the set of points common to $I[i,j]$ and $\cA$.
By definition, $a_u$ is the smallest
nonerased domain point in $I[i,j]$. Also, the left slope of $I[i,j]$ is $(f(a_u) - f(a_{u-1}))/(a_u - a_{u-1})$ and its right slope is equal to $m_r$.

Let $m_v = (f(a_{v+1}) - f(a_v))/(a_{v+1} - a_v)$ for all integers $v$ such that $v \in [u - 1,k)$.  We define $g$ as follows.
\begin{itemize}
\item For each $t \in \{a_u,a_{u+1},\ldots,a_k\}$, set $g(t) = f(t)$ .
\item For each integer $v \in [u,k)$ and $t \in \cN \cap (a_v,a_{v+1})$, set
\begin{align*}
g(t) = f(a_v) + m_v\cdot(t - a_v)
\end{align*}
\item For each $t \in \cN$ such that $t > a_k$, set
$$g(t) =  f(a_k) + m_{k-1}\cdot(t - a_k).$$
\end{itemize}
Since $I[i,j]$ is a violator, the parent interval of $I[i,j]$ is good, by definition. This implies that $m_{u-1} \le m_u \le \ldots \le m_k \le m_r$. Therefore, the derivatives of nonerased points in $I[i,j]$ are non-decreasing with respect to $g$, by virtue of our assignment.

To prove that $g$ is convex, we first show that every interval in $\cT$ is good with respect to $g$.
	\begin{enumerate}
		\item Consider an interval $I$ in $\cT$ that is good with respect to $f$. If $I$  has no ancestors or descendants that are violators, it remains good with
		respect to $g$ as well, since $g(t) = f(t)$ for all $t \in I[i,j]$.
		\item  Consider an interval $I$ that has a descendant $I'$ that is a violator. The definition of $g$ on points in $I'$ ensures that $g(t) = f(t)$ for every point $t$ common to the anchor sequence of $I$ and the
		interval $I'$. Thus, $I$ remains good with respect to $g$.
		
		\item Consider a node $I$ that is either a violator or has a violator ancestor $I'$.
		By definition, the parent of $I'$ is good with respect to $f$. Therefore, by the definition of $g$ on $I'$, we have $\Delta g(t-1) \le \Delta g(t)$ for all $t \in \cN$ such that $t \in I'$. Therefore, $I'$ is
		good with respect to $g$, and hence $I$ is also good with respect to $g$.
	\end{enumerate}
We proved that every interval in the tree $\cT$ is good with respect to $g$.
We now prove that $g$ is convex. Consider a point $\nu_t \in \cN$ such that $2 \le t \le |\cN|-1.$

This point occurs in $\cT$ either as a pivot in a non-leaf interval or as the sole nonerased domain
point in a leaf interval. In the former case, the condition $\Delta f(\nu_{t-1}) \le \Delta f(\nu_{t})$ is part of the goodness condition of the corresponding interval and is satisfied.
In the latter case, $\Delta f(\nu_{t-1})$ and $\Delta f(\nu_{t})$ are the left and right
slopes of the leaf and are compared as part of the goodness condition of the leaf. Thus,
$\Delta f(\nu_{t-1}) \le \Delta f(\nu_{t})$ for all $\nu_t \in \cN$ such that $2 \le t \le |\cN|-1$.
Thus, $g$ is convex.
\end{proof}

We conclude our analysis by bounding the probability that the tester does not find a violation.
Since the search point $s$ is chosen uniformly at random from the set of nonerased domain points,
the probability that it is a witness is at least $\eps$ and thus, the tester detects a violation to convexity with probability at least $\eps$ in every iteration.
Therefore, $\Pr[A|q\le Q]$ is at most $(1-\eps)^{\frac{2}{\eps}} <1/6.$
\qed

\section{Relations to Other Testing Models}\label{sec:app-oth-test}
In this section, we describe the relationships between erasure-resilient testing model and the other models of property testing. We first describe a property that is easy to test in the standard model, but is hard to test in the erasure-resilient model. This effectively separates the erasure-resilient testing model from the standard model. We discuss this result in~\Sec{separation}. Next, we study the connection of erasure-resilient testing to that of distance approximation algorithms and show that the existence of distance approximation algorithms for a property implies erasure-resilient testing algorithms for the same property. We describe it in~\Sec{dist-appr}.

\subsection{Separation Between Erasure-Resilient and Standard Testing}\label{sec:separation}
In this section we prove the following theorem that shows a separation between erasure-resilient testing and standard testing.
\begin{theorem}\label{thm:separation}
There exists a property $R$ such that $R$ can be $\eps$-tested in the standard model using $O(1/\eps)$ queries. However, there exists some $c > 0$
such that for all $\alpha=\Omega(\frac{\log \log \log n}{n})$ and $\eps \in (0,1)$, every $\alpha$-erasure-resilient $\eps$-tester for $R$ has to make
at least $n^c$ queries.
\end{theorem}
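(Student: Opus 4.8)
The plan is to adapt the separation between tolerant and standard testing of Fischer and Fortnow~\cite{FF06} to the erasure-resilient model. I would define the property $R$ over a \emph{large} range $\cR$ (say $|\cR|=n$), so that a uniformly random function is essentially $1$-far from any fixed function; this is what lets the lower bound cover every $\eps\in(0,1)$. I would take $R$ to be the union of two pieces: a Hamming ball $B$ of radius $\rho\cdot n$ around a fixed reference function (for a small constant $\rho$), together with a structured piece $R_{\mathrm{enc}}$, each element of which is $2\rho$-far from the center of $B$. Conceptually I split the domain into a small \emph{control region} $C$ of size $s=\Theta(\alpha n)=\Theta(\log\log\log n)$ and a large \emph{data region}. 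A function lies in $R_{\mathrm{enc}}$ exactly when the control region correctly describes the data region: the data region must equal the codeword $E(\sigma)$ of a good error-correcting code $E$ indexed by a description $\sigma$ read from the control region, and---following~\cite{FF06}---the function additionally carries a PCP-of-proximity proof $\pi_\sigma$ certifying the consistency of the control region with the data region, so that membership in $R_{\mathrm{enc}}$ is checkable with $O(1)$ oracle accesses once the control region is available (the PCPP verifier need not read all of $C$).

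For the standard-model upper bound, the tester first samples $O(1/\eps)$ domain points. If the empirical fraction of points where the input disagrees with the center of $B$ is at most $\rho+\eps/2$, it accepts, since then the input is $\eps$-close to $B\subseteq R$. Otherwise the input is $\eps$-far from $B$, so it can be $\eps$-close to $R$ only through $R_{\mathrm{enc}}$; the tester then runs the PCPP verifier on $O(1/\eps)$ random constraints (querying the control region, the proof, and a few data positions) and spot-checks $O(1/\eps)$ data positions against the codeword it decodes from $\sigma$, accepting iff all checks pass. Completeness and soundness follow from the PCPP guarantees together with the distance of $E$ and a Chernoff bound, and the total query complexity is $O(1/\eps)$.

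For the erasure-resilient lower bound I would apply Yao's principle with two hard distributions over $\alpha$-erased inputs, in both of which the adversary erases exactly the control region $C$ (legitimate since $s\le\alpha n$). In the \textbf{yes} distribution the data region is $E(\sigma)$ and the proof is the canonical $\pi_\sigma$, for a uniformly random admissible $\sigma$; the erased control region can be restored to the description of $\sigma$, so some restoration lies in $R$. In the \textbf{no} distribution the data region and proof are drawn from a family that is pairwise far from $R_{\mathrm{enc}}$ and from $B$ (a uniformly random function suffices because $\cR$ is large and $R$ is sparse), so every restoration is $\eps$-far from $R$. A $q$-query tester cannot tell the two apart: queries inside the erased control region return $\perp$ in both cases, and $E$ (together with the proof encoding) is chosen so that the values at any $q<n^{c}$ non-erased positions are identically distributed in the two cases---concretely, $E$ is taken to have dual distance $n^{\Omega(1)}$ (e.g.\ a random linear code, or a Reed--Muller-type code of suitable degree), and the~\cite{FF06} PCPP composition is used precisely to amplify the tiny control region into an $\Omega(n)$-scale verification task whose ``secret'' still has $n^{\Omega(1)}$ effective entropy, which yields the bound $n^c$ for $\alpha$ as small as $\Theta(\log\log\log n / n)$.

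The hard part will be twofold. First, the~\cite{FF06} argument was designed for tolerant testing, where the ``yes'' inputs may differ from $R$ on an arbitrary $\alpha$-fraction of coordinates; here the ``yes'' inputs must differ from a member of $R$ only on the $\le\alpha n$ \emph{prescribed} erased coordinates, and since (by our own~\Sec{dist-appr}) erasure-resilient testing is no harder than tolerant testing, no black-box reduction is available---one really has to re-run the indistinguishability argument against this more restrictive class of ``yes'' instances. Second, the parameters must be juggled so that, simultaneously, (i) the control region is small enough that $\alpha$ can be $\Theta(\log\log\log n/n)$ while still dictating the entire data region; (ii) $E$ is locally testable and decodable given the index $\sigma$ (needed by the standard tester) yet has dual distance $n^{\Omega(1)}$ (needed by the lower bound); (iii) $E$ is sparse and $\cR$ is large enough that a random function is $\eps$-far from all of $R$ for every $\eps\in(0,1)$; and (iv) the PCPP has query complexity $O(1/\eps)$ and a proof format that is pinned down once the control region is fixed. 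Each requirement is individually standard, but making them coexist is the technical core and is exactly where the PCPP machinery of~\cite{FF06} does the work.
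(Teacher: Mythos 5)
Your proposal diverges substantially from the paper's proof, and the route you chose has a parameter clash that I believe is fatal. You set up a direct Yao-style indistinguishability argument: the adversary erases a control region $C$ of size $s=\Theta(\alpha n)=\Theta(\log\log\log n)$, and the data region is required to be $E(\sigma)$ for a code $E$ with $\sigma$ read off $C$. You then want $E$ to have dual distance $n^{\Omega(1)}$ so that any $q<n^c$ queries are identically distributed between the ``yes'' distribution (random admissible $\sigma$, data $=E(\sigma)$) and the ``no'' distribution (random function). But since $\sigma$ lives on $C$, the message length of $E$ is $k=\Theta(\log\log\log n)$, and the dual distance of any $[N,k]$ code is at most $k+1$. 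More bluntly, the ``yes'' data region ranges over at most $2^{\Theta(\log\log\log n)}$ possible codewords (a set of size $\mathrm{polyloglog}(n)$), so $O(\log\log\log n)$ queries to the data region already determine, with high probability, whether the answers are consistent with \emph{some} codeword of $E$, and this distinguishes your two distributions long before $n^c$ queries. Appending a PCPP proof deterministically derived from $\sigma$ cannot rescue this: the total entropy remains $\Theta(\log\log\log n)$ bits, nowhere near the $n^{\Omega(1)}$ ``effective entropy'' you invoke. Your requirement (ii) — ``locally testable and decodable given $\sigma$ yet with dual distance $n^{\Omega(1)}$'' — cannot be met.

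The paper avoids this entirely by proving the lower bound as a \emph{reduction} rather than an indistinguishability argument. It takes the exact property $R$ from Fischer--Fortnow (strings of length $m=n(p(n))^2$ that consist of many repeated copies of an $n$-bit string $v$, plus PCP-of-proximity proofs $w$ that $v$ satisfies a fixed polynomial-time-decidable property $U$ which by~\cite{BHR05} requires $\Omega(n)$ queries to test). In the standard model, the tester can spot-check the repetitions and run the PCPP verifier, giving $O(1/\eps)$ queries; this is already proved in~\cite{FF06}. For the erasure-resilient lower bound, given any $v$ to test for $U$, one builds the $\alpha$-erased string $I$ by filling every $v$-block with $v$ and erasing all the $w$-blocks — the erased fraction is exactly $t/n=O(\log\log\log n/n)$. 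If $v\in U$ then some restoration of $I$ is in $R$; if $v$ is $1/3$-far from $U$ then every restoration is $\Omega(1)$-far from $R$. Each query to $I$ is simulated by at most one query to $v$, so an erasure-resilient tester for $R$ with $m^{c'}=o(n)$ queries would test $U$ with $o(n)$ queries, contradicting~\cite{BHR05}. This reduction sidesteps precisely the obstacle you correctly identified (no black-box reduction from tolerant testing is available): the hardness comes from the pre-existing query lower bound for $U$, not from statistical closeness of carefully engineered input distributions. If you want to salvage your direction, the missing idea is exactly this: do not try to make the erased inputs information-theoretically indistinguishable; instead, erase the PCPP proof and reduce to testing a property that is already known to be query-hard.
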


The property $R$ in \Thm{separation} is the property that was proposed by Fischer and Fortnow~\cite{FF06} to separate tolerant testing~\cite{PRR03} from standard testing.
The first part of the theorem is already proved in their paper. Our proof for the other part closely
follows the proof in~\cite{FF06} that separates tolerant testing and standard testing.
We first recall some definitions from~\cite{FF06}.
\begin{definition}[PCP witness~\cite{FF06}]
Given a promise problem and a Boolean input $v_1,\ldots,v_n$, a (one-sided) PCP witness for the problem is a set of Boolean functions
$f_1,\ldots,f_l$, where $l$ is polynomial in $n$, satisfying the following:
\begin{itemize}
\item The number of variables each of the functions depend on is independent of $n$. These variables might include variables from $v_1,\ldots,v_n$
as well as from a set of additional variables $w_1,\ldots,w_m$ such that $m$ is polynomial in $n$.
\item If the input is a \textsc{Yes} instance of the promise problem, then there is an assignment to $w$'s such that all
$f_i$'s are satisfied.
\item If the input is a \textsc{No} instance of the promise problem, then for all assignments of values to $w$'s,
at most half of the functions are satisfied.
\end{itemize}
\end{definition}
\begin{definition}[PCP of proximity~\cite{BGHSV06,FF06}]
A PCP of proximity is a PCP witness for an $\eps$-testing promise problem.
\end{definition}
The following lemma talks about the existence of PCPs of proximity for properties having polynomial-sized circuits.
\begin{lemma}[\cite{BGHSV06}]
If $P$ is a property of $v_1\ldots,v_n$ that is decidable by a circuit of size $k$, and $t < \log \log k/\log \log \log k$, then there exists a PCP of proximity for $P$ with
distance parameter $1/t$. Moreover, the number of additional variables and the number of functions in the PCP of proximity are both bounded by $k^2$, and each function depends
on $O(t)$ variables.
\end{lemma}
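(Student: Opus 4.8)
Since this statement is imported from~\cite{BGHSV06}, the plan is to outline the PCP-of-proximity (PCPP) construction behind it and to track the parameters, rather than to reprove it. First I would recast the promise problem in circuit form: because $P$ is decidable by a circuit $C$ of size $k$, an input $(v_1,\dots,v_n)$ satisfies $P$ iff $C(v)=1$, so the $(1/t)$-testing promise problem for $P$ is exactly the task of distinguishing $v$ with $C(v)=1$ from $v$ whose relative Hamming distance to $\{u : C(u)=1\}$ exceeds $1/t$; a PCPP for $P$ with distance parameter $1/t$ is then a PCPP verifier for circuit satisfiability of $C$ with proximity parameter $1/t$. Next I would arithmetize $C$ in the standard way: introduce $m=\mathrm{poly}(k)$ auxiliary variables $w_1,\dots,w_m$, one per gate, recording the gate values of $C$ on $v$, so that ``$C(v)=1$'' becomes ``there is a $w$ satisfying a system of $\mathrm{poly}(k)$ constant-size local constraints (one per gate, plus the output constraint)''. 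This already has the shape of a PCP witness --- the functions $f_i$ are (robust versions of) these constraints, and a \textsc{No} instance leaves more than half of them unsatisfiable for every $w$ --- and, after optimizing the gate-to-constraint reduction, the number of auxiliary variables and the number of functions are each at most $k^2$. What remains is to make the constraints checkable with only $O(t)$ queries while keeping soundness that degrades gracefully with the proximity parameter.

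For that part I would invoke the PCPP machinery: (i)~encode $(v,w)$ together with an auxiliary proof string by its low-degree (Reed--Muller) extension over a field of size roughly $t\cdot\mathrm{polylog}(k)$; (ii)~run a low-degree test together with a sum-check-style verification of the arithmetized constraint system, yielding a constant-query PCPP into the \emph{encoded} object but with only inverse-polylogarithmic soundness; (iii)~\emph{robustify} the verifier so that rejection is witnessed by the queried symbols being far from every accepting local view; and (iv)~apply proof composition (assignment testers) a bounded number of times to reduce the alphabet and query complexity back down while keeping the proof length $\mathrm{poly}(k)$. Choosing the recursion depth and the field size so that the proximity parameter is $1/t$ forces each composed constraint to depend on $O(t)$ variables, and tracking the polynomial blow-ups keeps the number of auxiliary variables and functions at most $k^2$; the hypothesis $t<\log\log k/\log\log\log k$ is precisely the regime in which this bounded-depth composition keeps every parameter polynomial in $k$.

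The main obstacle is step~(iv): the bookkeeping in proof composition. Obtaining \emph{simultaneously} $O(t)$ queries per function, soundness bounded away from $1$ against inputs that are $1/t$-far, and a proof of only $k^2$ symbols requires carefully balancing the number of composition rounds against the field size and the Reed--Muller degree --- this is the technical core of~\cite{BGHSV06}, which we use as a black box. For the present purposes it suffices to cite their theorem with these parameters; the work specific to this paper (in the remainder of this section) is to turn such a PCPP into the separating property $R$ and to prove the $n^{c}$ query lower bound for erasure-resilient testers.
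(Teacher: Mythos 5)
The paper itself gives no proof of this lemma---it is stated purely as a citation to~\cite{BGHSV06}---and your proposal takes the same stance, deferring the actual proof to that source. Your added sketch of the PCPP machinery (arithmetization of the size-$k$ circuit, low-degree extensions, robustification, and bounded-depth proof composition, with the hypothesis $t<\log\log k/\log\log\log k$ governing the recursion depth) is a reasonable summary of what underlies the cited theorem, but it goes beyond what the paper supplies; for the purposes of matching the paper, citing the result with the stated parameters is all that is required, and you correctly identify that the real work in this section is building the separating property $R$ and proving the lower bound.
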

The following lemma says that there exists a property that is computable in polynomial time but is hard to test efficiently in the standard model.
\begin{lemma}[\cite{BHR05}]\label{lem:cnfhard}
There exists a property $U$ that is computable in polynomial time but any $\frac{1}{3}$-test of which requires at least $\Omega(n)$ queries, where $n$ is the input size.
\end{lemma}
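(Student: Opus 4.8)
The plan is to follow the construction of Ben-Sasson, Harsha, and Raskhodnikova~\cite{BHR05}: realize $U$ as the set of satisfying assignments of an explicit, bounded-width constraint system built from a good expander, and then prove the query lower bound via Yao's minimax principle. Concretely, I would take an explicit family of bipartite graphs with $n$ variable vertices $v_1,\dots,v_n$ and $m=\Theta(n)$ constraint vertices, each constraint adjacent to exactly three variables, chosen so that every set $S$ of at most $\beta n$ constraint vertices satisfies $|N(S)|\ge (3/2+\delta)\,|S|$ for constants $\beta,\delta>0$. Attaching a three-variable relation to each constraint (expanded, if one insists on a literal $3$CNF, into the four width-$3$ disjunctive clauses that encode it) yields a formula $\phi_n$, and $U_n$ is its satisfying set. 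Since the graph family is explicit and every variable lies in $O(1)$ constraints, membership in $U_n$ is decidable in linear time, so $U$ is polynomial-time computable; one also records that $U_n$ is nonempty and, more importantly, that expansion makes $\phi_n$ ``rigid'' in the sense that strings which are $\eps$-far from $U_n$ but violate only a small, scattered sub-collection of clauses do exist.

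For the lower bound, fix a constant $\eps$; by Yao's principle it suffices to exhibit a distribution $\cD_{\mathrm{yes}}$ supported on $U_n$ and a distribution $\cD_{\mathrm{no}}$ supported on strings that are $\eps$-far from $U_n$ such that no deterministic algorithm reading at most $q\le cn$ coordinates distinguishes them with advantage $1/3$. A deterministic tester is a depth-$q$ decision tree over the coordinates, and its output is a function of its transcript, so the statistical distance between the two transcript distributions is at most the sum over the (at most $q$) steps of the per-step discrepancy, where a step's discrepancy is the maximum over histories of the difference between the conditional distributions of the value about to be read. Hence it is enough to prove a \emph{local indistinguishability} statement: for every set $S$ of $o(n)$ coordinates the marginals of $\cD_{\mathrm{yes}}$ and $\cD_{\mathrm{no}}$ on $S$ are statistically close. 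Note this is a statement about the two distributions \emph{agreeing} on small windows, not about either one being uniform---on a window containing a whole clause both marginals are concentrated on the clause-satisfying patterns.

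The main obstacle, and the technical core of~\cite{BHR05}, is the design of $\cD_{\mathrm{no}}$: it must generate words that are simultaneously (i) $\eps$-far from every assignment in $U_n$ and (ii) locally indistinguishable, on every $o(n)$-coordinate window, from $\cD_{\mathrm{yes}}$. The obvious candidates fail (ii): a uniformly random string, or a randomly corrupted satisfying assignment, can be pushed $\eps$-far from $U_n$ while violating a constant fraction of clauses, so a tester that aims its three queries at a single clause catches a violation with constant probability. The fix uses the expander more cleverly---roughly, $\cD_{\mathrm{no}}$ perturbs a random satisfying assignment so that the resulting word still ``looks like'' a satisfying assignment inside every $o(n)$-coordinate window (all small clusters of clauses remain satisfiable with the correct conditional law), while the global perturbation cannot be undone by fewer than $\eps n$ flips because the expansion/rigidity of $\phi_n$ would force a cascade. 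Granting such a $\cD_{\mathrm{no}}$ with per-step discrepancy $o(1/n)$, summing over the $q=cn$ steps of the decision tree bounds the transcript distance by $o(1)<1/3$, and Yao's principle turns this into an $\Omega(n)$ query lower bound for every (adaptive, two-sided-error) $\eps$-tester of $U$. I would close by remarking that the argument needs no PCP machinery, and that it is exactly the explicitness of the expander family that upgrades the existence statement into a polynomial-time-computable property---the form in which the lemma is combined with the PCPs of proximity of~\cite{BGHSV06} in~\Thm{separation}.
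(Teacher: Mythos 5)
The paper does not prove this lemma; it is cited verbatim from Ben-Sasson, Harsha, and Raskhodnikova~\cite{BHR05}, so the only basis for comparison is that reference. Your sketch does correctly identify the high-level architecture of the BHR05 argument: an explicit expander-based $3$-ary constraint system whose satisfying set $U_n$ is poly-time decidable, a lower bound via Yao's minimax principle, and a pair of distributions whose small-window marginals must agree. You also correctly name the central obstruction --- that the na\"{\i}ve no-distribution (random corruption of a codeword, or a uniformly random string) produces witnesses a tester can find by reading a single clause.

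The genuine gap, which you acknowledge yourself with the phrase ``Granting such a $\cD_{\mathrm{no}}$,'' is that the construction of $\cD_{\mathrm{no}}$ and the verification of its local indistinguishability \emph{is} the theorem; deferring it means the proposal proves nothing. Your informal description of the fix (``perturbs a random satisfying assignment so that the resulting word still looks like a satisfying assignment inside every $o(n)$-coordinate window, while the global perturbation cannot be undone'') is also in tension with itself as stated: if the perturbed word satisfies every local cluster of clauses, it satisfies the formula, and it is not far from $U_n$. What actually resolves this in~\cite{BHR05} is that the underlying constraint system is a $3$-\textsc{Lin} system (encoded as $3$CNF), so $U_n$ is an affine code $C$, and $\cD_{\mathrm{no}}$ is the uniform distribution over a shifted coset $C+v$ for a carefully chosen $v$. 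The indistinguishability question then becomes purely linear-algebraic: for a query set $S$, the yes- and no-marginals on $S$ are literally \emph{identical} (not merely $o(1/n)$-close per step) unless some dual codeword of $C$ supported inside $S$ has odd inner product with $v$. The expansion hypothesis is used precisely to show that every dual codeword supported on $\le q$ coordinates is a sum of $O(q)$ parity checks, and then $v$ can be chosen to be far from $C$ while avoiding all such combinations. Your proposal never commits to linearity and therefore has no mechanism to carry out the per-step-discrepancy bound it appeals to; for a generic non-linear $3$CNF constraint system the claimed $o(1/n)$ per-step discrepancy is not something expansion alone gives you. To close the argument you would need to (a) fix the $3$-\textsc{Lin}/linear-code setting explicitly, (b) exhibit the vector $v$ and the dual-code structure, and (c) carry out the expansion-based weight bound on low-support dual codewords --- none of which appears in the sketch.
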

Now we describe the property $R$ that is hard to test when there are adversarial erasures in the input. Let $p(n)$ be a polynomial bound on the
size of a circuit computing $U$. Let $t = \lfloor \log \log \log p(n)\rfloor$. Consider a bit string of length $m = n\cdot (p(n))^2$.
Label the first $(n - t) (p(n))^2$ bits by $v_{i,j}$ where $i \in [(n - t)(p(n))^2/n]$ and $j \in [n]$. Label the remaining bits by $w_{i,j}$ where $i \in [(p(n))^2]$ and $j \in [t]$.
The string is said to have the property $R$ if all of the
following conditions hold:
\begin{itemize}
\item For each $1 < i \le (n - t)(p(n))^2/n$ and $1 \le j \le n$, we have $v_{1,j} = v_{i,j}$.
\item $v_{1,1},\ldots,v_{1,n}$ satisfy the property $U$.
\item
For every $j \in [t]$, the sequence $w_{1,j},\ldots,w_{(p(n))^2,j}$ is an assignment satisfying the PCP
of proximity for the string $v_{1,1},\ldots,v_{1,n}$ for property $U$ with distance parameter $1/j$.
\end{itemize}
\begin{theorem}[\cite{FF06}]
Property $R$ can be $\eps$-tested in the standard property testing model using $O(1/\eps)$ queries.
\end{theorem}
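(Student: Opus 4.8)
The plan is to reproduce the argument of Fischer and Fortnow~\cite{FF06}. The point is that $R$ can be verified by two purely local checks: (i) the many copies of the base string agree with one another, and (ii) a single PCP of proximity, at distance level $\Theta(\eps)$, is satisfied. The reason this suffices is that the base string $v_{1,1},\dots,v_{1,n}$ occupies only an $O(1/p(n)^2)$ fraction of the length-$m$ input, and the PCP witness blocks occupy only an $O(t/n)$ fraction, so correcting \emph{all} of them is cheap; consequently it is enough for $v_{1,\cdot}$ to be merely \emph{close} to $U$ (not to lie in $U$), and that is exactly what a PCP of proximity lets one rule out. Completeness is then immediate and the work is entirely in the soundness analysis.

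Concretely, I would use the following tester $\mathcal{T}$, given $\eps$ and oracle access to a length-$m$ string with the labelling of~\Thm{separation}. \textbf{Consistency check:} repeat $O(1/\eps)$ times, sampling a uniformly random copy index $i\neq 1$ and coordinate $j\in[n]$, querying $v_{i,j}$ and $v_{1,j}$, and rejecting if they differ. \textbf{PCP check:} set $j^\star=\min\{t,\lceil 10/\eps\rceil\}$, sample $O(1)$ functions $f_k$ from the PCP of proximity for ``$v_{1,\cdot}$ satisfies $U$'' with distance parameter $1/j^\star$, query the $O(j^\star)=O(1/\eps)$ input bits of $v_{1,\cdot}$ and $w_{\cdot,j^\star}$ that each $f_k$ depends on, and reject if some $f_k$ is unsatisfied. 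Accept if nothing rejected. The total query complexity is $O(1/\eps)$; the required PCP of proximity exists by the lemma of~\cite{BGHSV06} quoted above, since $j^\star\le t=\lfloor\log\log\log p(n)\rfloor$ lies below $\log\log p(n)/\log\log\log p(n)$ for large $n$. Completeness is clear: if the input is in $R$ then all copies equal $v_{1,\cdot}$, so the consistency check never rejects, and $v_{1,\cdot}$ satisfies $U$ with $w_{\cdot,j^\star}$ a valid PCP-of-proximity witness, so every sampled $f_k$ is satisfied; hence $\mathcal{T}$ accepts with probability $1$, so it can be taken one-sided.

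For soundness, suppose the input $x$ is $\eps$-far from $R$. Let $\delta_1$ be the fraction of bits one must flip to make every copy equal $v_{1,\cdot}$. If $\delta_1\ge\eps/3$, then each trial of the consistency check rejects with probability $\ge\delta_1$ (up to the harmless normalization factor $n/(n-t)$), so $\mathcal{T}$ rejects with probability $\ge 2/3$. Otherwise, flipping those fewer than $\eps m/3$ bits produces a string $x'$ in which every copy equals $b:=v_{1,\cdot}$, and $x'$ is still more than $2\eps/3$-far from $R$. If $b$ were within Hamming distance $\eps n/10$ of some $b^\star\in U$, one could bring $x'$ into $R$ by replacing $b$ by $b^\star$ in all copies (an $\le\eps/10$ fraction of bits) and replacing each of the $t$ witness blocks by a valid PCP-of-proximity witness for $b^\star$ (an $\le t/n=o(1)$ fraction), for a total of less than $2\eps/3$ — a contradiction. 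Hence $b$ is more than $1/j^\star$-far from $U$, i.e.\ a \textsc{No}-instance of the $(1/j^\star)$-testing promise problem for $U$; by soundness of the PCP of proximity, for \emph{every} assignment to $w_{\cdot,j^\star}$ at least half of the level-$j^\star$ functions $f_k$ are violated, in particular for the actual $w_{\cdot,j^\star}$ and $b=v_{1,\cdot}$ — and these bits are identical in $x$ and $x'$, so $\mathcal{T}$'s PCP check reads them from $x$ unchanged. Thus each PCP-check trial rejects $x$ with probability $\ge 1/2$, and $\mathcal{T}$ rejects with probability $\ge 2/3$.

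I expect the main obstacle to be the soundness argument, specifically the claim that an $\eps$-far input must fail one of only two localized tests. This rests on a quantitative balancing act: the base string is a vanishing fraction of the input (so ``$v_{1,\cdot}$ close to $U$'' already makes $x$ close to $R$), the PCP of proximity has \emph{constant} robust soundness (a $\ge 1/2$ fraction of clauses rejects, independent of $n$), and the tester must pick the proximity level $1/j^\star=\Theta(\eps)$ — which is precisely why $R$ bundles the levels $1/1,\dots,1/t$ with $t\to\infty$. A secondary technicality is that the analysis needs $n$ large relative to $1/\eps$ (so that $j^\star\le t$ and the $o(1)$ correction terms fall below $\eps/10$), which is harmless since the separation theorem is already asymptotic in $n$. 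A final point of care is to compare each copy against copy $1$ in the consistency check — the copy whose bits the PCP functions read — so that the PCP-check analysis on $x'$ faithfully describes $\mathcal{T}$'s behaviour on $x$.
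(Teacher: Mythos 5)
The paper does not actually prove this statement --- it only cites it directly from Fischer and Fortnow~\cite{FF06} (``The first part of the theorem is already proved in their paper''), so there is no ``paper's own proof'' to compare against. That said, your reconstruction of the intended argument is essentially correct: the $O(1/\eps)$-query consistency check against copy~$1$ handles the repetition structure, and sampling $O(1)$ clauses of the PCP of proximity at level $j^\star=\min\{t,\lceil 10/\eps\rceil\}$ (each depending on $O(j^\star)=O(1/\eps)$ bits) handles the case where the base string is far from $U$, with the contradiction correctly exploiting that the base string and the $w$-blocks together occupy a vanishing fraction of the $m$-bit input. The one point worth flagging explicitly --- which you do note at the end --- is that the soundness case analysis requires $t\ge\lceil 10/\eps\rceil$, i.e.\ $n$ sufficiently large relative to $1/\eps$; this is the standard asymptotic reading of the claim and is the regime in which the separation is applied, so it is not a gap.
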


We can prove the following theorem.
\begin{theorem}
There exists some $c > 0$
such that for all $\alpha= \Omega(\frac{\log \log \log n}{n})$ and $\eps \in [0,1/4]$, every $\alpha$-erasure-resilient $\eps$-tester for $R$ makes
at least $m^c$ queries, where $m = n(p(n))^2$ is the size of the input.
\end{theorem}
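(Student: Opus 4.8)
The plan is to reduce from the standard-model hardness of testing the property $U$ given by \Lem{cnfhard}. The adversary's strategy is simply to erase all the $w$-variables of $R$. There are $t\cdot(p(n))^2$ such variables out of a total of $m=n\cdot(p(n))^2$ bits, so the erased fraction is $t/n$; since $p$ is a fixed polynomial, $t=\lfloor\log\log\log p(n)\rfloor=\Theta(\log\log\log n)$, and hence the erased fraction is $\Theta(\log\log\log n/n)$, which is at most $\alpha$ whenever $\alpha=\Omega(\log\log\log n/n)$ with a large enough implied constant and $n$ is large. So the input family constructed below consists of legitimate $\alpha$-erased functions.

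Given an instance $x\in\{0,1\}^n$ of the $\tfrac13$-testing problem for $U$, I would build the $\alpha$-erased string $f_x$ that sets every $v$-row equal to $x$ (that is, $v_{i,j}:=x_j$ for all $i,j$) and erases every $w$-variable; then the nonerased set $\cN$ is exactly the $v$-part, and $|\cN|=(n-t)(p(n))^2$. First, if $x\in U$, completeness of the PCP of proximity for $U$ (for each distance parameter $1/j$, $j\in[t]$) yields an assignment to the $w$-variables satisfying all PCP constraints, so $f_x$ has a restoration lying in $R$ and the tester must accept with probability at least $2/3$. Second, if $x$ is $\tfrac13$-far from $U$, then for any restoration $f'$ of $f_x$ and any $g\in R$, the $v$-rows of $g$ are all equal to some common $u\in U$ while those of $f'$ are all equal to $x$, and since $\|x-u\|_0\ge n/3$ for every $u\in U$, we get $\|f'-g\|_0\ge\frac{n}{3}\cdot\frac{(n-t)(p(n))^2}{n}=\frac13|\cN|$; hence every restoration of $f_x$ is at least $\tfrac13$-far from $R$ on $\cN$, so (since $\eps\le 1/4<1/3$) the tester must reject with probability at least $2/3$.

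With this in hand, simulating the $\alpha$-erasure-resilient $\eps$-tester on $f_x$ --- answering every query to a $w$-variable by $\perp$ and every query to a $v$-variable $v_{i,j}$ by $x_j$ --- gives a standard $\tfrac13$-tester for $U$ that makes at most $q$ queries to $x$, where $q$ is the erasure-resilient tester's query complexity; by \Lem{cnfhard}, $q=\Omega(n)$. Since $m=n(p(n))^2$ is a fixed polynomial in $n$, choosing a constant $c>0$ small enough (depending only on the degree of $p$) makes $m^c=o(n)$, hence $q=\Omega(n)\ge m^c$ for all sufficiently large $n$, which is the claimed bound. I expect the one genuinely delicate point to be the distance estimate in the ``far'' case: one must check that, because distance is measured only on the $v$-part, the PCP-witness portion of $R$, the choice of $u\in U$ inside the target $R$-string, and the restoration chosen for the erased $w$-variables are all irrelevant, and that replicating the core $x$ across $\Theta((p(n))^2)$ rows forces the number of changes needed to reach $R$ to scale with $|\cN|$ rather than with $n$ --- this scaling is precisely what allows the adversary to win with an erasure budget as small as $\Theta(\log\log\log n/n)$.
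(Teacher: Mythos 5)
Your proposal is correct and follows essentially the same reduction as the paper: erase the $w$-block, replicate the instance $x$ across all $v$-rows, and simulate the erasure-resilient tester to solve the $\tfrac13$-testing problem for $U$. The one cosmetic difference is that you compute the far-case distance directly relative to $\cN$ (getting $\tfrac13$), whereas the paper states it relative to $\cD$ (getting $(1-\tfrac{t}{n})\cdot\tfrac13\ge\tfrac14$); both are consistent with the definition in the model and yield the same conclusion for $\eps\le 1/4$.
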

\begin{proof}
Assume for the sake of contradiction that there exists an $\alpha$-erasure-resilient $\eps$-tester for $R$ that makes fewer than $m^c$ queries for all constants $c > 0$. Let $n$ be such that $m = n(p(n))^2$ and let $c'$ be such that $m^{c'} = o(n)$.


Given an instance $v_1,v_2,\ldots v_n$ for which we need to test $U$, we can construct a partially erased string $I$ of length $m = n(p(n))^2$ as follows. Let $t = \lfloor \log \log \log p(n)\rfloor$.
Let $v_{i,1},v_{i,2},\ldots ,v_{i,n}$ be set to the string $v_1,v_2,\ldots ,v_n$ for all $ i \in [(n - t)(p(n))^2/n]$, where $v_{i,1},v_{i,2},\ldots ,v_{i,n}$ denote the $i$-th block of $n$ bits in $I$ from left. Let the remaining bits of $I$ be set to
the erased symbol $\perp$. A query to this new string can be simulated by at most one query to the string $v_1,v_2,\ldots ,v_n$.

If $v_1,v_2,\ldots,v_n$ satisfies $U$, then the new string is a \textsc{Yes} instance of $R$
for erasure-resilient testing problem by the definition of erasure-resilient property testing model.
If $v_1,v_2,\ldots,v_n$ is $\frac{1}{3}$-far from satisfying $U$, then the new string is $(1 - \frac{t}{n})\cdot\frac{1}{3}$-far
from $R$, which is at least $\frac{1}{4}$-far for large enough $n$. The fraction of erasures in the new string is $t/n$, which
is $O(\frac{\log \log \log n}{n})$. Therefore, an $\alpha$-erasure resilient $\frac{1}{4}$-tester for $R$ making $m^{c'}$ queries
for $\alpha = \Omega(\frac{\log \log \log n}{n})$
will yield a $\frac{1}{3}$-tester for $U$ that makes $o(n)$ queries. This is
a contradiction.
\end{proof}
\subsection{Connections to Distance Approximation Algorithms}\label{sec:dist-appr}
Here we discuss the relationship between tolerant testing, defined by Parnas et al.~\cite{PRR06}, and erasure-resilient testing.
We define the tolerant property testing model formally in the following.
\begin{definition}[\cite{PRR06}]\label{def:tolerant}
An algorithm is said to be an $(\eps_1,\eps_2)$-tolerant tester for a property $\cP$ if, when given oracle access to a function $f$, the algorithm (i) accepts with probability at least $2/3$ if $f$ is $\eps_1$-close to $\cP$ and (ii) rejects with probability at least $2/3$ if $f$ is $\eps_2$-far from $\cP$, where $0\le\eps_1<\eps_2\le1$. The algorithm is said to be fully tolerant if it works as above for all $\eps_1 < \eps_2$, which are given as the inputs.
\end{definition}
Tolerant testers are intimately connected to algorithms that approximate the distance of functions to properties, when given oracle access to the functions. For a property $\cP$ and a function $f$, we denote by $\eps_\cP(f)$ the relative Hamming distance of $f$ to $\cP$.
\begin{definition}[\cite{PRR06,FattalR10}]
Let $\cP$ be a property of functions over $\cD$. Let $\eta \ge 1$ and $\delta\in [0,1)$. An algorithm $A$ is said to be an $\eta$-distance approximation algorithm with additive error $\delta$ for $\cP$, if, given oracle access to a function $f$, the algorithm outputs, with probability at least $2/3$, a value $\hat{\eps}$ such that $\frac{1}{\eta}\cdot\eps_\cP(f) - \delta \le \hat{\eps} \le \eps_\cP(f)$. If $A$ works for all $\delta \in [0,1)$, we call it an $\eta$-distance approximation algorithm.
\end{definition}

The authors in~\cite{PRR06} prove that distance approximation algorithms for a property imply tolerant testers for the same property. They also show that the existence of fully tolerant testers for a property implies the existence of distance approximation algorithms for the same property. We will now prove that the existence of distance approximation algorithms for a property implies the existence of (weak) erasure-resilient testers for the same property. \begin{theorem}\label{thm:conv-dist-approx}
Let $A$ be an $\eta$-distance approximation algorithm with additive error $\delta$ for a property $\cP$ of functions of the form $f:\cD \mapsto \cR$. Then there exists an $\alpha$-erasure-resilient $\eps$-tester $A'$ that makes the same number of queries as $A$ and works for all $\eps,\alpha \in (0,1)$ satisfying $\alpha < \frac{\eps - \delta\cdot\eta}{\eps+\eta}$.
\end{theorem}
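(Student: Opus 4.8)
The plan is to simulate the distance approximation algorithm $A$ on the erased input, answering queries on erased points by a fixed dummy value, and then to threshold the estimate it returns. Precisely, $A'$ would run $A$ and, whenever $A$ queries a point $x\in\cD$, return $f(x)$ if $x\in\cN$ and a fixed element $r_0\in\cR$ otherwise. This is the same as running $A$ on the particular restoration $\hat f:\cD\to\cR$ of $f$ obtained by filling every erased point with $r_0$, so with probability at least $2/3$ the output $\hat\eps$ satisfies $\frac1\eta\,\eps_\cP(\hat f)-\delta\le\hat\eps\le\eps_\cP(\hat f)$. The tester $A'$ accepts iff $\hat\eps\le\tau$ for the threshold $\tau:=\frac12\bigl(\alpha+\frac{\eps(1-\alpha)}{\eta}-\delta\bigr)$, and it clearly makes the same number of queries as $A$.

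The first step is the simple observation that any two restorations of $f$ agree outside the (at most $\alpha|\cD|$) erased points, so their relative Hamming distances to $\cP$ differ by at most $\alpha$. Using this, in the accepting case some restoration $f'$ lies in $\cP$, hence $\eps_\cP(\hat f)\le\alpha$, so $\hat\eps\le\eps_\cP(\hat f)\le\alpha<\tau$ and $A'$ accepts with probability at least $2/3$. In the rejecting case $\hat f$ is itself a restoration of $f$, hence by hypothesis it is $\eps\cdot\frac{|\cN|}{|\cD|}$-far from $\cP$; since $|\cN|\ge(1-\alpha)|\cD|$ this gives $\eps_\cP(\hat f)\ge\eps(1-\alpha)$, so $\hat\eps\ge\frac1\eta\,\eps_\cP(\hat f)-\delta\ge\frac{\eps(1-\alpha)}{\eta}-\delta>\tau$ and $A'$ rejects with probability at least $2/3$.

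Finally I would check that $\tau$ is well defined, i.e.\ that $\alpha<\frac{\eps(1-\alpha)}{\eta}-\delta$; rearranging this inequality yields $\alpha(\eta+\eps)<\eps-\delta\eta$, which is exactly the hypothesis $\alpha<\frac{\eps-\delta\cdot\eta}{\eps+\eta}$. I do not expect a genuine obstacle here: the only delicate points are that replacing $f$ by an arbitrary restoration loses at most $\alpha$ in relative distance on the accepting side, and that the rejecting guarantee is phrased relative to $\cN$ rather than $\cD$; together, these two effects are precisely what pin down the stated upper bound on $\alpha$.
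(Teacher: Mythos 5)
Your proposal is correct and takes essentially the same approach as the paper: simulate $A$ on the restoration that fills erased points with a fixed range element, then threshold the returned estimate. The only cosmetic difference is that the paper thresholds at $\alpha$ itself rather than at the midpoint $\tau$, but the underlying inequalities and the constraint on $\alpha$ are identical.
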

\begin{proof}
Fix an element $e \in \cR$. Consider the following algorithm $A'$. The algorithm $A'$, when given oracle access to an $\alpha$-erased function $g:\cD \mapsto \cR \cup \{\perp\}$, queries points from $g$ in the same way as $A$. Whenever it queries an erased point, it assumes that the value at that point is $e$. This way it computes the distance estimate $\hat{\eps}$ that $A$ would compute if all the erased points had values equal to $e$. If $\hat{\eps} \le \alpha$, the algorithm accepts. Otherwise, it rejects.

Let $g:\cD \mapsto \cR \cup \{\perp\}$ be an $\alpha$-erased function. Let $g^r : \cD \mapsto \cR$ be the restoration of $g$ in which all erased points are assigned the value $e$. We can think of $A'$ as outputting an approximation to $\eps_\cP(g^r)$. If $g$ satisfies $\cP$, then $\eps_\cP(g^r) \le \alpha$. Since $\hat{\eps} \le \eps_\cP(g^r)$ with probability at least $2/3$, the algorithm will accept with high probability. If $g$ is $\eps$-far from $\cP$, then every restoration of $g$ is $\eps(1-\alpha)$-far from $\cP$, and hence $\eps_\cP(g^r) \ge \eps(1-\alpha)$. Since $\hat{\eps} \ge \frac{\eps_\cP(g^r)}{\eta} - \delta \ge \frac{ \eps(1-\alpha)}{\eta} - \delta > \alpha$ with probability at least $2/3$, the algorithm will reject with high probability. Note that the last inequality in the above expression follows from the restriction on $\alpha$. The theorem follows.
\end{proof}
We now revisit the properties discussed in Section~\ref{sec:prop-studied} for which tolerant testers are known and apply~\Thm{conv-dist-approx} to those testers to get erasure-resilient testers. The parameters of these testers are much worse than what we obtained in previous sections, especially in terms of the restrictions on $\alpha$.
\begin{corollary}
Let $1 < \eta < 2$. There exists an $\alpha$-erasure-resilient $\eps$-tester for monotonicity of real-valued functions over $[n]$ with query complexity $O((\frac{1}{\eps(\eta -1)})^{O(\frac 1 {\eta-1})}\cdot \log^c n)$ (where $c$ is a large absolute constant) that works for all $\alpha,\eps \in (0,1)$ such that $\alpha < \frac{\eps}{\eps + \eta}$.
\end{corollary}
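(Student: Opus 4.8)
\emph{Proof proposal.} The plan is to apply \Thm{conv-dist-approx} to a known distance approximation algorithm for monotonicity on the line. Concretely, I would start from the tolerant monotonicity tester of Fattal and Raskhodnikova~\cite{FattalR10}, which (directly, or via the equivalence between fully tolerant testers and distance approximation from~\cite{PRR06}) yields, for every $\eta\in(1,2)$ and every additive-error parameter $\delta\in(0,1)$, an $\eta$-distance approximation algorithm with additive error $\delta$ for monotonicity of real-valued functions on $[n]$ whose query complexity has the form stated in the corollary with $\eps$ replaced by $\delta$, namely $O\bigl((\tfrac{1}{\delta(\eta-1)})^{O(1/(\eta-1))}\cdot\log^{c}n\bigr)$ for an absolute constant $c$. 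Calling this algorithm $A$, I would then feed $A$ into \Thm{conv-dist-approx} to obtain an $\alpha$-erasure-resilient $\eps$-tester $A'$ making the same number of queries as $A$.

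The one thing to arrange is the choice of $\delta$. \Thm{conv-dist-approx} requires $\alpha<\tfrac{\eps-\delta\eta}{\eps+\eta}$, while the corollary only assumes $\alpha<\tfrac{\eps}{\eps+\eta}$. The latter is exactly the statement that $\eps-\alpha(\eps+\eta)>0$, so there is genuine slack, and taking any $\delta<\tfrac{\eps-\alpha(\eps+\eta)}{\eta}$ (for concreteness $\delta=\tfrac{\eps-\alpha(\eps+\eta)}{2\eta}$) makes the hypothesis of \Thm{conv-dist-approx} hold. With this $\delta$ fixed, the conclusion of \Thm{conv-dist-approx} is immediate: $A'$ accepts every $\alpha$-erased function admitting a monotone restoration and rejects every one whose restorations are all $\eps$-far from monotone, using precisely the queries of $A$. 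Substituting the chosen $\delta$ into the query bound for $A$ and absorbing lower-order factors then gives the claimed $O\bigl((\tfrac{1}{\eps(\eta-1)})^{O(1/(\eta-1))}\cdot\log^{c}n\bigr)$.

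I expect the only real work to be bookkeeping: (i) checking that the tester of~\cite{FattalR10} does yield a distance approximation algorithm of exactly the claimed shape — polylogarithmic in $n$, with the exponential dependence carried solely by $1/(\eta-1)$ and no hidden $n$-dependence in the exponent; and (ii) reconciling the clean $\eps$-only query complexity in the statement with the fact that, strictly, $\delta$ (and hence the query count) degrades as $\alpha\to\tfrac{\eps}{\eps+\eta}$, so that to get $\delta=\Theta(\eps)$ one implicitly keeps $\alpha$ a fixed multiplicative factor below the threshold. This second point is the only place a careful (but routine) argument is needed; everything else is a direct substitution into \Thm{conv-dist-approx}. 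Finally I would note that, unlike the testers of the earlier sections, this one is inherently two-sided, since it is built on a distance approximation subroutine.
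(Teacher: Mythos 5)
Your proof strategy is exactly the one the paper intends: the paper states these corollaries without separate proofs, merely remarking that one should ``apply \Thm{conv-dist-approx} to those testers,'' and your substitution of an off-the-shelf distance-approximation algorithm into \Thm{conv-dist-approx}, together with the choice $\delta<\tfrac{\eps-\alpha(\eps+\eta)}{\eta}$ to satisfy the hypothesis $\alpha<\tfrac{\eps-\delta\eta}{\eps+\eta}$, is precisely that argument. Your remark that the tester is inherently two-sided is also correct.

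The one concrete error is the citation. For monotonicity on the line $[n]$ the paper relies on the Saks--Seshadhri sortedness (LIS) distance estimator~\cite{SaksS10}, not on Fattal--Raskhodnikova~\cite{FattalR10}; the latter is the hypergrid-domain distance approximation used in the \emph{next} corollary (for $[n]^d$), whose query complexity has the very different polynomial form $\tilde O(\log n/\delta^4)$. The exponential-in-$1/(\eta-1)$ shape $O\bigl((\tfrac{1}{\eps(\eta-1)})^{O(1/(\eta-1))}\log^{c}n\bigr)$ that you correctly copy from the statement is the Saks--Seshadhri bound, so the query-complexity arithmetic you sketch only works if you instantiate $A$ with that algorithm. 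Once the reference is swapped the rest of your bookkeeping goes through. Your observation in point (ii) — that the clean $\eps$-only query bound implicitly assumes $\alpha$ is bounded a constant factor below the threshold $\tfrac{\eps}{\eps+\eta}$, since otherwise the required $\delta$ and hence the query count degrade — is a fair and careful reading of a genuine imprecision in how the corollary is stated, and matches what one has to do to make the substitution yield the advertised bound.
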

\begin{corollary}
Let $ \delta \in [0,1]$. There exists an  $\alpha$-erasure-resilient $\eps$-tester for monotonicity of real-valued functions over $[n]^d$ with query complexity $\tilde{O}\left(\frac{\log n}{\delta^4}\right)$ that works for all $\alpha,\eps \in (0,1)$ such that $\alpha < (\eps- 5\delta\cdot d^2\log n)/(\eps + 5d^2\log n)$.
\end{corollary}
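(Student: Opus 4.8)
The plan is to obtain this corollary as a direct instantiation of \Thm{conv-dist-approx}, applied to a known distance approximation algorithm for monotonicity over the hypergrid. The only ingredient needed is an algorithm that, given oracle access to a function $f:[n]^d\mapsto\R$, estimates its relative Hamming distance to monotonicity, and Fattal and Ron~\cite{FattalR10} supply exactly such an algorithm.

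First I would recall the guarantee of the Fattal--Ron estimator and cast it in the language of the notion of $\eta$-distance approximation algorithm with additive error $\delta$ introduced above: for every additive error parameter $\delta\in(0,1)$, it makes $\tilde O(\log n/\delta^4)$ queries and, with probability at least $2/3$, outputs a value $\hat\eps$ satisfying $\frac{1}{\eta}\cdot\eps_\cP(f)-\delta\le\hat\eps\le\eps_\cP(f)$, where $\cP$ denotes monotonicity and $\eta=O(d^2\log n)$. A little care is needed at this step, since the algorithm in~\cite{FattalR10} is phrased as a multiplicative-plus-additive distance estimator, and one should check that both sides of the inequality in the definition hold, rescaling the estimate or absorbing a constant factor into $\delta$ and into the constant in front of $\eta$ if necessary, so that the multiplicative factor is at most $5d^2\log n$.

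Second, I would simply invoke \Thm{conv-dist-approx} with this algorithm, taking $\eta=5d^2\log n$ and the additive error equal to the given $\delta$. The theorem immediately yields an $\alpha$-erasure-resilient $\eps$-tester making the same number of queries, namely $\tilde O(\log n/\delta^4)$, valid whenever $\alpha<\frac{\eps-\delta\eta}{\eps+\eta}=\frac{\eps-5\delta d^2\log n}{\eps+5d^2\log n}$, which is precisely the claimed range. (As the excerpt already notes, the resulting restriction on $\alpha$ is far weaker than the one achieved by the dedicated tester of \Thm{mon-hyp}; this corollary is of interest only because it follows for free.)

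The main obstacle is entirely bookkeeping: pinning down the precise multiplicative factor and query exponent of the estimator in~\cite{FattalR10} --- in particular that the dependence on $1/\delta$ is $\delta^{-4}$ up to polylogarithmic factors, and that the multiplicative factor is $O(d^2\log n)$ with a constant at most $5$ --- and matching its accuracy guarantee to the two-sided form required by the distance approximation definition. Once those parameters are in place, the corollary follows from \Thm{conv-dist-approx} with no further work.
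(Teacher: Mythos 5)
Your proposal matches the paper's argument exactly: the corollary is obtained by plugging the Fattal--Ron distance approximation algorithm for monotonicity over $[n]^d$ (with multiplicative factor $\eta = 5d^2\log n$, additive error $\delta$, and query complexity $\tilde O(\log n/\delta^4)$) into \Thm{conv-dist-approx}. The paper treats this as an immediate application with no further argument, just as you describe.
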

\begin{corollary}
There exists an  $\alpha$-erasure-resilient $\eps$-tester for convexity of real-valued functions over $[n]$ with query complexity $\tilde{O}\left(\frac{\log n}{\eps}\right)$ that works for all $\alpha,\eps \in (0,1)$ such that $\alpha < \frac{\eps}{\eps + 25}$.
\end{corollary}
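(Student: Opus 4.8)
The plan is to obtain this corollary by composing two black boxes: \Thm{conv-dist-approx}, which converts any distance approximation algorithm for a property into an erasure-resilient tester for it, and the distance approximation algorithm for convexity of real-valued functions on the line from~\cite{FatR}. That algorithm is, for the property $\cP$ of convex functions $f:[n]\mapsto\R$, an $\eta$-distance approximation algorithm with additive error $\delta$, where $\eta$ is an absolute constant and $\delta\in(0,1)$ is a tunable parameter, and its query complexity is $\tilde{O}(\log n/\delta)$ up to polylogarithmic factors --- matching, up to such factors, the $O(\log n/\eps)$ complexity of the standard convexity tester of Parnas et al.~\cite{PRR03}. Feeding this algorithm into \Thm{conv-dist-approx} (taking the distinguished range element $e\in\R$ to be $0$) immediately yields an $\alpha$-erasure-resilient $\eps$-tester for convexity of functions $f:[n]\mapsto\R$ that makes $\tilde{O}(\log n/\delta)$ queries and is correct for every $\alpha$ with $\alpha<(\eps-\delta\eta)/(\eps+\eta)$.

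It then remains only to choose the free parameter $\delta$. Given $\alpha,\eps\in(0,1)$ with $\alpha<\eps/(\eps+25)$, I would identify the constant $\eta$ coming from~\cite{FatR} with the value $25$ for which the limiting constraint $\alpha<\eps/(\eps+\eta)$ (obtained as $\delta\to 0$) is exactly the bound stated in the corollary, and then fix $\delta>0$ small enough that $\alpha<(\eps-\delta\eta)/(\eps+\eta)$ still holds; such a $\delta$ exists precisely because $\alpha$ lies strictly below that threshold. Since this $\delta$ can be taken to be $\Omega(\eps)$ when $\alpha$ is a constant factor away from the threshold, the resulting query complexity $\tilde{O}(\log n/\delta)$ is $\tilde{O}(\log n/\eps)$, as claimed.

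I expect the only real work to be the parameter bookkeeping: one must extract from~\cite{FatR} the exact multiplicative approximation ratio and the precise dependence of the query complexity on the additive-error parameter --- since that result may be phrased as a tolerant tester or under a different normalization --- and then substitute it into \Thm{conv-dist-approx} in the form its hypothesis demands, checking that the constant $25$ and the $\tilde{O}(\log n/\eps)$ bound indeed fall out. There is no conceptual obstacle beyond this: the corollary uses no new structure of convexity and merely composes known results, which is also why, as the surrounding text observes, the restriction on $\alpha$ it produces is far weaker than the one in~\Thm{conv-tester}.
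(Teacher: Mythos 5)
Your proposal follows exactly the route the paper takes: the corollary is stated as an immediate application of \Thm{conv-dist-approx} to the distance-approximation (tolerant-testing) algorithm for convexity from~\cite{FatR}, with the multiplicative factor $\eta$ instantiated so that the threshold $\eps/(\eps+25)$ emerges, and the paper provides no further proof. Your remark that the stated bound $\alpha<\eps/(\eps+25)$ is the $\delta\to 0$ limit of $\alpha<(\eps-\delta\eta)/(\eps+\eta)$, and that keeping the query complexity at $\tilde{O}(\log n/\eps)$ requires $\delta=\Omega(\eps)$ and hence $\alpha$ bounded a constant factor below the threshold, is a correct and honest observation about a parameter trade-off the paper elides rather than a gap in your argument.
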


\section{Conclusions and Open Problems}\label{sec:conclusion}

In this paper, we initiate a study of property testing in the presence of adversarial erasures. We design efficient erasure-resilient testers for several important properties such as monotonicity, the Lipschitz properties and convexity over different domains. All our testers for properties of functions on the line domain work for an arbitrary fraction of erasures. All our testers have only a small additional overhead of $O(1/(1-\alpha))$ in their query complexity in comparison to the query complexity of the currently best, and, in some cases, optimal, standard testers for the same properties. We also show that not all properties are easy to test in the erasure-resilient testing model by proving the existence of a property that is easy to test in the standard model but hard to test in the erasure-resilient model even for a small fraction of erasures. We now list some open problems.
\begin{itemize}
\item We show that tolerant testing is at least as hard as erasure-resilient testing. Determining if tolerant testing is strictly harder than erasure-resilient testing is an interesting direction.
\item The fraction of erasures that our monotonicity tester for hypergrid domains ($[n]^d$) can tolerate decreases inversely with $d$. We also show that an inverse dependence on $\sqrt{d}$ is necessary for testers that work by sampling axis-parallel lines uniformly at random and then test for the property on them. It is an interesting combinatorial question to determine the exact tradeoff between the fraction of erasures and the fraction of axis parallel lines that are far from monotone.
\end{itemize}
 \section*{Acknowledgments}
We thank Jalaj Upadhyay for comments on a draft of this article.

\bibliographystyle{alpha}
\bibliography{references-for-erasure-resilient-testing}
\end{document}